\def\draftdate{\today}
\def\headstrut{\vrule height2.75ex width0ex depth1.5ex}
\def\entrystrut{\vrule height2.25ex width0ex depth0ex}
\def\headlower#1{\lower.25ex\hbox{#1}}
\let\iso\cong
\let\overto\xrightarrow
\def\quickop#1{\expandafter\DeclareMathOperator\csname
#1\endcsname{#1}}
\newcommand{\Cech}{\v Cech}
\newcommand{\CMap}{\Map_{\mathrm{SC}}}
\newcommand{\CMapb}{\Map_{\mathrm{SC}*}}
\newcommand{\TMap}{\Map_{\mathrm{Top}}}
\newcommand{\Op}[1]{\mathop{\mathrm{Op}}(#1)}
\newcommand{\Sta}[1]{\mathop{\mathrm{St}}(#1)}
\let\Stv=\Sta
\DeclareMathOperator{\mesh}{Mesh}
\newcommand{\bR}{\mathbb{R}}
\newcommand{\aU}{\mathcal{U}}
\newcommand{\aV}{\mathcal{V}}
\DeclareSymbolFont{emcmbx}{OT1}{cmr}{bx}{n}   
\DeclareMathSymbol{\DDelta}{\mathalpha}{emcmbx}{"01}
\newtheorem{thm}{Theorem}[section]
\newtheorem{lem}[thm]{Lemma}
\newtheorem{prop}[thm]{Proposition}
\theoremstyle{definition}
\newtheorem{defn}[thm]{Definition}
\newtheorem{cons}[thm]{Construction}
\newtheorem{notn}[thm]{Notation}
\newtheorem{alg}[thm]{Algorithm}
\newtheorem{example}[thm]{Example}
\theoremstyle{remark}
\renewcommand{\theenumi}{\roman{enumi}}\renewcommand{\labelenumi}{(\theenumi)}
\renewcommand{\theenumii}{\alph{enumii}}
\newcommand\pgfmathsinandcos[3]{%
  \pgfmathsetmacro#1{sin(#3)}%
  \pgfmathsetmacro#2{cos(#3)}%
}
\newcommand\LatitudePlane[3][current plane]{%
  \pgfmathsinandcos\sinEl\cosEl{#2} 
  \pgfmathsinandcos\sint\cost{#3} 
  \pgfmathsetmacro\yshift{\cosEl*\sint}
  \tikzset{#1/.estyle={cm={\cost,0,0,\cost*\sinEl,(0,\yshift)}}} %
}
\begin{document}

\title{Quantitative Homotopy Theory in Topological Data Analysis}

\author{Andrew J. Blumberg}
\address{Department of Mathematics, University of Texas at Austin,
Austin, TX \ 78712}
\email{blumberg@math.utexas.edu}

\thanks{This research supported in part by DARPA YFA award N66001-10-1-4043}

\author{Michael A. Mandell}
\address{Department of Mathematics, Indiana University,
Bloomington, IN \ 47405}
\email{mmandell@indiana.edu}

\date{\draftdate}
\subjclass[2010]{55U10,68U05}

\keywords{simplicial complex, contiguity, subdivision, mesh size,
mapping space, simplicial approximation theorem, persistent homology}

\begin{abstract}
This paper lays the foundations of an approach to applying Gromov's
ideas on quantitative topology to topological data analysis.  We
introduce the ``contiguity complex'', a simplicial complex of maps
between simplicial complexes defined in terms of the combinatorial
notion of contiguity. We generalize the Simplicial Approximation
Theorem to show that the contiguity complex approximates the homotopy
type of the mapping space as we subdivide the domain.  We describe
algorithms for approximating the rate of growth of the components of
the contiguity complex under subdivision of the domain; this procedure
allows us to computationally distinguish spaces with isomorphic
homology but different homotopy types.
\end{abstract}

\maketitle

\section{Introduction}

The topological approach to data analysis is to organize a data set
(which we think of as a finite metric space) into an appropriate
simplicial complex, either the \Cech\ complex, or more usually (for
efficiency reasons) the Rips complex.  Homotopy theoretic invariants
of the simplicial complex then give abstract information about the
data set.  Most work to date has focused on homology, which although
easy to compute is an extremely weak invariant of a complex $Y$.  A
complete set of invariants is given by looking at homotopy classes of
maps from test complexes $X$ into $Y$ (this is called the Yoneda
Lemma~\cite[\S III.2]{MacLane}).  Of course, we cannot hope to have
all this information, and computing homotopy classes of maps is an
intractable problem.  A less intractable compromise is to consider the
homology of the \emph{space} of maps from $X$ to $Y$ for suitable
specific test spaces $X$.  When $X$ is contractible, this just
recovers the homology of $Y$.  But when $X$ is topological nontrivial,
the homology of the mapping space can provide information that is
invisible to the homology of $Y$.  For instance, the homology of the
mapping space from $S^1$ to $Y$ captures information about the
fundamental group of $Y$; this invariant can detect the difference
between spaces $Y$ and $Y'$ with identical homology.  (See
Section~\ref{sec:comp} for detailed computational exploration of such
an example.)

A key benefit of homotopy invariants is their global nature.  On
the other hand, in data analysis, we often also care about feature
scale.  The notion of \emph{quantitative homotopy theory} introduced
by Gromov \cite{Gromov} suggests a way of organizing global homotopy
information of varying scales by looking at maps and homotopies that
have a bounded Lipschitz constant and considering the asymptotics as
the bound increases.  For a data set, instead of looking at Lipschitz
bounds on continuous maps into an associated simplicial complex, it
makes more sense to look at \emph{subdivisions} and simplicial maps.
For abstract simplicial complexes, a simplicial map $X\to Y$ in
particular has Lipschitz constant bounded by 1 for the standard metric (induced
from the embedding of the geometric realization in Euclidean space) on
the geometric realizations $|X|\to |Y|$; a simplicial map from a
subdivision of $X$ has a higher bound on its Lipschitz constant, which depends on
the size of the simplices of the subdivision.   When $Y$ is a \Cech\ or
Rips complex of a data set (i.e., finite metric space), analogous
remarks hold using the ``intrinsic'' metric on $|Y|$ induced from the
metric on the data set.  Considering continuous maps $|X|\to |Y|$ with
Lipschitz constant bounded above by $\lambda$ is 
analogous to considering simplicial maps to $Y$ from subdivisions of
$X$ with mesh size bounded below by $1/\lambda$.  Thus, instead of
studying the subspaces of continuous maps from $|X|$ to $|Y|$ as we
vary the Lipschitz bound, we can study the space of simplicial maps
from subdivisions $X'$ of $X$ to $Y$ as we refine the subdivision.

This paper sets the foundation for an approach to quantitative
homotopy theory in terms of simplicial maps and subdivisions.  As
such, it accomplishes two things.  Firstly, it presents a definition
of a space of simplicial maps between two simplicial complexes
(Definition~\ref{defcmap}). Secondly, it shows that as the mesh size
of the subdivision tends to zero, this space of simplicial maps
approximates the space of continuous maps between the geometric
realizations (Theorem~\ref{thmapprox}).

The first issue of defining the space of simplicial maps makes use of
the a natural notion of ``closeness'' for simplicial maps called
\emph{contiguity}. Simplicial maps $f_{0},f_{1},\dotsc ,f_{n}$ from
$X$ to $Y$ are \emph{mutually contiguous} means that for every simplex
$v_{0},\dotsc,v_{m}$ of $X$, the set $\{f_{i}(v_{j})\}$ of vertices of
$Y$ is a simplex (cf.~\cite[3.5]{Spanier}).  The notion of contiguity is
particularly well-suited to the situation when the target is the Rips
complex of a finite metric space.  In that case, contiguity is a
metric property: the maps are mutually contiguous if and only if any
two points of $\{f_{i}(v_{j})\}$ are within $\epsilon$, the scale
parameter of the Rips complex. 

\begin{defn}\label{defcmap}
For simplicial complexes $X$ and $Y$, define the
\emph{contiguity complex} $\CMap(X,Y)$ to be the simplicial
complex with vertices the simplicial maps and simplices the
collections of mutually contiguous maps.
\end{defn}

As we show in Construction~\ref{conscomp} below, the geometric
realization of $\CMap(X,Y)$ has a canonical natural map 
to the space $\TMap(|X|,|Y|)$ of continuous maps on geometric
realizations (with the compact open topology).  The second issue this
paper addresses is
how good an approximation this 
becomes as we subdivide $X$.  While $\CMap(X,Y)$ is not expected to be
a good approximation of $\TMap(|X|,|Y|)$ (just as the subspace of
Lipschitz constant $\leq 1$ maps from $X$ to $Y$ is not), for successive
refinements $X_{1}$, $X_{2}$, etc., of $X$, we claim that
$\CMap(X_{n},Y)$ approximates $\TMap(|X|,|Y|)$ up to homotopy.
Specifically, we do not expect every homotopy class of map from $|X|$ to
$|Y|$ to be represented by a simplicial map from $X$ to $Y$.  Rather,
the Simplicial Approximation Theorem (e.g., \cite[3.4.8]{Spanier} and
Theorem~\ref{thmSA} below) says that
subdivision of $X$ may be required before a given continuous map $X\to
Y$ is represented up to homotopy by a simplicial map.  We prove the
following analogue for the contiguity mapping spaces; see
Sections~\ref{secsub} and~\ref{secsa} for
details on terminology and Section~\ref{secredux} for the construction
of the map.

\begin{thm}\label{thmapprox}
Let $X$ be a finite simplicial complex and $Y$ any simplicial
complex. Let $X_{n}$ be a sequence of successive subdivisions of $X$
and choose compatible simplicial approximations $X_{n+1}\to
X_{n}$.  If the limit of the mesh size of $X_{n}$ goes to zero, then
$\bigcup \CMap(X_{n},Y)$ is homotopy equivalent to the space
$\TMap(|X|,|Y|)$ of continuous maps from $|X|$ to $|Y|$.  (Here the
union is taken over maps $\CMap(X_{n},Y) \to \CMap(X_{n+1})$ induced
from the simplicial approximations.) 
\end{thm}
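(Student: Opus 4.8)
The plan is to prove that the natural map $\Phi\colon\bigcup_{n}|\CMap(X_{n},Y)|\to\TMap(|X|,|Y|)$ constructed in Section~\ref{secredux} is a weak homotopy equivalence, and then to upgrade this to a homotopy equivalence by soft considerations. For the upgrade: each structure map $|\CMap(X_{n},Y)|\to|\CMap(X_{n+1},Y)|$ is induced by the chosen simplicial map $X_{n+1}\to X_{n}$, which, being a simplicial approximation to the identity, fixes the vertices of $X_{n}$; hence the structure map is injective on vertices and so is a subcomplex inclusion of CW complexes, and the union is again a CW complex. Since $X$ is finite, $|X|$ is a compact CW complex, so Milnor's theorem gives $\TMap(|X|,|Y|)$ the homotopy type of a CW complex, and Whitehead's theorem then promotes a weak equivalence between these spaces to a homotopy equivalence. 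It therefore suffices to show $\Phi$ is a bijection on $\pi_{0}$ and an isomorphism on $\pi_{k}$ for all $k\geq1$ and all basepoints.

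The homotopy groups are handled by a parametrized elaboration of the proof of the Simplicial Approximation Theorem (Theorem~\ref{thmSA}). A class of $\pi_{k}\TMap(|X|,|Y|)$ is represented, after fixing a finite triangulation $K$ of $S^{k}$ (and, for the relative statements below, a triangulated disk with marked boundary), by a continuous map $F\colon|K|\times|X|\to|Y|$, whose image lies in a finite subcomplex of $Y$ because the source is compact. The key point is that for $n$ large enough (so that the mesh of $X_{n}$ is small, which holds eventually by hypothesis) and for a fine enough subdivision $K'$ of $K$, one can choose, for each vertex $w$ of $K'$ and each vertex $v$ of $X_{n}$, a vertex $g_{w}(v)$ of $Y$ such that $F\bigl(\Sta{w}\times\Sta{v}\bigr)\subseteq\Sta{g_{w}(v)}$ --- here $\Sta{w}$ is the open star in $K'$, $\Sta{v}$ the open star in $X_{n}$, and $\Sta{g_{w}(v)}$ the open star in $Y$; such a choice exists by the usual Lebesgue-number argument applied to the cover of $|K'|\times|X_{n}|$ by the open sets $F^{-1}(\Sta{u})$, $u$ a vertex of $Y$. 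Since a finite set of vertices of a simplicial complex spans a simplex exactly when the corresponding open stars meet, this choice makes each $g_{w}\colon X_{n}\to Y$ a simplicial map and forces, for every simplex $\sigma$ of $K'$ and every simplex $\tau$ of $X_{n}$, the set $\{g_{w}(v):w\in\sigma,\ v\in\tau\}$ to span a simplex of $Y$: the nonempty set $F\bigl((\bigcap_{w\in\sigma}\Sta{w})\times(\bigcap_{v\in\tau}\Sta{v})\bigr)$ is contained in the open star of each of these vertices. Thus $w\mapsto g_{w}$ is a simplicial map $\hat F\colon K'\to\CMap(X_{n},Y)$. The same star estimates show that over each product cell $\sigma\times\tau$ the map $F$ and the (simplex-wise affine) map $|K'|\times|X_{n}|\to|Y|$ adjoint to $\Phi\circ|\hat F|$ take values in a common simplex of $Y$, so the straight-line homotopy between them stays in $|Y|$; hence $\Phi\circ|\hat F|$ is homotopic to the original map $|K'|\to|K|\to\TMap(|X|,|Y|)$.

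Granting this, $\Phi_{*}$ is surjective on every $\pi_{k}$, including $k=0$ (where $K$ is a point and we recover the classical statement that a continuous map is homotopic to the realization of a simplicial map out of some $X_{n}$). Injectivity comes from the relative version of the construction: a class of $\pi_{k}\bigl(\bigcup_{n}|\CMap(X_{n},Y)|\bigr)$ killed by $\Phi$ lives in a single $|\CMap(X_{n},Y)|$ --- the union being a filtered colimit along subcomplex inclusions --- and applying the parametrized approximation to a disk it bounds in $\TMap(|X|,|Y|)$, after subdividing that disk and moving up the tower $X_{n}\to X_{n+1}\to\cdots$ so that the given boundary data are carried to simplicial maps out of a finer $X_{m}$, produces a nullhomotopy inside $|\CMap(X_{m},Y)|$ (for $k=0$ this is a path through mutually contiguous maps, obtained from a homotopy $|X|\times I\to|Y|$ triangulated in the $I$-direction). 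Throughout one uses that a simplicial approximation $X_{m}\to X_{n}$ to the identity realizes to a map homotopic to the identity, so classes and homotopies living at different stages of the tower are compared coherently inside the union. The main obstacle is exactly this relative bookkeeping: one must check that the vertex assignments forced on the boundary by the given simplicial-into-$\CMap$ data still satisfy the star condition against a sufficiently fine subdivision --- which is where one needs the freedom to subdivide the sphere and pass further up the tower, and where finiteness of $X$ (hence a uniform Lebesgue number on the compact $|K|\times|X|$) enters. A smaller, separate point, handled in Section~\ref{secredux}, is that the squares comparing $\Phi$ at consecutive stages commute only up to the homotopy coming from $|X_{n+1}|\to|X_{n}|$, so that $\Phi$ on the union must be defined using a coherent system of these homotopies (or a mapping telescope); the arguments above are insensitive to that choice.
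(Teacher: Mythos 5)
Your overall strategy is the paper's: reduce to a weak equivalence by viewing the maps $\CMap(X_{n},Y)\to\CMap(X_{n+1},Y)$ as subcomplex inclusions (via a telescope or coherent homotopies, since the comparison squares only commute up to the homotopy of Proposition~\ref{propsaho}), upgrade with Milnor and Whitehead, identify $\pi_{q}$ of the union with $\colim_{n}\pi_{q}|\CMap(X_{n},Y)|$, and prove surjectivity and injectivity by a parametrized simplicial approximation with a Lebesgue-number estimate on the compact product. Your direct star-condition construction of $\hat F\colon K'\to\CMap(X_{n},Y)$ is an unpacking of what the paper does through the product complex: the condition $F(\Sta{w}\times\Sta{v})\subseteq\Sta{g_{w}(v)}$ is exactly the criterion of Theorem~\ref{thmaltdefSA} applied to the composite $|X_{n}\boxtimes Z'|\to|X|\times|Z|\to|Y|$ (with the mesh control of Theorem~\ref{thmdefzeta}), and the statement that $w\mapsto g_{w}$ is then simplicial into $\CMap(X_{n},Y)$ is the adjunction of Theorem~\ref{expsc}; so this part is sound and essentially identical in content.

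The one place your proposal is genuinely thinner than the proof is the based/relative bookkeeping that you yourself flag as the main obstacle. You propose to force the boundary and basepoint vertex assignments to be the given simplicial data and then verify the star condition; but after subdividing the sphere or disk, the new boundary vertices are not vertices of the original $Z$, so the data can only be ``forced'' after composing with a simplicial approximation $Z'\to Z$ (and with $f^{m}_{n}\colon X_{m}\to X_{n}$), and the resulting representative agrees with the given one only up to homotopy, with a possibly nonconstant track at the basepoint. The paper does not attempt a relative approximation at all: it approximates freely and then corrects, using Theorem~\ref{thmsamc} (any two simplicial approximations of the same map are mutually contiguous, hence span a simplex of $\CMap$) to change the basepoint in the surjectivity step, to identify the ends $\eta_{0},\eta_{1}$ of the approximated homotopy with $(f^{n}_{0})^{*}(p)$ and $(f^{n}_{0})^{*}(p')$ in the injectivity step, and finally to show that the loop traced out at the basepoint is nullhomotopic. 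As written, your argument delivers only a free homotopy to $\psi$ (surjectivity) and an unbased nullhomotopy (injectivity), which is not yet the based statement of Lemma~\ref{lemmain}; closing it requires either this contiguity input or a carefully executed relative approximation with a collar, and that is precisely the content your sketch defers.
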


Definition~\ref{defcmap} and Theorem~\ref{thmapprox} lay the
foundations for quantitative homotopy theory in topological data
analysis by producing a framework for approximating the homotopy type
of mapping spaces by contiguity complexes.  We discuss in
Section~\ref{sec:comp} a detailed example of the computational
application of our techniques, and in Section~\ref{sec:pers} we sketch
how to integrate this work with the approach of persistent homology.
In future work, we plan to develop the relationship to persistent
homology and homotopy theory of sample spaces.

\section{Review of Simplicial Complexes}

This section reviews the basic definitions of simplicial complex and
geometric realization.  We review the notation for geometric simplices, their
open simplices, and their star neighborhoods.  In this section and the
next two, we follow Spanier \cite{Spanier} in spirit (though not in notation
or details) and substitute references there for proofs whenever possible.

\begin{defn}\label{defSC}
A \emph{simplicial complex} consists of a set of
\emph{vertices} $V$ together with a subset
$S$ of the set of non-empty
finite subsets of $V$, satisfying the 
following properties.
\begin{enumerate}
\item The singleton subsets of $V$ are in $S$.
\item If $\sigma \in S$ and $\tau \subset \sigma$,
$\tau\neq\emptyset$, then $\tau \in S$.
\end{enumerate}
The elements of $S$ are called \emph{simplices}, and $\sigma\in S$ is
called an $n$-simplex if $\#\sigma =n+1$.  If $\sigma \in S$ and $\tau
\subsetneq \sigma$, then $\tau$ is called a \emph{face} of $\sigma$.  A
simplicial complex $X=(V,S)$ is \emph{finite} if $V$ is a finite set
or is \emph{locally finite} if each vertex is contained in at most
finitely many simplices.

A map of simplicial complexes $X=(V,S)$ to $X'=(V',S')$ is a map
$f\colon V\to V'$ such that for each $\sigma \in S$, $f(\sigma)\in S'$.

A subcomplex of $X$ is a simplicial complex $A=(U,R)$ such that
$U\subset V$ and $R\subset S$.
\end{defn}

\begin{example}\label{exdeltan}
The \emph{standard $n$-simplex} is the simplicial complex $\Delta[n]$
with vertex set $\{0,\dotsc,n\}$ and simplex set the set of all
non-empty
subsets of $\{0,\dotsc,n\}$.
\end{example}

A simplicial complex is an abstraction of a certain kind of polytope;
given a simplicial complex, the associated polytope is called the
geometric realization.

\begin{defn}\label{defGRSC}
Let $X=(V,S)$ be a simplicial complex and write $\bR^{V}$ for the real
vector space with basis the elements of $V$, which we topologize with
the standard topology if $V$ is finite, or with the union topology if
$V$ is infinite: a subset $U$ of $\bR^{V}$ is open (or closed) if
and only if its intersection with every finite dimensional subspace is open
(respectively, closed) in the standard topology. The \emph{geometric
realization} $|X|$ is the subspace of $\bR^{V}$ of elements that can
be written as $t_{0}v_{0}+\dotsb +t_{n}v_{n}$ where $t_{i}\geq 0$,
$t_{0}+\dotsb +t_{n}=1$, and $\{v_{0},\dotsc ,v_{n}\}\in S$ is an
$n$-simplex of $X$. 
\end{defn}

The topology on $\bR^{V}$ ensures that $|X|$ has the \emph{union
topology}: a subset $U$ in $|X|$ is open (or closed) if and only if
its intersection with every finite subcomplex is open (respectively,
closed).  It follows that the geometric realization can alternatively
be defined as a quotient space of the disjoint union of 
geometric simplices (see below), one for each simplex of $X$, with faces
identified as per the face relations in $X$.

A function from $V$ into a vector space $E$ induces a \emph{linear
map} $|X|\to E$ as the restriction of the linear map $\bR^{V}\to E$.
For a map of simplicial complexes $X\to X'$, the map $X\to \bR^{V'}$
factors through $|X'|$ and the map $|X|\to |X'|$ is continuous.

Regarding $V$ as an orthonormal set in $\bR^{V}$ endows $|X|$ with the
\emph{standard metric} (via restriction). On finite subcomplexes of
$|X|$, the subspace topology agrees with the metric topology; in
general, the metric topology is coarser than the topology on $|X|$
when $X$ is not a locally finite complex.  In the context of data
sets, $V$ may have a meaningful distance function, inducing a
different (but equivalent, since $V$ is finite) metric on $|X|$, which
we refer to as the \emph{intrinsic metric}.  For definiteness in what
follows, we will always use the standard metric on $|X|$, though for
any equivalent metric, all assertions hold with minor modifications.

We use the following notation for certain subsets of $|X|$.

\begin{notn}\label{notSimp}
For $\sigma=\{v_{0},\dotsc,v_{n}\}$ an $n$-simplex in $X$, write
$|\sigma|$ for the \emph{geometric simplex}, the (closed) subset of
$|X|$ consisting of elements of the form $t_{0}v_{0}+\dotsb
+t_{n}v_{n}$.  Write $\Op{\sigma}$ for the \emph{open simplex}, the
subset of $|\sigma|$ consisting of those elements $t_{0}v_{0}+\dotsb
+t_{n}v_{n}$ with all $t_{i}$ non-zero (and therefore strictly
positive); $\Op{\sigma}$ is an open subset of $|\sigma|$, but not
necessarily an open subset of $|X|$.  For $v$ a vertex in $X$, let
$\Stv{v}$ denote the \emph{star neighborhood} of $v$, the subset of
$|X|$ of elements in $R^{V}$ whose $v$ coordinate is non-zero (and
therefore strictly positive); $\Stv{v}$ is an open subset of $|X|$.
Write $\Sta{\sigma}$ for the \emph{star neighborhood} of $\Op\sigma$,
the subset of $|X|$,
\[
\Sta{\sigma}=\bigcap_{v_{i}\in \sigma}\Stv{v_{i}}=\bigcup_{\tau \supset \sigma}\Op{\tau}
\]
(where $\tau$ ranges over the simplices of $X$ that contain $\sigma$);
$\Sta{\sigma}$ is an open subset of $|X|$ containing $\Op\sigma$.
\end{notn}

For convenience later, we note here the following fact.

\begin{prop}
For a simplicial complex $X=(V,S)$, the collection $\{\Op\sigma\mid
\sigma \in S\}$ is a partition of $|X|$; i.e., an element $x\in |X|$
is in $\Op{\sigma}$ for one and only one simplex $\sigma$. 
\end{prop}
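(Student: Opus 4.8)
The plan is to use the key fact that $V$ is a \emph{basis} of $\bR^{V}$, so that every point $x\in |X|\subset \bR^{V}$ has a well-defined coordinate expansion $x=\sum_{v\in V}x_{v}v$ with the $x_{v}\geq 0$ uniquely determined and $\sum_{v}x_{v}=1$. Define the \emph{support} of $x$ to be $\sigma(x)=\{v\in V\mid x_{v}>0\}$; since the coefficients sum to $1$, this set is non-empty. The assertion will follow once I show (a) $\sigma(x)\in S$ and $x\in\Op{\sigma(x)}$, and (b) if $x\in\Op{\tau}$ for some $\tau\in S$, then $\tau=\sigma(x)$.

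For (a): by Definition~\ref{defGRSC}, membership $x\in|X|$ means $x$ can be written as $t_{0}v_{0}+\dotsb+t_{n}v_{n}$ with $t_{i}\geq 0$, $\sum t_{i}=1$, and $\{v_{0},\dotsc,v_{n}\}\in S$. Comparing with the unique coordinate expansion, $\sigma(x)\subseteq\{v_{0},\dotsc,v_{n}\}$, and $\sigma(x)$ is a non-empty subset of a simplex, hence $\sigma(x)\in S$ by property (ii) of Definition~\ref{defSC}. By Notation~\ref{notSimp}, $x$ lies in $\Op{\sigma(x)}$: all of its coordinates on $\sigma(x)$ are strictly positive and sum to $1$, and its coordinates off $\sigma(x)$ vanish. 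For (b): if $x\in\Op{\tau}$ then by definition $x=\sum_{v\in\tau}t_{v}v$ with every $t_{v}>0$; again by uniqueness of the coordinate expansion this forces $t_{v}=x_{v}$ for $v\in\tau$ and $x_{v}=0$ for $v\notin\tau$, i.e. $\tau=\sigma(x)$. Thus $x$ belongs to $\Op{\sigma}$ for exactly the one simplex $\sigma=\sigma(x)$.

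There is essentially no hard step here; the only points requiring care are invoking that $V$ is a basis (so coordinates, and hence the support, are unambiguous) and invoking the downward-closure property (ii) of a simplicial complex to guarantee that the support is itself a simplex. I would also remark at the end that this shows the open simplices are pairwise disjoint and jointly cover $|X|$, which is precisely the claim that they form a partition. If desired, one can phrase the whole argument more economically by defining the function $x\mapsto\sigma(x)$ and checking it is a two-sided inverse to "$x\in\Op{\sigma}$", but the support-set argument above is the cleanest route and needs no further machinery.
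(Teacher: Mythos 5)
Your proof is correct and follows essentially the same route as the paper: both rest on the uniqueness of the coordinate expansion of $x$ in the basis $V$ of $\bR^{V}$, so that the support of $x$ determines the unique open simplex containing it. You simply spell out more explicitly (via downward closure) that the support is itself a simplex, a point the paper's one-line proof leaves implicit.
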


\begin{proof}
For $x\in |X|$ (or indeed in $\bR^{V}$), there is one and only one way
to write $x$ as $t_{0}v_{0}+\dotsb +t_{n}v_{n}$ for distinct 
elements $v_{0},\dotsc,v_{n}$ of $V$ with $t_{0}\dotsb t_{n}\neq 0$.
\end{proof}

We use the following easy observation in the next section.

\begin{lem}\label{lemconvex}
Let $X=(V,S)$ be a simplicial complex and let $C$ be a compact subset
of $|X|$ that is convex as a subset of $\bR^{V}$.  Then $C$ is
contained within a geometric simplex of $|X|$.
\end{lem}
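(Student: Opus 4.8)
The plan is to pin down a single candidate simplex directly, namely the set of all vertices that occur with nonzero coordinate somewhere on $C$. For $x\in|X|$ write $\operatorname{supp}(x)\subseteq V$ for the set of vertices whose coordinate in $x$ is nonzero; by the preceding proposition $\operatorname{supp}(x)$ is the unique $\sigma$ with $x\in\Op{\sigma}$, and in particular $\operatorname{supp}(x)$ is always a simplex of $X$. Put $W=\bigcup_{x\in C}\operatorname{supp}(x)$. I will show that $W\in S$; granting this, every point of $C$ has support contained in $W$, so $C$ is contained in the geometric simplex $|W|$, which is the desired conclusion. (We may assume $C\ne\emptyset$, as otherwise, with $X$ nonempty, there is nothing to prove.)

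The first step is to check that $W$ is finite, and this is the only place the compactness hypothesis is used. I would argue that a compact subset of $|X|$ meets only finitely many open simplices: if $C$ met $\Op{\sigma_i}$ for infinitely many distinct simplices $\sigma_i$, then choosing $x_i\in C\cap\Op{\sigma_i}$ gives an infinite set $D=\{x_i\}$ of distinct points (the $\Op{\sigma_i}$ being disjoint) whose intersection with any finite subcomplex $L$ is finite — indeed $x_i\in|L|$ only if $\sigma_i$ is one of the finitely many simplices of $L$. By the union-topology description of closed sets, $D$ and all its subsets are then closed in $|X|$, so $D$ is an infinite closed discrete subspace of the compact space $C$, which is impossible. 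Since $C$ meets only finitely many open simplices and each simplex has finitely many vertices, $W$ is a finite set. (Alternatively one may simply invoke the standard fact that a compact subset of $|X|$, equivalently of $\bR^{V}$ with the union topology, lies in a finite subcomplex, hence in a finite-dimensional subspace.)

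The second step uses convexity to show $W\in S$. The elementary observation is that supports add under convex combinations with strictly positive weights: if $x_{1},\dotsc,x_{k}\in|X|\subseteq\bR^{V}$ and $t_{1},\dotsc,t_{k}>0$ with $t_{1}+\dotsb+t_{k}=1$, then since all coordinates involved are nonnegative there is no cancellation, and $\operatorname{supp}(t_{1}x_{1}+\dotsb+t_{k}x_{k})=\operatorname{supp}(x_{1})\cup\dotsb\cup\operatorname{supp}(x_{k})$. Now, using that $W$ is finite, choose for each $v\in W$ a point $x_{v}\in C$ with $v\in\operatorname{supp}(x_{v})$, and set $z=\tfrac{1}{\#W}\sum_{v\in W}x_{v}$. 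By convexity $z\in C$, and by the observation $\operatorname{supp}(z)=\bigcup_{v\in W}\operatorname{supp}(x_{v})$, which contains $W$; since it is also contained in $W$ by the definition of $W$, we get $\operatorname{supp}(z)=W$. But $\operatorname{supp}(z)$ is a simplex of $X$, so $W\in S$, and as noted above this gives $C\subseteq|W|$.

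The only real obstacle is the topological bookkeeping in the first step: one must be careful to deduce finiteness of $W$ cleanly from the union topology (or from the CW structure of $|X|$) rather than assert it. Everything else — the support-addition observation, the averaging trick, and the final inclusion $C\subseteq|W|$ — is immediate once $W$ is known to be a finite simplex.
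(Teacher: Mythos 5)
Your proof is correct and follows essentially the same strategy as the paper's: compactness yields a finite set of relevant vertices, and convexity produces a single point of $C$ (your equal-weight average, the paper's iterated midpoints) whose support is that whole vertex set, forcing it to be a simplex containing $C$. The only difference is that you spell out the finiteness step (via the closed discrete subset argument) that the paper simply asserts from compactness.
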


\begin{proof}
By compactness $C$ must be contained in $\bR^{V_{0}}$ for some finite
subset $V_{0}=\{v_{0},\dotsc,v_{n}\}$ of $V$ (with the $v_{i}$'s
distinct).  Making $V_{0}$ smaller if necessary, we can assume without
loss of generality that for each $i=0,\dotsc,n$ some element $x_{i}$
of $C$ has non-zero $v_{i}$ coordinate (which then must be positive).
By convexity of $C$, we must 
have $(x_{0}+x_{1})/2$ in $C$, as this is the midpoint of the line
from $x_{0}$ to $x_{1}$.  Likewise, the point
\[
x=\frac{x_{0}+x_{1}}{2^{n}}+\frac{x_{2}}{2^{n-1}}+\dotsb +\frac{x_{n}}{2}
\]
is in $C$.  Since $x\in C\subset |X|$ has $v_{i}$ coordinate positive
for each $i$ (and all other coordinates zero), we must have that
$\sigma =\{v_{0},\dotsc,v_{n}\}=V_{0}$ is a simplex of $X$.  Since
$C\subset |X|\cap \bR^{V_{0}}=|\sigma|$, we see that $C$ is contained
within a geometric simplex of $|X|$.
\end{proof}

\section{Review of Subdivision}\label{secsub}

This section reviews the notion of a subdivision of simplicial complex
and some of its basic properties.

\begin{defn}\label{defsubdivision}
For a simplicial complex $X=(V,S)$, a \emph{subdivision} is a
simplicial complex $X'=(V',S')$ with $V'\subset |X|$ such that the
linear map $|X'|\to \bR^{V}$ factors as a homeomorphism $|X'|\iso |X|$.
\end{defn}

Lemma~\ref{lemconvex} above implies that every geometric simplex
$|\sigma|$ of $|X'|$ has its image contained within a geometric
simplex $|\tau|$ of $|X|$, for some $\tau$ depending on $\sigma$.  (In
particular, the definition above agrees with \cite[3.3]{Spanier}.)
An easy linear algebra argument \cite[3.3.3]{Spanier} in fact shows that
$\Op\sigma$ lands in $\Op\tau$ for the minimal such simplex $\tau$.
The following theorem is useful for 
identifying and conceptualizing subdivisions; see
\cite[3.3.4]{Spanier} for a proof.

\begin{thm}\label{thmpartition}
Let $X=(V,S)$ and $X'=(V',S')$ be simplicial complexes with $V'\subset
|X|$ and assume that the linear map $|X'|\to
\bR^{V}$ factors through $|X|$.  Then $X'$ is a subdivision of $X$ if
and only if for every simplex $\sigma$ of $X$, the set
$\{\Op{\sigma'}\mid \Op{\sigma'}\subset \Op\sigma\}$ is a finite partition
of $\Op\sigma$.
\end{thm}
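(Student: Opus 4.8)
\emph{Proof plan.} Write $f\colon |X'|\to|X|$ for the canonical map induced by the inclusion $V'\subset|X|$ (continuous since $|X'|\to\bR^V$ is assumed to factor through $|X|$, and linear on each simplex). The first step is to record that by Lemma~\ref{lemconvex} each closed simplex $|\sigma'|$ of $X'$ has $f(|\sigma'|)$ inside a smallest geometric simplex $|\tau|$ of $X$, and that the elementary linear-algebra argument of \cite[3.3.3]{Spanier} --- which never uses that $X'$ is a subdivision --- gives $f(\Op{\sigma'})\subset\Op\tau$ for that (then unique, since the $\Op\tau$ partition $|X|$) simplex $\tau$. Throughout I read ``$\Op{\sigma'}\subset\Op\sigma$'' as $f(\Op{\sigma'})\subseteq\Op\sigma$ and ``partition of $\Op\sigma$'' as: the sets $f(\Op{\sigma'})$, over such $\sigma'$, are pairwise disjoint with union $\Op\sigma$.

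For the ``only if'' direction I would assume $f$ is a homeomorphism and read off the partition condition directly: surjectivity together with the previous step gives $\Op\sigma=\bigcup\{f(\Op{\sigma'})\mid f(\Op{\sigma'})\subseteq\Op\sigma\}$ (intersect $|X|=\bigcup_{\sigma'}f(\Op{\sigma'})$ with $\Op\sigma$ and use disjointness of distinct $\Op\tau$); injectivity makes these images pairwise disjoint and $f$ injective on each $\Op{\sigma'}$; and $f^{-1}(|\sigma|)$, being homeomorphic to the compact set $|\sigma|$, meets only finitely many open simplices of $X'$, so only finitely many $\sigma'$ occur.

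For the ``if'' direction I would assume the partition condition and prove $f$ is a homeomorphism. The heart of the matter is the claim that $f$ is \emph{injective on every closed simplex} $|\sigma'|$ of $X'$. Since $f|_{|\sigma'|}$ is affine, non-injectivity would produce distinct $a,b\in\partial|\sigma'|$ with $f(a)=f(b)$ (push a kernel direction of the affine map from the barycenter out to the boundary); writing $a\in\Op{\rho'_a}$ and $b\in\Op{\rho'_b}$ for the faces of $\sigma'$ whose open simplices contain them, either $\rho'_a\ne\rho'_b$ --- whence $f(\Op{\rho'_a})$ and $f(\Op{\rho'_b})$ are two distinct members of the partition of the open simplex of $X$ containing $f(a)=f(b)$, contradicting disjointness --- or $\rho'_a=\rho'_b$, and then $f$ is non-injective on a strictly lower-dimensional closed simplex, so we descend; the descent must terminate at a vertex or edge, where $f$ is injective because distinct vertices of $X'$ are distinct points of $|X|$. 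Given the claim, $f$ is injective (if $f(x)=f(y)$ then $x$ and $y$ lie in open simplices of $X'$ with images in a common open simplex $\Op\tau$ of $X$; disjointness of the partition of $\Op\tau$ forces them into a single $\Op{\sigma'}$; the claim gives $x=y$), $f$ is surjective (each $\Op\sigma$ is covered and the $\Op\sigma$ exhaust $|X|$), and $f^{-1}(|\sigma|)=\bigcup_{\tau\le\sigma}\bigcup\{\Op{\sigma'}\mid f(\Op{\sigma'})\subseteq\Op\tau\}$ is a \emph{finite} subcomplex of $X'$ (finite by hypothesis; a subcomplex because $f(|\sigma'|)=\overline{f(\Op{\sigma'})}\subseteq|\tau|\subseteq|\sigma|$). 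This last point lets me conclude: $f$ takes closed sets to closed sets (test against each finite subcomplex of $X$, where $f$ restricts to a continuous map of compact Hausdorff spaces), so the continuous bijection $f$ is a homeomorphism and $X'$ is a subdivision of $X$.

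The step I expect to be the real obstacle is the injectivity-on-each-simplex claim. Its delicacy is that the partition hypothesis lives ``downstairs'' in $|X|$, so the descent argument has to be set up so that at each stage the two offending faces either coincide --- and one recurses on a smaller simplex --- or have images landing in one open simplex of $X$, where the partition hypothesis for that simplex is then contradicted; getting the bookkeeping of ``which open simplex of $X$ does $f(\Op{\sigma'})$ hit'' exactly right is the crux. A smaller but necessary point is verifying that preimages of finite subcomplexes are finite subcomplexes, which is what powers the compactness argument upgrading the continuous bijection to a homeomorphism.
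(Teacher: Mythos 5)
The paper does not actually prove this statement; it defers to \cite[3.3.4]{Spanier}, so there is no internal argument to compare yours against. Your proof is correct and is essentially the classical argument. The preliminary step is right: the fact that each $\Op{\sigma'}$ lands in a unique $\Op\tau$ uses only that $f$ is linear on each simplex together with Lemma~\ref{lemconvex}, not that $X'$ is a subdivision. The ``only if'' direction is routine as you say (the finiteness via compactness of $f^{-1}(|\sigma|)$ and the weak topology on $|X'|$ is the standard fact that a compact set meets only finitely many open simplices). The crux is exactly where you locate it: injectivity of $f$ on each closed simplex, proved by pushing a kernel direction of the affine map to the relative boundary and then either contradicting disjointness of the partition of the relevant $\Op\tau$ (when the two boundary points lie in open simplices of distinct proper faces) or descending to a strictly smaller face, with the base cases supplied by the fact that distinct vertices of $X'$ are distinct points of $|X|$. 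The upgrade from continuous bijection to homeomorphism, via $f^{-1}(|\sigma|)$ being the realization of a finite subcomplex and testing closedness against the union topology on $|X|$, is also correct. One small caveat of reading: in your Case~1 you treat $f(\Op{\rho'_a})$ and $f(\Op{\rho'_b})$ as distinct, hence disjoint, members of the partition because $\rho'_a\neq\rho'_b$; this presumes the (intended) interpretation that the open simplices of $X'$ contained in $\Op\sigma$, indexed by the simplices themselves, are pairwise disjoint subsets of $\Op\sigma$. If one insisted on reading the hypothesis purely as a condition on the collection of image-sets, the degenerate possibility of two distinct simplices of $X'$ with identical open images would need a separate (convexity) argument to exclude; this is hair-splitting rather than a gap, but it is worth a sentence fixing the interpretation before the descent argument.
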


In particular, since for a zero simplex $\Op\sigma =|\sigma|$, we have
the following immediate consequence (which is also easy to see on its own).

\begin{prop}\label{propsubverts}
If $X'=(V',S')$ is a subdivision of $X=(V,S)$, then $V'\supset V$.
\end{prop}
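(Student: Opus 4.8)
The plan is to deduce this directly from Theorem~\ref{thmpartition}, as the phrase ``immediate consequence'' indicates. Fix a vertex $v\in V$ and consider the zero-simplex $\sigma=\{v\}$ of $X$. As in the remark preceding the statement, for a zero-simplex one has $\Op\sigma=|\sigma|$, which here is the single point $\{v\}\subset|X|$. Applying Theorem~\ref{thmpartition} to $\sigma$ then shows that $\{\Op{\sigma'}\mid \sigma'\in S',\ \Op{\sigma'}\subset\Op\sigma\}$ is a finite partition of $\Op\sigma=\{v\}$.

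The next step is the elementary observation that a partition of a one-point set consists of exactly one block, that point itself; hence there is a (unique) $\sigma'\in S'$ with $\Op{\sigma'}=\{v\}$. Since $\Op{\sigma'}$ is an open $m$-simplex when $\sigma'$ is an $m$-simplex of $X'$, and an open $m$-simplex contains more than one point once $m\geq 1$, the simplex $\sigma'$ is forced to be a zero-simplex, say $\sigma'=\{w\}$ with $w\in V'$. Then $\{w\}=\Op{\sigma'}=\{v\}$ inside $|X|$, so $w=v$ and therefore $v\in V'$. Since $v\in V$ was arbitrary, $V\subset V'$.

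I do not expect any genuine obstacle: the content is packaged in Theorem~\ref{thmpartition}, and the remaining verifications are routine, the only mild point being that an open simplex of positive dimension has more than one point. If a proof independent of Theorem~\ref{thmpartition} is wanted, one can argue instead as follows. View $v$ as a point of $|X|$; under the homeomorphism $|X'|\iso|X|$ it lies in $\Op{\sigma'}$ for a unique simplex $\sigma'=\{v'_{0},\dotsc,v'_{m}\}$ of $X'$. By the remark following Definition~\ref{defsubdivision} (which rests on Lemma~\ref{lemconvex}), $\Op{\sigma'}$ is contained in $\Op\tau$ for the minimal simplex $\tau$ of $X$ with $|\sigma'|\subset|\tau|$; since $v\in\Op{\{v\}}$ and the open simplices of $X$ partition $|X|$, this forces $\tau=\{v\}$. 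Thus $\Op{\sigma'}\subset\{v\}$ is a single point, so $m=0$ and $v=v'_{0}\in V'$.
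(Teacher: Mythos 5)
Your main argument is correct and is exactly the paper's intended proof: the proposition is stated as an immediate consequence of Theorem~\ref{thmpartition} applied to a zero-simplex $\sigma=\{v\}$, where $\Op\sigma=|\sigma|=\{v\}$ forces some $\Op{\sigma'}$ of $X'$ to equal $\{v\}$, hence $v\in V'$. The alternative argument you sketch via Lemma~\ref{lemconvex} and the minimal-carrier remark is also fine, but no comparison is needed since your primary route coincides with the paper's.
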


Important examples of systematic subdivisions include the following.

\begin{example}
$X$ is a subdivision of itself.
\end{example}

\begin{example}[Barycentric Subdivision]
For an $n$-simplex $\sigma=\{v_{0},\dotsc,v_{n}\}$ of $X=(V,S)$, let
$b_{\sigma}=(v_{0}+\dotsb +v_{n})/(n+1) \in \bR^{V}$ be its
\emph{barycenter}, and let $V'=\{ b_{\sigma}\mid \sigma \in
S\}$ be the set of all barycenters of all simplices.  Let
\[
S'=\{\,\{b_{\sigma_{0}},\dotsc,b_{\sigma_{n}}\}\subset V'\mid
\sigma_{0}\subset \dotsb \subset \sigma_{n}\subset V \}.
\]
Then $X'=(V',S')$ is a subdivision of $X$ \cite[3.3.9]{Spanier} called
the \emph{barycentric subdivision}.
\end{example}

For $X'=(V',S')$ a subdivision of $X=(V,S)$, using the homeomorphism
$|X'|\iso |X|$, we can regard any subdivision of $X'$ as a subdivision of $X$
whose vertex set contains $V'$; likewise, any subdivision of $X$ whose
vertex set contains $V'$ can be regarded as a subdivision of $X'$.
Iterating barycentric subdivision then produces subdivisions with
successively finer partitions of the open simplices of $|X|$.  We use
the standard metric on $|X|$ to compare sizes of simplices of
different subdivisions.

\begin{defn}\label{defmesh}
Let $X'$ be a subdivision of $X$.  The \emph{mesh size} of $X'$,
$\mesh(X')$, is the supremum of the diameters in $|X|$ of images of
the geometric simplices of $|X'|$. 
\end{defn}

Since the standard metric on $|X|$ comes from a norm on $\bR^{V}$, we
have that the mesh size of $X'$ is the supremum of the distances in
$|X|$ between all pairs of vertices of $X'$ that span a $1$-simplex of
$X'$, 
\[
\mesh(X')=\sup \{ d_{|X|}(v,w) | \{v,w\}\in S' \}.
\]
In the case when $X'=X$, the mesh size is $\sqrt{2}$, and in general
the mesh size must be $\leq \sqrt{2}$.  In the case
when $X'$ is the barycentric subdivision of $X$, an easy calculation shows
that the mesh size is bounded above by $1$.  For the $n$-th iterated
barycentric subdivision, the mesh size is bounded above by
$1/\sqrt{2^{n-1}}$, which goes to zero.

\section{Review of Simplicial Approximation}\label{secsa}

This section reviews the definition of a simplicial approximation of
a continuous map between the geometric 
realizations of simplicial complexes.  We then state and prove a
formulation of the Simplicial Approximation Theorem.

\begin{defn}\label{defSA}
Let $X$ and $Y$ be simplicial complexes and let $\phi \colon
|X|\to |Y|$ be a continuous map.  A map of simplicial complexes $f 
\colon X\to Y$ is called a \emph{simplicial approximation} of $\phi$ when
for any  $x\in |X|$ and simplex $\sigma$ in $Y$, $\phi(x)\in \Op\sigma$
implies  $|f|(x)\in |\sigma|$.
\end{defn}

If $\phi$ happens to send a vertex $v$ in $|X|$ to a vertex in $|Y|$,
then $f$ must send $v$ to the same vertex of $Y$.  As a consequence
we get the following proposition.

\begin{prop}
Let $f \colon X\to Y$ be a simplicial approximation of a continuous map
$\phi\colon |X|\to |Y|$.  Let $A$ be a subcomplex of $X$ such that $\phi$
restricted to $|A| \subset |X|$ is the geometric realization of a map
of simplicial complexes $g\colon A\to Y$.  Then $f$ restricted to
$A$ is~$g$. 
\end{prop}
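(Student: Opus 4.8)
The plan is to reduce to the special case recorded in the parenthetical remark immediately preceding the proposition---that a simplicial approximation agrees with $\phi$ on any vertex which $\phi$ sends to a vertex---and then apply it one vertex at a time. First I would note that $f|_{A}\colon A\to Y$ is automatically a map of simplicial complexes: the simplices of $A$ are among the simplices of $X$ (Definition~\ref{defSC}), so $f$ carries each of them to a simplex of $Y$. Since a map of simplicial complexes is completely determined by its underlying function on vertices, it is enough to prove that the vertex maps of $f|_{A}$ and $g$ coincide, i.e. that $f(v)=g(v)$ for every vertex $v$ of $A$.

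Next I would fix a vertex $v$ of $A$, regarded also as the point $v\in |A|\subset |X|$. Because $g$ is a map of simplicial complexes, $w:=g(v)$ is a vertex of $Y$, and by hypothesis $\phi(v)=|g|(v)=w$ as a point of $|Y|$; thus $\phi$ sends $v$ to the vertex $w$. Now I would invoke the defining property of a simplicial approximation (Definition~\ref{defSA}) applied to the $0$-simplex $\sigma=\{w\}$: by Notation~\ref{notSimp} we have $\Op{\{w\}}=|\{w\}|=\{w\}$, so $\phi(v)\in\Op{\{w\}}$ forces $|f|(v)\in|\{w\}|=\{w\}$. Since $v$ is a vertex, $|f|(v)=f(v)$, and hence $f(v)=w=g(v)$.

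As $v$ was an arbitrary vertex of $A$, the vertex maps of $f|_{A}$ and $g$ agree, so $f|_{A}=g$ as maps of simplicial complexes. I do not expect any genuine obstacle here: the mathematical content is precisely the observation preceding the proposition, and the only care required is the elementary bookkeeping that $f|_{A}$ is itself a simplicial map and that a simplicial map is determined by its restriction to vertices, both immediate from Definition~\ref{defSC}.
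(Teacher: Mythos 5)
Your proposal is correct and follows the paper's intended argument: the paper derives the proposition exactly from the remark that a simplicial approximation must agree with $\phi$ at any vertex sent to a vertex, which is precisely your $0$-simplex $\{w\}$ argument applied vertex by vertex. The bookkeeping you add (that $f|_{A}$ is simplicial and determined by its vertex map) is the same elementary content the paper leaves implicit.
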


Given $x\in |X|$ and $\sigma$ the unique simplex in $Y$ such that $\phi(x)\in
\Op\sigma$, by convexity of $|\sigma|$ in $\bR^{W}$, the entire line
segment between $\phi(x)$ and $|f|(x)$ is contained in $|\sigma|$.  It
follows that the continuous map $H\colon |X|\times [0,1]\to \bR^{W}$
defined by 
\[
H(x,t)=(1-t)\cdot \phi(x)+t\cdot |f|(x)
\]
factors through $|Y|$.  This then defines a homotopy from $\phi$ to $|f|$,
proving the following proposition.

\begin{prop}\label{propsaho}
Let $f \colon X\to Y$ be a simplicial approximation of a continuous map
$\phi\colon |X|\to |Y|$.  Then $\phi$ and $|f|$ are homotopic.  Moreover,
if $\phi$ and $|f|$ agree on a subset $A$ of $|X|$, then $\phi$ and
$|f|$ are homotopic rel $A$.
\end{prop}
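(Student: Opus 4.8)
The plan is to write down the straight-line homotopy explicitly and check that it never leaves $|Y|$. Let $W$ denote the vertex set of $Y$ and define $H\colon |X|\times[0,1]\to\bR^{W}$ by $H(x,t)=(1-t)\,\phi(x)+t\,|f|(x)$. This is assembled from the continuous maps $\phi$ and $|f|$ together with the scalar multiplication and addition of $\bR^{W}$, so it is continuous into $\bR^{W}$. (When $Y$ is not locally finite I would verify this continuity by restricting to a finite subcomplex $X_{0}\subset X$: then $|f|(|X_{0}|)$ lies in the finite subcomplex of $Y$ spanned by the images of the vertices of $X_{0}$, and $\phi(|X_{0}|)$, being compact, also lies in a finite subcomplex, so $H(|X_{0}|\times[0,1])$ lies in a single finite subcomplex $|Y_{0}|$, where the union topology agrees with the standard one and continuity is clear; since $[0,1]$ is locally compact, $|X|\times[0,1]$ is the colimit of the $|X_{0}|\times[0,1]$, and a map out of that colimit is continuous as soon as every such restriction is.) Since $|Y|$ carries the subspace topology from $\bR^{W}$, it then suffices to show that $H$ factors set-theoretically through $|Y|$; the corestriction is then automatically continuous.

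For the factorization, fix $x\in|X|$ and let $\sigma$ be the unique simplex of $Y$ with $\phi(x)\in\Op\sigma$, which exists by the proposition that the open simplices partition a geometric realization. Then $\phi(x)\in|\sigma|$, and the defining property of a simplicial approximation (Definition~\ref{defSA}) gives $|f|(x)\in|\sigma|$ as well. Since $|\sigma|$ is convex in $\bR^{W}$, the entire segment $(1-t)\phi(x)+t|f|(x)$, $t\in[0,1]$, lies in $|\sigma|\subset|Y|$, so $H(x,t)\in|Y|$ for every $t$. Thus $H$ descends to a continuous homotopy $|X|\times[0,1]\to|Y|$ from $\phi$ (at $t=0$) to $|f|$ (at $t=1$), proving the first assertion. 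For the ``moreover'' clause: if $\phi$ and $|f|$ agree on $A\subset|X|$, then for $x\in A$ we have $\phi(x)=|f|(x)$, so $H(x,t)=\phi(x)$ does not depend on $t$, and $H$ is a homotopy rel $A$.

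I do not expect a genuine obstacle here. The mathematical content is entirely in the two observations that a simplicial approximation forces $\phi(x)$ and $|f|(x)$ into a common closed simplex of $Y$ and that closed simplices are convex; everything else is routine. The only point requiring any attention is the point-set matter of continuity of $H$ into $\bR^{W}$ when $Y$ is not locally finite, which the finite-subcomplex and compactness argument sketched above dispatches.
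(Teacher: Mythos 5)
Your proof is correct and is essentially the paper's own argument: the straight-line homotopy $H(x,t)=(1-t)\phi(x)+t|f|(x)$, landing in $|Y|$ because the simplicial approximation property puts $\phi(x)$ and $|f|(x)$ in the common closed simplex $|\sigma|$, which is convex. The extra point-set check of continuity into $\bR^{W}$ for non--locally finite $Y$ is a reasonable addition but does not change the approach.
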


It is not always possible to find a simplicial approximation for a
given continuous map.  The following alternative characterization of a simplicial
approximation is useful in this regard.  The proof is
straight-forward; see \cite[3.4.3]{Spanier}.

\begin{thm}\label{thmaltdefSA}
Let $X=(V,S)$ and $Y=(W,T)$ be simplicial complexes.  A map of vertex
sets $f \colon V\to W$ is a simplicial approximation of a continuous map
$\phi \colon |X|\to |Y|$ if and only if for every vertex $v$ in $X$,
$\phi (\Stv{v})\subset \Stv{f(v)}$. 
\end{thm}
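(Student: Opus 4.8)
The plan is to prove both implications directly from Definition~\ref{defSA}, using the description of star neighborhoods in Notation~\ref{notSimp}, namely that $\Stv{w} = \bigcup_{\tau \ni w}\Op\tau$ for a vertex $w$, and the fact (from the proposition just before Lemma~\ref{lemconvex}) that the open simplices partition $|X|$ and $|Y|$.

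For the forward direction, suppose $f$ is a simplicial approximation of $\phi$. Fix a vertex $v$ of $X$ and a point $x \in \Stv{v}$; I want to show $\phi(x) \in \Stv{f(v)}$. Let $\sigma$ be the unique simplex of $Y$ with $\phi(x) \in \Op\sigma$. Since $x \in \Stv{v}$, the point $x$ lies in $\Op\rho$ for some simplex $\rho$ of $X$ containing $v$. By the defining property of simplicial approximation, $|f|(x) \in |\sigma|$; I also want to locate $|f|(x)$ more precisely. The point here is that $|f|$ is linear on $|\rho|$ and $x$ has positive $v$-coordinate, so the barycentric expansion of $|f|(x)$ in $\bR^{W}$ has a positive $f(v)$-coordinate — that is, $|f|(x) \in \Stv{f(v)}$. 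Combining $|f|(x) \in |\sigma|$ with $|f|(x) \in \Stv{f(v)}$ forces $f(v)$ to be a vertex of $\sigma$ (its coordinate is positive and $|f|(x)\in|\sigma|$ kills all coordinates outside $\sigma$), hence $\Op\sigma \subset \Stv{f(v)}$, and so $\phi(x)\in\Op\sigma\subset\Stv{f(v)}$ as desired.

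For the converse, suppose $\phi(\Stv{v}) \subset \Stv{f(v)}$ for every vertex $v$ of $X$; I must first check $f$ is a simplicial map, then that it is a simplicial approximation of $\phi$. To see $f$ is simplicial, let $\sigma = \{v_0,\dots,v_m\}$ be a simplex of $X$ and pick any $x \in \Op\sigma$; then $x \in \Stv{v_i}$ for each $i$, so $\phi(x) \in \bigcap_i \Stv{f(v_i)} = \Sta{\{f(v_0),\dots,f(v_m)\}}$, which is nonempty, hence $\{f(v_0),\dots,f(v_m)\}$ spans a simplex of $Y$. For the approximation property, take $x \in |X|$ with $\phi(x) \in \Op\tau$, and let $\sigma = \{v_0,\dots,v_m\}$ be the unique simplex of $X$ with $x \in \Op\sigma$. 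Then $x \in \Stv{v_i}$ for each $i$, so $\phi(x) \in \Stv{f(v_i)}$ for each $i$; that is, $\tau$ contains $f(v_i)$ for each $i$. Since $|f|(x)$ is a convex combination of the vertices $f(v_0),\dots,f(v_m)$, all of which lie in $\tau$, we get $|f|(x) \in |\tau|$, which is the required condition.

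The step I expect to be the main obstacle is the linearity argument in the forward direction showing that $|f|(x)$ has positive $f(v)$-coordinate: one must be careful that $x$ lies in an \emph{open} simplex $\rho$ containing $v$ (not merely that $v$ appears in the vertex set of something containing $x$), and that pushing forward under the linear map $|f|$ can only \emph{increase} the $f(v)$-coordinate, since several vertices of $\rho$ may map to $f(v)$. Everything else is bookkeeping with the partition into open simplices and the identity $\Stv{w}=\bigcup_{\tau\ni w}\Op\tau$.
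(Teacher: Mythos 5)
Your proof is correct and is essentially the standard argument that the paper delegates to Spanier \cite[3.4.3]{Spanier}: both directions reduce to tracking positive barycentric coordinates through the partition of $|X|$ and $|Y|$ into open simplices, including the necessary check in the converse that the star condition already forces $f$ to be a simplicial map. The only cosmetic point is that in the converse you write $\bigcap_i \Stv{f(v_i)} = \Sta{\{f(v_0),\dotsc,f(v_m)\}}$ before knowing that this set is a simplex of $Y$; the argument you clearly intend --- any point of the nonempty intersection lies in $\Op{\tau}$ for a unique simplex $\tau$ of $Y$, which must then contain every $f(v_i)$, so the set is a face of $\tau$ --- is the right one and closes that small notational loop.
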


In particular, a continuous map $\phi \colon |X|\to |Y|$ admits a simplicial
approximation if and only if for every vertex $v$ of $X$, there exists
a vertex $w$ of $Y$ such that $\Stv{v}\subset \phi^{-1}(\Stv{w})$.
This leads directly to the following theorem, the Simplicial
Approximation Theorem.  For this, recall that for a metric space $M$,
the \emph{Lebesgue number} of an open cover $\aU=\{U_{\alpha}\}$ of $M$ is a number
$\epsilon>0$ (if one exists) such that whenever a subset $D$ of $M$  has
diameter $\leq \epsilon$, there exists $U_{\alpha}$ in $\aU$ such
that $D\subset U_{\alpha}$.  When $M$ is compact, every open cover has
a Lebesgue number%
\iffalse 
. (Proof: let $\aV$ be the open cover consisting of
all open balls $B_{\delta}(x)$ such that $B_{2\delta}(x)\subset
U_{\alpha}$ for some $\alpha$; choosing a finite subcover
$B_{\delta_{1}}(x_{1}),\dotsc,B_{\delta_{n}}(x_{n})$, and taking
$\epsilon =\min(\delta_{1},\dotsc,\delta_{n})$ will work since any
subset of $X$ of diameter $\epsilon$ is contained within
$B_{\delta_{i}+\epsilon}(x_{i})\subset B_{2\delta_{i}}(x_{i})$ for
some $i$.)
\else
~\cite[27.5]{Munkres}. 
\fi

\begin{thm}[Simplicial Approximation Theorem]\label{thmSA}
Let $X$ and $Y$ be simplicial complexes, and $\phi \colon |X|\to |Y|$
a continuous map.  Assume that the open cover $\{ \phi^{-1}(\Stv{w}) \}$ of
$|X|$ (where $w$ ranges over the vertices of $Y$) has a Lebesgue
number $\lambda >0$.  If $X'$ is a subdivision of $X$ with
$\mesh(X')<\lambda/2$, then the composite map $|X'|\iso |X|\to |Y|$
admits a simplicial approximation.
\end{thm}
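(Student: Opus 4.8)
The plan is to verify the hypothesis of the alternative characterization of simplicial approximation (Theorem~\ref{thmaltdefSA}), namely that for every vertex $v$ of $X'$ there is a vertex $w$ of $Y$ with $\Stv{v}\subset \phi^{-1}(\Stv{w})$, where here $\Stv{v}$ is the star of $v$ computed in the subdivided complex $X'$ and we identify $|X'|\iso|X|$. Since a simplicial approximation exists as soon as this condition holds (by the remark following Theorem~\ref{thmaltdefSA}), this suffices. The mesh hypothesis is in place precisely to control the diameter of these stars so that the Lebesgue number $\lambda$ can be applied.

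First I would fix a vertex $v$ of $X'$ and estimate the diameter of its star neighborhood $\Stv[X']{v}$ in the standard metric on $|X|$. The key geometric point is that $\Stv[X']{v}=\bigcup_{\tau\ni v}\Op\tau$, a union of open simplices of $X'$, each of whose closure $|\tau|$ has diameter at most $\mesh(X')<\lambda/2$ and all of which contain the common point $v$. Hence any two points of $\Stv[X']{v}$ are each within $\lambda/2$ of $v$, so by the triangle inequality $\diam(\Stv[X']{v})\leq 2\,\mesh(X')<\lambda$. (One should be slightly careful here: strictly, points lie in the closed simplices $|\tau|$, and since $v\in|\tau|$ with $\diam|\tau|<\lambda/2$, every point of $|\tau|$ is within $\lambda/2$ of $v$; this is where the factor of $2$ in the mesh bound is used.)

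Next, since $\diam(\Stv[X']{v})<\lambda$ and $\lambda$ is a Lebesgue number for the cover $\{\phi^{-1}(\Stv{w})\}$ of $|X|$, there exists a vertex $w$ of $Y$ with $\Stv[X']{v}\subset \phi^{-1}(\Stv{w})$. This holds for every vertex $v$ of $X'$, so by Theorem~\ref{thmaltdefSA} (applied to the composite $|X'|\iso|X|\overset{\phi}{\to}|Y|$, whose source complex is $X'$) the vertex map $v\mapsto w$ — making any fixed choice of such $w$ for each $v$ — is a simplicial approximation of $\phi\colon|X'|\to|Y|$. This is exactly the claimed conclusion.

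I do not expect any serious obstacle; the argument is short once the bookkeeping is right. The one point requiring a little care is the distinction between stars taken in $X$ versus in the subdivision $X'$: the Lebesgue number $\lambda$ was assumed for the cover by preimages of stars of vertices of $Y$, which is a cover of $|X|=|X'|$ and is insensitive to subdivision, so it applies verbatim to the stars of vertices of $X'$. The other mild subtlety, already noted above, is that a star is a union of \emph{open} simplices but diameter estimates are cleanest on the corresponding \emph{closed} simplices, all of which share the vertex $v$; this is precisely why $\mesh(X')<\lambda/2$ rather than $<\lambda$ is the hypothesis.
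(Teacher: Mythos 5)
Your proof is correct and is essentially the same argument as the paper's: bound the diameter of the star of each vertex of $X'$ by $2\mesh(X')<\lambda$, invoke the Lebesgue number to place it inside some $\phi^{-1}(\Stv{w})$, and conclude via Theorem~\ref{thmaltdefSA}. The paper's proof is a one-line version of exactly this; your extra care about closed versus open simplices and about stars in $X'$ versus $X$ is sound bookkeeping, not a different route.
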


\begin{proof}
We apply Theorem~\ref{thmaltdefSA}: for every vertex $v$ of $X'$, the
diameter of $\Stv{v}$ in $|X|$ is $\leq 2\mesh(X')$.
\end{proof}

The following example illustrates the way in which successive
subdivision is required to realize ``larger'' homotopy classes of
maps.

\begin{example}[The simplicial fundamental group]
Consider the model of the circle given by the boundary of the standard
two-simplex $\partial \Delta[2]$, i.e., the simplicial complex
determined by vertices $\{v_0, v_1, v_2\}$ and one simplices $\{(v_0,
v_1), (v_1, v_2), (v_2, v_0)\}$.  The $k$th barycentric subdivision
$\Sd^k \partial \Delta[2]$ has $3 (2^k)$ vertices.  The map $S^1 \to
S^1$ of degree $d$ (i.e., a map which ``winds 
around'' the circle $d$ times) does not admit a simplicial
approximation 
$\Sd^{d-1} \partial \Delta[2] \to \partial \Delta[2]$ but does have a
simplicial approximation
$\Sd^{d} \partial \Delta[2] \to \partial \Delta[2]$.
\end{example}

We need the following additional existence statement for simplicial
approximations implicitly used in the statement of
Theorem~\ref{thmapprox}.

\begin{thm}
If $X'$ is a subdivision of $X$, then the homeomorphism $|X'|\iso |X|$
admits a simplicial approximation.
\end{thm}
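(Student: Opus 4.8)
The plan is to reduce this to the Simplicial Approximation Theorem (Theorem~\ref{thmSA}) by checking that the relevant open cover of $|X'|$ has a positive Lebesgue number; the point is that $X'$ need not be \emph{finite}, so compactness of $|X|$ does not immediately apply to $X'$, and we must argue more carefully. Write $\phi\colon |X'|\iso |X|$ for the homeomorphism, which is the restriction of the linear map, and identify $|X'|$ with $|X|$ via $\phi$ from now on. By Theorem~\ref{thmaltdefSA}, it suffices to find, for each vertex $v$ of $X'$, a vertex $w$ of $X$ with $\phi(\Stv{v})\subset \Stv{w}$; equivalently, viewing everything inside $|X|$, we need $\Stv[X']{v}\subset \Stv[X]{w}$, where $\Stv[X']{v}$ denotes the star neighborhood of $v$ computed in $X'$.

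First I would establish the key local fact: for a subdivision, each open star $\Stv[X']{v}$ of a vertex $v\in V'$ is contained in a single open simplex's star for $X$. Concretely, let $\tau$ be the unique simplex of $X$ with $v\in \Op\tau$ (using the partition of $|X|$ by open simplices). I claim $\Stv[X']{v}\subset \Sta[X]{\tau}$, hence $\Stv[X']{v}\subset \Stv[X]{w}$ for any vertex $w$ of $\tau$. To see this, recall from the discussion after Definition~\ref{defsubdivision} that every open simplex $\Op{\sigma'}$ of $X'$ lands in $\Op{\rho}$ for the \emph{minimal} simplex $\rho$ of $X$ containing $|\sigma'|$; and $\Stv[X']{v}$ is the union of those $\Op{\sigma'}$ with $v\in\sigma'$. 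For such a $\sigma'$, the vertex $v$ itself lies in $\Op\tau$, so $|\sigma'|$ meets $\Op\tau$, forcing the minimal $X$-simplex $\rho$ containing $|\sigma'|$ to contain $\tau$ (if $|\sigma'|$ has a point with all $\tau$-coordinates positive, and $|\sigma'|\subset|\rho|$, then $\rho\supset\tau$). Hence $\Op{\sigma'}\subset\Op\rho\subset\Sta[X]{\tau}$. Taking the union over all such $\sigma'$ gives $\Stv[X']{v}\subset\Sta[X]{\tau}=\bigcap_{w\in\tau}\Stv[X]{w}$, and picking any single vertex $w\in\tau$ finishes the verification of the hypothesis of Theorem~\ref{thmaltdefSA}.

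Then the map $f\colon V'\to V$ sending $v$ to a chosen vertex of the carrier $\tau$ of $v$ is, by Theorem~\ref{thmaltdefSA}, a simplicial approximation of $\phi$, and we are done. I expect the main obstacle to be purely notational bookkeeping: the excerpt uses $\Stv{v}$ and $\Sta\sigma$ without decorating by the ambient complex, so the argument must keep careful track of which stars are computed in $X$ versus in $X'$, and it relies on the ``minimal carrier'' refinement of Lemma~\ref{lemconvex} (the statement that $\Op{\sigma'}$ lands in $\Op\tau$ for the minimal $\tau$), which was quoted from \cite[3.3.3]{Spanier} rather than proved in full here. There is no metric or Lebesgue-number estimate needed at all — the subdivision structure does the work directly — so the proof is short once the carrier bookkeeping is set up; I would present it in essentially the two sentences above plus the carrier lemma.
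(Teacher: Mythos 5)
Your proposal is correct and constructs exactly the same simplicial approximation as the paper: send each vertex $v$ of $X'$ to a chosen vertex of its carrier, the unique simplex $\tau$ of $X$ with $v\in\Op\tau$. The only difference is in the verification — you check the star-inclusion criterion of Theorem~\ref{thmaltdefSA} (using the minimal-carrier fact quoted after Definition~\ref{defsubdivision}, and correctly noting that no Lebesgue-number argument is needed despite your opening sentence), whereas the paper verifies Definition~\ref{defSA} directly on each open simplex using Theorem~\ref{thmpartition}; both verifications are sound, so this is essentially the paper's argument.
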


\begin{proof}
We use the notation from Definition~\ref{defsubdivision}. For every
vertex $v'$ in $V'$, let $\sigma_{v'}$ be the unique simplex of $X$ with
$v'\in \Op{\sigma_{v'}}$.  Choose $g\colon V'\to V$  
to be any function that takes each $v'$ to some vertex of
$\sigma_{v'}$.  We claim that $g$ is a simplicial map and a simplicial
approximation of the homeomorphism $\phi \colon |X'|\iso |X|$.  To see this, let
$\sigma'=\{v'_{0},\dotsc,v'_{n}\}$ be an $n$-simplex in $X'$ and let
$|\tau|$ be a geometric simplex of $|X|$ that contains the image of
$|\sigma'|$ under $\phi$.  
Since $|\tau|$ contains $v'_{i}$ for all $i$, it must also contain
$|\sigma_{v'_{i}}|$ for all $i$; hence
$\{g(v'_{0}),\dotsc,g(v'_{n})\}\subset \tau$ is a simplex in $X$,
which shows that $g$ is a simplicial map.  We note also that $|g|$
sends $|\sigma'|$ into $|\tau|$.  Next we see that $g$ is a
simplicial approximation of the homeomorphism $|X'|\iso |X|$.  For
$x\in |X'|$, without loss of generality, we can assume $x\in \Op{\sigma'}$, and we note
that by Theorem~\ref{thmpartition}, under the homeomorphism $|X'|\iso
|X|$, $\Op{\sigma '}\subset\Op{\sigma}$
for some simplex $\sigma$ of $X$.  Then $|\sigma|$ contains
$v'_{0},\dotsc,v'_{n}$, and so (as above) $|\sigma|$ contains the
image of $|\sigma'|$ under $|g|$.  In particular $x$ lands in
$\Op\sigma$ under the homeomorphism $|X'|\iso |X|$ and $|g|(x)\in
|\sigma|$. 
\end{proof}

Simplicial approximations, when they exist, need not be unique, but we
do have the following theorem, which gives uniqueness up to contiguity.
Recall from the introduction that simplicial maps
$f_{0},\dotsc,f_{n}\colon X\to Y$ are \emph{mutually contiguous} when for any
simplex $\sigma =\{v_{0},\dotsc,v_{m}\}$ of $X$, the set 
\[
\{f_{0}(v_{0}),\dotsc,f_{0}(v_{m}),
f_{1}(v_{0}),\dotsc,f_{n}(v_{m}) \}
\]
forms a simplex in $Y$.

\begin{thm}\label{thmsamc}
If $f_{0},\dotsc,f_{n}\colon X\to Y$ are all simplicial approximations
to $\phi \colon |X|\to |Y|$, then they are mutually contiguous.
\end{thm}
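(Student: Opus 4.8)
The plan is to reduce the statement to the alternative characterization of simplicial approximations in Theorem~\ref{thmaltdefSA}, which says that each $f_{k}$ being a simplicial approximation of $\phi$ is equivalent to $\phi(\Stv{v})\subset \Stv{f_{k}(v)}$ for every vertex $v$ of $X$. The point is that the combinatorial ``forms a simplex'' condition in the definition of mutual contiguity can be detected by a single well-chosen point of $|Y|$ whose barycentric coordinates are positive on the relevant vertex set.

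First I would fix an arbitrary simplex $\sigma=\{v_{0},\dotsc,v_{m}\}$ of $X$ and pick one test point $x$ in the open simplex $\Op\sigma$; concretely, the barycenter $b_{\sigma}=(v_{0}+\dotsb+v_{m})/(m+1)$ works, and in particular $\Op\sigma$ is nonempty. By Notation~\ref{notSimp} we have $\Op\sigma\subset \Sta{\sigma}\subset \Stv{v_{i}}$ for each $i$, so $x\in\Stv{v_{i}}$ for all $i$ simultaneously. Applying Theorem~\ref{thmaltdefSA} to each $f_{k}$ then gives $\phi(x)\in\phi(\Stv{v_{i}})\subset\Stv{f_{k}(v_{i})}$ for every $k\in\{0,\dotsc,n\}$ and every $i\in\{0,\dotsc,m\}$; equivalently, the $f_{k}(v_{i})$-coordinate of $\phi(x)$ in $\bR^{W}$ is strictly positive.

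Next I would invoke the partition result: let $\rho$ be the unique simplex of $Y$ with $\phi(x)\in\Op\rho$. A point of $\Op\rho$ has nonzero coordinate precisely at the vertices of $\rho$, so every $f_{k}(v_{i})$ is a vertex of $\rho$. Hence the set $\{f_{k}(v_{i})\mid 0\le k\le n,\ 0\le i\le m\}$ is a nonempty subset of the simplex $\rho$, and is therefore itself a simplex of $Y$ by the face-closure axiom of Definition~\ref{defSC}. Since $\sigma$ was arbitrary, this is exactly the statement that $f_{0},\dotsc,f_{n}$ are mutually contiguous.

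There is no serious obstacle here; the only step worth flagging is the translation from the ``star neighborhood'' condition produced by Theorem~\ref{thmaltdefSA} to the ``is a simplex'' condition, which is handled by the observation above that a single point with the appropriate positive coordinates witnesses that a finite vertex set spans a simplex. (One could alternatively argue directly from Definition~\ref{defSA}, tracking the support of $|f_{k}|(x)$ for $x\in\Op\sigma$, but the star-neighborhood route avoids that bookkeeping.)
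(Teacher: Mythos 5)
Your proof is correct, and its skeleton is the same as the paper's: fix a simplex $\sigma$, choose a test point $x\in\Op{\sigma}$, let $\tau$ be the unique simplex of $Y$ with $\phi(x)\in\Op{\tau}$, and show that every $f_{k}(v_{i})$ is a vertex of $\tau$, so that $\{f_{k}(v_{i})\}$ is a face of $\tau$ and hence a simplex. The one genuine difference is the mechanism for the last step: the paper argues directly from Definition~\ref{defSA}, noting that $\phi(x)\in\Op{\tau}$ forces $|f_{k}|(x)\in|\tau|$, and since $x$ has strictly positive coordinates on all of $\sigma$, the point $|f_{k}|(x)$ is supported exactly on $\{f_{k}(v_{i})\}$, which must therefore lie in $\tau$; you instead route through the star characterization of Theorem~\ref{thmaltdefSA}, reading off that $\phi(x)$ itself has positive coordinate at each $f_{k}(v_{i})$, so these vertices lie in $\tau$. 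Both are valid; your version avoids tracking the support of $|f_{k}|(x)$ (as you note), at the mild cost of invoking the equivalence of Theorem~\ref{thmaltdefSA} rather than the bare definition, and it is no less elementary --- any point of $\Op{\sigma}$ works just as well as the barycenter.
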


\begin{proof}
For an $m$-simplex $\sigma =\{v_{0},\dotsc,v_{m}\}$ of $X$, choose a
point $x\in \Op\sigma \subset |X|$, and let $\tau$ be the simplex in
$Y$ such that $\phi(x)\in \Op\tau$.  Then $|f_{i}|(x)\in |\tau|$ so
$f_{i}$ must send $v_{0},\dotsc,v_{m}$ to vertices of $\tau$.
It follows that $\{f_{i}(v_{j})\}\subset \tau$.
\end{proof}

Contiguity is a combinatorial analogue of homotopy; a precise
statement of the relationship is the content of the main theorem of
this paper.  However, in contrast to homotopy of maps, contiguity is
evidently {\em not} an equivalence relation (as it is not transitive).
The following simple example illustrates the relationship between
homotopy classes and contiguity classes~\cite[3.5.5]{Spanier}.

\begin{example}\label{exa:simpfundgp}
Consider again the model of the circle given by the boundary of the
standard two-simplex $\partial \Delta[2]$.  Suppose that the
simplicial maps $f, g \colon \partial \Delta[2]\to \partial \Delta[2]$
both land in proper subcomplexes of $\partial \Delta[2]$.  Then it is
easy to check that $f$ and $g$ must each be contiguous to a constant
map (sending all vertices to the same vertex) and that any pair of
constant maps are contiguous.  Thus, all such simplicial maps form one
``contiguity class'' that represents the null-homotopic map from
$S^1 \to S^1$.  On the other hand, if we take
\begin{gather*}
f(v_{0})=v_{0},\qquad f(v_{1})=v_{1}, \qquad f(v_{2})=v_{0},\\
g(v_{0})=v_{0},\qquad g(v_{1})=v_{2}, \qquad g(v_{2})=v_{0},
\end{gather*}
then $f$ and $g$ are not contiguous because for the $1$-simplex
$\{v_{0},v_{1}\}$ in the domain, the subset
$\{f(v_{0}),f(v_{1}),g(v_{0}),g(v_{1})\}=\{v_{0},v_{1},v_{2}\}$ of
vertices in the codomain is not a simplex. 
The remaining maps $\partial \Delta [2]\to \partial \Delta [2]$ are
all given by permutations of the vertex set.  The even permutations
induce degree $1$ self-maps of $S^{1}$ and the odd permutations induce
degree $-1$ self-maps of $S^{1}$.  All maps of the same degree are
homotopic, but an easy check shows that no two of these maps are
contiguous, and so each vertex permutation is its own contiguity
class.  As we subdivide the domain,
the image of these maps will merge into a single contiguity class for
each degree.
\end{example}

\section{The Product of Simplicial Complexes}

In this section we study the product of simplicial complexes and
construct maps relating its geometric realization to the product on
the geometric realization and to the geometric realization of the
product of subdivisions.

\begin{defn}\label{defprod}
Let $X=(V,S)$ and $Y=(W,T)$ be simplicial complexes.  Define the
product complex $X\boxtimes Y$ to be the simplicial complex $(V\times
W,U)$ where $\{(v_{0},w_{0}),\dotsc,(v_{n},w_{n})\}\in U$ if and only
if $\{v_{0},\dotsc,v_{n}\}\in S$ and $\{w_{0},\dotsc,w_{n}\}\in T$.
(N.B. We do not assume that $v_{0},\dotsc,v_{n}$ are distinct
elements of $V$ or $w_{0},\dotsc,w_{n}$ are distinct
elements of $W$.)
\end{defn}

The product $X\boxtimes Y$ comes with canonical maps of simplicial
complexes $X\boxtimes Y\to X$ and $X\boxtimes Y\to Y$ induced by the
projections $V\times W\to V$ and $V\times W\to W$.  These projection
maps have the following universal property.

\begin{prop}
$X\boxtimes Y$ is the product of $X$ and $Y$ in the category of
simplicial complexes: maps of simplicial complexes $Z\to X\boxtimes Y$
are in bijective correspondence with pairs of maps of simplicial
complexes $Z\to X$, $Z\to Y$.
\end{prop}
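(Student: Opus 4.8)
The plan is the standard verification that $X \boxtimes Y$, together with the two projections $X\boxtimes Y\to X$ and $X\boxtimes Y\to Y$, satisfies the universal property of a categorical product. In one direction, given a map of simplicial complexes $h \colon Z \to X \boxtimes Y$, I would simply post-compose with the two projection maps to obtain a pair consisting of $f \colon Z \to X$ and $g \colon Z \to Y$.

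In the other direction, given $f \colon Z \to X$ and $g \colon Z \to Y$ (say with $Z=(U,R)$), I would define $h$ on vertex sets by $h(u) = (f(u), g(u))$, using the universal property of the product of sets $V \times W$. The one thing to check is that this $h$ is a map of simplicial complexes: if $\sigma = \{u_{0}, \dotsc, u_{n}\}$ is a simplex of $Z$, then $\{f(u_{0}), \dotsc, f(u_{n})\}$ is a simplex of $X$ and $\{g(u_{0}), \dotsc, g(u_{n})\}$ is a simplex of $Y$, so by Definition~\ref{defprod} the set $\{(f(u_{0}), g(u_{0})), \dotsc, (f(u_{n}), g(u_{n}))\}$ is a simplex of $X \boxtimes Y$. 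This is exactly the point where the N.B.\ in Definition~\ref{defprod} is used: although $u_{0}, \dotsc, u_{n}$ are distinct, the pairs $(f(u_{i}), g(u_{i}))$ may coincide, and it is precisely because the definition of $X\boxtimes Y$ tolerates repeated entries in the list defining a simplex that $h(\sigma)$ is a legitimate simplex of $X\boxtimes Y$.

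Finally, I would check that the two assignments are mutually inverse. Starting from $h$, forming the pair $(f,g)$ by projection and then reassembling returns the vertex map $u \mapsto (f(u), g(u))$, which is $h$ again, since composing $h$ with the two projections reads off the two coordinates of $h(u)$. Starting from a pair $(f,g)$, forming $h$ and then projecting recovers $f$ and $g$ exactly. Hence the correspondence is a bijection, and it is evidently natural in $Z$. I do not anticipate any real obstacle: the only subtle point is recording that the non-injectivity permitted in Definition~\ref{defprod} is what makes the vertex map $u\mapsto (f(u),g(u))$ simplicial, and this is more an observation than a difficulty.
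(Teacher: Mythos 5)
Your proof is correct: the paper states this proposition without any proof, treating it as immediate from Definition~\ref{defprod}, and your verification---defining $h$ on vertices by $u\mapsto(f(u),g(u))$, checking simpliciality from the fact that the projections of $h(\sigma)$ are exactly $f(\sigma)$ and $g(\sigma)$, and noting the two assignments are mutually inverse---is exactly the routine argument being suppressed. One small refinement: the N.B.\ in Definition~\ref{defprod} is needed not so much because the pairs $(f(u_i),g(u_i))$ may coincide (coinciding pairs merely shrink the set, whose projections are still $f(\sigma)$ and $g(\sigma)$), but because even distinct pairs can have repeated first or second coordinates when $f$ or $g$ collapses vertices of $\sigma$; this is precisely the flexibility your argument invokes, so nothing substantive is missing.
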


The cartesian product of spaces has the corresponding universal
property in the category of spaces, and so the geometric realization
of the projection maps $|X\boxtimes Y|\to |X|$, $|X\boxtimes Y|\to
|Y|$ induce a natural continuous map 
\[
\rho \colon |X\boxtimes Y|\to |X|\times |Y|.
\]
Unless $X$ or $Y$ is discrete (has only zero simplices), this map is
far from a homeomorphism.  For example,
\[
\Delta[m]\boxtimes \Delta[n]\iso \Delta[(m+1)(n+1)-1],
\]
where $\Delta[\bullet]$ denotes a standard simplex
(Example~\ref{exdeltan}). Nonetheless, the projection map $\rho$ does
have a natural section
\[
\varsigma \colon |X|\times |Y|\to |X\boxtimes Y|
\]
induced by the universal bilinear map
\begin{gather*}
\bR^{V}\times \bR^{W}\to \bR^{V}\otimes \bR^{W}=\bR^{V\times W}\\
(s_{0}v_{0}+\dotsb +s_{m}v_{m},t_{0}w_{0}+\dotsb +t_{n}w_{n})\mapsto
\sum s_{i}t_{j}(v_{i},w_{j}).
\end{gather*}
This defines a function $\varsigma$ as above since when
$\{v_{0},\dotsc v_{m}\}$ and $\{w_{0},\dotsc,w_{n}\}$ are simplices of
$X$ and $Y$, $\{(v_{i},w_{j})\}$ is a simplex of $X\boxtimes Y$, and
when $\sum s_{i}=\sum t_{j}=1$, then $\sum s_{i}t_{j}=1$.  In general
the product topology of $|X|\times |Y|$ is coarser than the union
topology, and $\varsigma$ is not a continuous map (cf.~Theorem~2
of~\cite{MilnorReal}).  When $X$ or $Y$ is locally finite, the product
topology on $|X|\times |Y|$ coincides with the union topology, and
$\varsigma$ is a continuous map. 

\begin{thm}
The composite function $\rho \circ \varsigma$ is the identity on $|X|\times
|Y|$.  The identity map on $X\boxtimes Y$ is a simplicial
approximation to the composite $\varsigma \circ \rho$, which is a
continuous map.
\end{thm}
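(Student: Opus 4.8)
The plan is to handle the three assertions in turn --- the identity $\rho\circ\varsigma=\mathrm{id}$, the continuity of $\varsigma\circ\rho$, and the simplicial-approximation statement --- the common tool being an explicit barycentric-coordinate formula for $\varsigma\circ\rho$.

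First I would verify $\rho\circ\varsigma=\mathrm{id}$ by direct computation: for $s=\sum_i s_i v_i\in|X|$ and $t=\sum_j t_j w_j\in|Y|$ we have $\varsigma(s,t)=\sum_{i,j}s_i t_j(v_i,w_j)$, and applying the two projections (which are linear) yields $(\sum_{i,j}s_i t_j v_i,\ \sum_{i,j}s_i t_j w_j)=(\sum_i s_i v_i,\ \sum_j t_j w_j)=(s,t)$, using $\sum_i s_i=\sum_j t_j=1$. Then I would record the formula for $\varsigma\circ\rho$: writing $x\in|X\boxtimes Y|$ as $x=\sum_k r_k(v_k,w_k)$ with $\mu_0=\{(v_k,w_k)\}$ a simplex of $X\boxtimes Y$ and all $r_k>0$, and setting $a_v=\sum_{v_k=v}r_k$, $b_w=\sum_{w_k=w}r_k$, one obtains $\rho(x)=(\sum_v a_v v,\ \sum_w b_w w)$ and hence $(\varsigma\circ\rho)(x)=\sum_{v,w}a_v b_w(v,w)$. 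The $(v,w)$-coordinate $a_v b_w$ is positive exactly when $v\in\pi_X(\mu_0)$ and $w\in\pi_Y(\mu_0)$, so $(\varsigma\circ\rho)(x)$ lies in $\Op{\mu}$ where $\mu=\pi_X(\mu_0)\times\pi_Y(\mu_0)$; and $\mu$ is a simplex of $X\boxtimes Y$ since $\pi_X(\mu)=\pi_X(\mu_0)\in S$ and $\pi_Y(\mu)=\pi_Y(\mu_0)\in T$.

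For continuity of $\varsigma\circ\rho$ I would use that $|X\boxtimes Y|$ carries the union topology, so it suffices to check continuity on each finite subcomplex $K$. Any such $K$ lies inside $X_0\boxtimes Y_0$ for finite subcomplexes $X_0\subset X$ and $Y_0\subset Y$ (generated by the $\pi_X$- and $\pi_Y$-images of the finitely many simplices of $K$). On the finite complex $X_0\boxtimes Y_0$ the map $\rho$ restricts to a continuous (indeed linear) map $|X_0\boxtimes Y_0|\to|X_0|\times|Y_0|$, while $\varsigma$ restricts to a continuous map $|X_0|\times|Y_0|\to|X_0\boxtimes Y_0|$ --- since $X_0$ is locally finite, or simply because on finite complexes the bilinear formula is manifestly continuous --- so $\varsigma\circ\rho$ is continuous on $|K|$.

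Finally, for the simplicial-approximation claim I would invoke Theorem~\ref{thmaltdefSA}: the identity fixes each vertex $(v,w)$ of $X\boxtimes Y$, so it is enough to check $(\varsigma\circ\rho)(\Stv{(v,w)})\subset\Stv{(v,w)}$. If $x\in\Stv{(v,w)}$ then $(v,w)\in\mu_0=\mathrm{supp}(x)$, so $v\in\pi_X(\mu_0)$ and $w\in\pi_Y(\mu_0)$, whence by the formula the $(v,w)$-coordinate $a_v b_w$ of $(\varsigma\circ\rho)(x)$ is positive, i.e.\ $(\varsigma\circ\rho)(x)\in\Stv{(v,w)}$. (Alternatively, directly from Definition~\ref{defSA}: the unique open simplex containing $(\varsigma\circ\rho)(x)$ is $\Op{\mu}$ with $\mu=\pi_X(\mu_0)\times\pi_Y(\mu_0)\supseteq\mu_0$, so $x\in|\mu|$.) I expect the only step needing real care is the continuity argument --- specifically, the observation that every finite subcomplex of $X\boxtimes Y$ embeds in a product of finite subcomplexes, which is what lets the general failure of $\varsigma$ to be continuous do no harm; the algebraic identities and the star-neighborhood check are routine once the coordinate formula is in hand.
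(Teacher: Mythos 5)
Your proposal is correct and follows essentially the same route as the paper: the explicit barycentric formula $\sum_{i,j} t_i t_j (v_i,w_j)$ for $\varsigma\circ\rho$, continuity via the union topology (checked simplexwise/on finite subcomplexes), and the observation that the image lies in the open simplex $\pi_X(\mu_0)\times\pi_Y(\mu_0)$ containing the support, which is exactly the paper's simplex $\sigma=\{(v_i,w_j)\}$. Your primary use of the star-neighborhood criterion (Theorem~\ref{thmaltdefSA}) instead of Definition~\ref{defSA} is only a cosmetic variant, and indeed your parenthetical alternative is the paper's argument verbatim.
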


\begin{proof}
An arbitrary element of $|X|\times |Y|$ is of the form
\[
(x,y)=(s_{0}v_{0}+\dotsb +s_{m}v_{m},t_{0}w_{0}+\dotsb +t_{n}w_{n})
\]
where
$\{v_{i}\}$ is an $m$-simplex of $X$, $\{w_{j}\}$ is an $n$-simplex of
$Y$, and $\sum s_{i}=\sum t_{j}=1$, $s_{i} > 0$, $t_{j} > 0$.
This element maps to $\sum 
s_{i}t_{j}(v_{i},w_{j})$ in $|X\boxtimes Y|$.  The map induced by the
projection $X\boxtimes Y\to X$ takes this to the element 
\[
s_{0}(t_{0}+\dotsb +t_{n})v_{0}+\dotsb +s_{m}(t_{0}+\dotsb
+t_{n})v_{m}
=s_{0}v_{0}+\dotsb +s_{m}v_{m}=x
\]
in $|X|$ and likewise the map induced by the projection $X\boxtimes
Y\to Y$ takes this element to $y$ in
$|Y|$.  It follows that $\rho(\varsigma(x,y))=(x,y)$ and therefore
that $\rho \circ \varsigma$ is the identity on
$|X|\times |Y|$. 

On the other hand, an arbitrary element of $|X\boxtimes Y|$ is of the
form 
\[
z=t_{0}(v_{0},w_{0})+\dotsb +t_{n}(v_{n},w_{n})
\]
with $\sum t_{i}=1$ and $t_{i}>0$
(with the vertices $(v_{i},w_{i})$ distinct, though neither the
vertices $v_{i}$ nor the vertices $w_{i}$ need be distinct).  The
composite $\varsigma \circ \rho$ takes $z$ to the element $\sum
t_{i}t_{j}(v_{i},w_{j})$ of $|X\boxtimes Y|$; in particular, since
$|X\boxtimes Y|$ has the union topology, the
composite $\varsigma \circ \rho$ is continuous even when $\varsigma$
is not.  Let $\sigma
=\{(v_{i},w_{j})\}$, a $k$-simplex of $X\boxtimes Y$ for some $k\leq
(n+1)^{2}-1$.  Since for all $i,j$, $t_{i}t_{j} > 0$, we have that
$\varsigma (\rho (z))\in \Op\sigma$.  Since $(v_{i},w_{i})\in
\sigma$ for all $i,j$, it follows that $z\in |\sigma|$.  This shows that the
identity on $X\boxtimes Y$ is a simplicial approximation of the
composite $\varsigma \circ \rho$.
\end{proof}

We also use the following extension of $\varsigma$.

\begin{thm}\label{thmdefzeta}
Let $X'$ be a subdivision of $X$ and $Y'$ a subdivision of $Y$.  The
map $\varsigma$ extends to a continuous map 
\[
\zeta \colon |X'\boxtimes
Y'|\to |X\boxtimes Y|,
\]
which sends every geometric simplex to a set of diameter $\leq
\mesh(X')+\mesh(Y')$.  Moreover, given $a\colon X'\to X$ and $b\colon
Y'\to Y$ simplicial approximations to the homeomorphisms $|X'|\iso
|X|$ and $|Y'|\iso |Y|$, then $a\boxtimes b$ is a simplicial
approximation to~$\zeta$.
\end{thm}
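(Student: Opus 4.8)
The plan is to define $\zeta$ directly as the composite
\[
\zeta\colon |X'\boxtimes Y'|\xrightarrow{\ \rho\ }|X'|\times|Y'|
\xrightarrow{\,h_{X}\times h_{Y}\,}|X|\times|Y|\xrightarrow{\ \varsigma\ }|X\boxtimes Y|,
\]
where $\rho$ is the projection map for $X'\boxtimes Y'$ and $h_{X}\colon|X'|\iso|X|$, $h_{Y}\colon|Y'|\iso|Y|$ are the subdivision homeomorphisms; since $\rho\circ\varsigma=\id$ for the section $\varsigma\colon|X'|\times|Y'|\to|X'\boxtimes Y'|$ (proved above), precomposing $\zeta$ with that section returns $\varsigma$, which is the sense in which $\zeta$ extends $\varsigma$. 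The map $\rho$ is continuous (as noted above), so the only obstruction to continuity of $\zeta$ is the possible non-continuity of $\varsigma$, and I would handle this by restricting to a single geometric simplex $|\sigma'|$ of $|X'\boxtimes Y'|$. If $\sigma'=\{(v'_{0},w'_{0}),\dotsc,(v'_{n},w'_{n})\}$ with $P'=\{v'_{i}\}\in S'$ and $Q'=\{w'_{i}\}\in T'$, then $\rho$ carries $|\sigma'|$ into $|P'|\times|Q'|$; Lemma~\ref{lemconvex} applied to the convex compact images $h_{X}(|P'|)\subset\bR^{V}$ and $h_{Y}(|Q'|)\subset\bR^{W}$ shows these lie in geometric simplices $|\sigma|$ of $X$ and $|\tau|$ of $Y$; and on $|\sigma|\times|\tau|$ the map $\varsigma$ is a (bilinear, hence continuous) map into $|\sigma\boxtimes\tau|$, the realization of a finite subcomplex of $X\boxtimes Y$. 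Since $|X'\boxtimes Y'|$ carries the union topology with respect to its geometric simplices, $\zeta$ is continuous.

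For the diameter estimate, on the geometric simplex $|\sigma'|$ above we have $\zeta(|\sigma'|)\subset\varsigma\bigl(h_{X}(|P'|)\times h_{Y}(|Q'|)\bigr)$, and $h_{X}(|P'|)$ and $h_{Y}(|Q'|)$ have diameters at most $\mesh(X')$ and $\mesh(Y')$ by definition of mesh size. The key computation is that, for the standard metrics, $\|\varsigma(u,v)\|=\|u\|\cdot\|v\|$ for all $u\in\bR^{V}$ and $v\in\bR^{W}$, because $\varsigma(u,v)$ is the elementary tensor and $\{(v,w)\}$, $\{v\}$, $\{w\}$ are orthonormal bases. Combined with bilinearity, the triangle inequality, and the bound $\|x\|\leq1$ for $x$ in a geometric realization, this gives
\[
\|\varsigma(x_{1},y_{1})-\varsigma(x_{2},y_{2})\|\leq\|x_{1}-x_{2}\|\,\|y_{1}\|+\|x_{2}\|\,\|y_{1}-y_{2}\|\leq\|x_{1}-x_{2}\|+\|y_{1}-y_{2}\|
\]
whenever $x_{i}\in h_{X}(|P'|)$ and $y_{i}\in h_{Y}(|Q'|)$, so $\diam\zeta(|\sigma'|)\leq\mesh(X')+\mesh(Y')$.

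Finally, $a\boxtimes b$ is a well-defined simplicial map $X'\boxtimes Y'\to X\boxtimes Y$ (if $\{v'_{i}\}\in S'$ and $\{w'_{i}\}\in T'$ then $\{a(v'_{i})\}\in S$ and $\{b(w'_{i})\}\in T$, hence $\{(a(v'_{i}),b(w'_{i}))\}$ is a simplex), and I would check that it is a simplicial approximation of $\zeta$ by verifying the star criterion of Theorem~\ref{thmaltdefSA}. The crucial observation is that the $(v,w)$-coordinate of $\varsigma(x,y)$ is the product of the $v$-coordinate of $x$ and the $w$-coordinate of $y$, so $\varsigma^{-1}(\Stv{(v,w)})=\Stv{v}\times\Stv{w}$; consequently $\zeta^{-1}(\Stv{(a(v'),b(w'))})$ consists of the points $z$ whose images under the projections $|X'\boxtimes Y'|\to|X'|$ and $|X'\boxtimes Y'|\to|Y'|$ lie in $h_{X}^{-1}(\Stv{a(v')})$ and $h_{Y}^{-1}(\Stv{b(w')})$ respectively. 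If $z\in\Stv{(v',w')}$ then its image in $|X'|$ has nonzero $v'$-coordinate, hence lies in $\Stv{v'}\subset h_{X}^{-1}(\Stv{a(v')})$ since $a$ is a simplicial approximation of $|X'|\iso|X|$ (Theorem~\ref{thmaltdefSA}), and symmetrically for the $Y'$ factor; thus $\zeta(\Stv{(v',w')})\subset\Stv{(a(v'),b(w'))}$ for every vertex $(v',w')$, and Theorem~\ref{thmaltdefSA} then gives the claim. I expect the continuity step to be the main obstacle — reconciling the union topology with the non-continuous section $\varsigma$, and confirming that restriction to geometric simplices genuinely factors the composite through bilinear maps into finite subcomplexes.
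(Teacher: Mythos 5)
Your proof is correct, but it constructs a genuinely different map than the paper does. The paper takes $\zeta$ to be the \emph{linear} map $\bR^{V'\times W'}\to\bR^{V\times W}$ induced by sending each vertex $(v',w')$ to $\varsigma(v',w')$; linearity makes continuity immediate, reduces the diameter bound to distances between images of vertices, and the simplicial approximation claim is checked directly from Definition~\ref{defSA} by writing out $\zeta(z)$ in coordinates and exhibiting the open simplex $\Op\sigma$ containing it. You instead set $\zeta=\varsigma\circ(h_X\times h_Y)\circ\rho$, which agrees with the paper's map on vertices and satisfies $\zeta\circ\varsigma'=\varsigma$ (so it is an ``extension'' in the same sense), but is not the same map --- it is quadratic in barycentric coordinates on each simplex (e.g.\ for the trivial subdivisions $X'=X$, $Y'=Y$ the paper's $\zeta$ is the identity while yours is $\varsigma\circ\rho$). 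Since the theorem only asserts existence of a continuous extension with the stated properties, this is fine, and your verifications are sound: continuity is correctly salvaged from the possibly discontinuous $\varsigma$ by working simplexwise in the union topology via Lemma~\ref{lemconvex}; the diameter estimate via $\|\varsigma(u,v)\|=\|u\|\,\|v\|$, bilinearity, and $\|x\|\leq 1$ is a clean and slightly more conceptual route than the paper's vertex-distance argument; and verifying the star criterion of Theorem~\ref{thmaltdefSA} using $\varsigma^{-1}(\Stv{(v,w)})=\Stv{v}\times\Stv{w}$ together with the approximation hypotheses on $a$ and $b$ is a nice alternative to the paper's hands-on coordinate check. The one thing worth being conscious of is that $\zeta$ is used again in the proof of Lemma~\ref{lemmain}; the properties actually invoked there (the simplex-diameter bound and the identity $\zeta\circ\varsigma'=\varsigma$ up to the subdivision homeomorphisms) hold for your map as well, so nothing downstream breaks, but you should say explicitly that your $\zeta$ differs from the linear one if both are in play.
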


\begin{proof}
Write $X'=(V',S')$, $Y'=(W',T')$, and recall that by definition
$V'\subset \bR^{V}$ and $W'\subset \bR^{W}$.
The composite map 
\[
\bR^{V'}\times \bR^{W'}\to \bR^{V}\times \bR^{W}\to \bR^{V}\otimes \bR^{W}=\bR^{V\times W}
\]
induces a unique bilinear map $\bR^{V'\times W'}\to \bR^{V\times W}$;
specifically, if 
\begin{gather*}
v'=s_{0}v_{0}+\dotsb +s_{m}v_{m}\in \bR^{V}\\
w'=t_{0}w_{0}+\dotsb +t_{n}w_{n}\in \bR^{W}
\end{gather*}
then $(v',w')\mapsto \sum s_{i}t_{j}(v_{i},w_{j})\in \bR^{V\times
W}$.  To see that this restricts to a map $\zeta$ as in the statement,
consider an arbitrary element 
\[
z=r_{0}(v'_{0},w'_{0})+\dotsb+r_{\ell}(v'_{\ell},w'_{\ell})\in
|X'\boxtimes Y'|,
\]
where $r_{i}>0$ and $\sum r_{i}=1$ (with the
vertices $(v'_{i},w'_{i})$ distinct).  Write 
\begin{gather*}
v'_{i}=s^{i}_{0}v^{i}_{0}+\dotsb +s^{i}_{m_{i}}v^{i}_{m_{i}}\in \bR^{V}\\
w'_{i}=t^{i}_{0}w^{i}_{0}+\dotsb +t^{i}_{n_{i}}w^{i}_{n_{i}}\in \bR^{W}
\end{gather*}
with the usual conventions.  Then $z$ goes to 
\[
\sum_{i,j,k} r_{i}s^{i}_{j}t^{i}_{k} (v^{i}_{j},w^{i}_{k}).
\]
Since $\{v'_{i}\mid i=0,\dotsc,\ell\}$ is a simplex in $X'$, we must
have that $\{v^{i}_{j}\mid i=0,\dotsc,\ell,j=0,\dotsc,m_{i} \}$ is a
simplex in $X$ and likewise $\{w^{i}_{k}\}$ is a simplex in $Y$.  It
follows that $\{(v^{i}_{j},w^{i}_{k})\}$ is a simplex in $X\boxtimes
Y$, and since $\sum r_{i}s^{i}_{j}t^{i}_{k}=1$, we see that $z$ lands
in $|X\boxtimes Y|$.  This defines $\zeta \colon |X'\boxtimes Y'|\to
|X\boxtimes Y|$; note that $\zeta$ is a linear map and in particular
continuous.  Since $\zeta$ is a linear map, the diameter of the 
image of a simplex $\{(v'_{i},w'_{i})\}$ is the supremum (over
$i,j$) of the distances 
\[
d(\zeta(v'_{i},w'_{i}),\zeta(v'_{j},w'_{j})),
\]
which are clearly bounded above by $\mesh(X')+\mesh(Y')$.  For the
statement about simplicial approximations, let 
\[
\sigma =\{(v^{i}_{j},w^{i}_{k})\mid i=0,\dotsc,\ell, j=0,\dotsc,m_{i}, k=0,\dotsc n_{i}\}.
\]
We note that in the formula for $\zeta (z)$ above, we have
$r_{i}s^{i}_{j}t^{i}_{k} > 0$ for all $i,j,k$, which puts $\zeta (z)$
in $\Op{\sigma}$.  Since $a(v'_{i})\in \{v^{i}_{0},\dotsc
v^{i}_{m_{i}}\}$ and $b(w'_{i})\in \{w^{i}_{0},\dotsc
w^{i}_{n_{i}}\}$, we have that $a\boxtimes
b(\{(v'_{i},w'_{i})\})\subset \sigma $, and hence that $|a\boxtimes
b|(z)\in |\sigma|$.  This shows that $a\boxtimes b$ is a simplicial
approximation of $\zeta$.
\end{proof}

\section{Mapping Complexes and Products}

We now turn to the contiguity mapping complex $\CMap(X,Y)$ defined in
the introduction and the mapping space $\TMap(|X|,|Y|)$, the space of
continuous maps from $|X|$ to $|Y|$ with the compact open topology.
(Recall that the compact open topology is the smallest topology
generated by the sets $B_{U,K}$ of maps $f \colon X \to Y$ such that
$f(K) \subset U$, for $K$ compact and $U$ open.  When $Y$ is a metric
space this topology is equivalent to the topology of uniform
convergence on compact sets.)
For the statement of Theorem~\ref{thmapprox}, we need a comparison map
$\Gamma \colon |\CMap(X,Y)|\to \TMap(|X|,|Y|)$, which we construct in
this section.  We study the behavior of this comparison map under the
adjunction with the product of topological spaces (when $X$ is locally
finite) and an analogous adjunction in simplicial complexes. We begin
with the construction of the comparison map.  

\begin{cons}\label{conscomp}
Let $X$ and $Y$ be simplicial complexes and let $F$ denote the set of
vertices of $\CMap(X,Y)$, i.e., the set of maps of simplicial
complexes $X\to Y$.  A typical
element $\phi$ of $|\CMap(X,Y)|$ then is of the form
$t_{0}f_{0}+\dotsb t_{n}f_{n}\in \bR^{F}$ where $f_{0},\dotsc,f_{n}$
are mutually contiguous and $\sum t_{i}=1$, $t_{i}>0$.  Define $\Gamma\phi$ to be
the function $|X|\to |Y|$ that takes an element
$x=s_{0}v_{0}+\dotsb +s_{m}v_{m}$ of $|X|$ to the element $\sum
t_{i}s_{j}f_{i}(v_{j})$ of $|Y|$.  We note that $\Gamma \phi$ is a
linear map $|X|\to |Y|$ and so in particular $\Gamma \phi$ is a
continuous map.  This then constructs a function $\Gamma \colon |\CMap(X,Y)|\to
\TMap(|X|,|Y|)$.  To see that $\Gamma$ is continuous, it suffices to
check it on each simplex, where it is obvious from the formula.
\end{cons}

The product of topological spaces and the mapping space fit together
into an adjunction as follows.  For spaces $X,Y,Z$, a continuous map
$\phi \colon X\times Z\to Y$ induces a continuous map $\tilde \phi
\colon Z\to \TMap(X,Y)$ defined by 
\[
(\tilde\phi(z))(x)=\phi(x,z).
\]
A standard fact from topology~\cite[46.11]{Munkres} is that when $X$
is a locally compact Hausdorff space, this defines a bijection between
the set of continuous maps $X\times Z\to Y$ and the set of continuous
maps $Z\to \TMap(X,Y)$.  In the case of concern to us, this
specializes to the following proposition. 

\begin{prop}\label{expts}
Let $X$, $Y$, and $Z$ be simplicial complexes and assume that $X$ is
locally finite.  Continuous maps $|Z|\to \TMap(|X|,|Y|)$ are in
one-to-one correspondence with continuous maps $|X|\times |Z|\to |Y|$.
\end{prop}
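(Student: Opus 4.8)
The plan is to deduce Proposition~\ref{expts} from the general exponential law quoted immediately before it, namely that for a locally compact Hausdorff space $W$ the assignment $\phi\mapsto\tilde\phi$ with $(\tilde\phi(z))(x)=\phi(x,z)$ is a bijection between continuous maps $W\times |Z|\to |Y|$ and continuous maps $|Z|\to\TMap(W,|Y|)$ (\cite[46.11]{Munkres}). Taking $W=|X|$ gives exactly the asserted correspondence, so the whole content of the proposition is the verification that $|X|$ is locally compact Hausdorff whenever $X$ is locally finite.

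For the Hausdorff property no hypothesis on $X$ is needed: the standard metric on $|X|$ induces a metrizable (hence Hausdorff) topology, and as recorded in Section~2 this topology is coarser than (or equal to) the union topology actually carried by $|X|$; a space whose topology refines a Hausdorff topology is Hausdorff. For local compactness I would argue as follows. Fix $x\in|X|$ and let $\sigma$ be the unique simplex of $X$ with $x\in\Op\sigma$. Local finiteness forces each vertex of $\sigma$, and therefore $\sigma$ itself, to be contained in only finitely many simplices of $X$; let $A$ be the subcomplex spanned by those finitely many simplices together with all their faces. Then $A$ is a finite complex, so $|A|$ is compact, and by Notation~\ref{notSimp} the open set $\Sta{\sigma}=\bigcup_{\tau\supset\sigma}\Op{\tau}$ satisfies $x\in\Op\sigma\subset\Sta{\sigma}\subset|A|$, since every simplex $\tau\supset\sigma$ is one of the finitely many simplices generating $A$. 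Thus $|A|$ is a compact neighborhood of $x$, and $|X|$ is locally compact. (We only need weak local compactness — each point has a compact neighborhood — which is precisely what is required for the cited exponential law.)

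The only real obstacle is this local-compactness step, and the subtlety there is purely topological bookkeeping: one must use the \emph{open} star $\Sta{\sigma}$ (which is open in the union topology on $|X|$) rather than a closed star, and one must note that it is exactly local finiteness that guarantees a surrounding finite — hence compact — subcomplex exists. Once $|X|$ is known to be locally compact Hausdorff, there is nothing left to do: the bijection is the cited statement verbatim, realized by the correspondence $\phi\leftrightarrow\tilde\phi$ written down in the paragraph preceding the proposition.
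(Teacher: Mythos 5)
Your proof is correct and takes essentially the same route as the paper, which offers no separate argument but simply specializes the exponential law of \cite[46.11]{Munkres} to $|X|$, leaving implicit the fact that the realization of a locally finite complex is locally compact Hausdorff. Your verification of that fact (Hausdorffness because the union topology refines the metric topology, and local compactness via $x\in\Sta{\sigma}\subset|A|$ with $A$ a finite subcomplex) is exactly the missing bookkeeping and is carried out correctly.
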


We have an analogous relationship between the product of simplicial
complexes of the previous section and the contiguity mapping complex,
but without the locally finite hypothesis.

\begin{thm}\label{expsc}
Let $X$, $Y$, and $Z$ be simplicial complexes.   Maps of simplicial
complexes $Z\to \CMap(X,Y)$ are in one-to-one correspondence with maps
of simplicial complexes $X\boxtimes Z\to Y$.
\end{thm}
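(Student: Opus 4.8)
The plan is to write down the bijection explicitly and to check it is well-defined in both directions purely by unwinding Definition~\ref{defprod} and Definition~\ref{defcmap}. Write $X=(V,S)$, $Z=(U,R)$, $Y=(W,T)$. A map of simplicial complexes $g\colon Z\to \CMap(X,Y)$ is a function $g$ from $U$ to the set of simplicial maps $X\to Y$ that sends each simplex of $Z$ to a collection of mutually contiguous maps; from such a $g$ I would define $h_{g}\colon V\times U\to W$ by $h_{g}(v,u)=(g(u))(v)$. Conversely, from a map of simplicial complexes $h\colon X\boxtimes Z\to Y$ I would define $g_{h}$ by $(g_{h}(u))(v)=h(v,u)$. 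On vertex sets these two assignments are visibly mutually inverse, so it remains only to check that $h_{g}$ is a simplicial map $X\boxtimes Z\to Y$ whenever $g$ is simplicial, and that $g_{h}$ is a simplicial map $Z\to \CMap(X,Y)$ whenever $h$ is.

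For the first, let $\{(v_{0},u_{0}),\dotsc,(v_{n},u_{n})\}$ be a simplex of $X\boxtimes Z$, so that $\{v_{0},\dotsc,v_{n}\}\in S$ and $\{u_{0},\dotsc,u_{n}\}\in R$. Since $g$ is simplicial, $g(u_{0}),\dotsc,g(u_{n})$ are mutually contiguous, so applying the contiguity condition to the simplex $\{v_{0},\dotsc,v_{n}\}$ of $X$ shows that $\{(g(u_{i}))(v_{j})\mid 0\le i,j\le n\}$ is a simplex of $Y$; its nonempty subset $\{(g(u_{i}))(v_{i})\mid 0\le i\le n\}=\{h_{g}(v_{i},u_{i})\}$ is then also a simplex of $Y$ by Definition~\ref{defSC}, so $h_{g}$ is simplicial.

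For the second, I would first observe that $g_{h}(u)$ is a simplicial map $X\to Y$ for each vertex $u$: for $\{v_{0},\dotsc,v_{m}\}\in S$, the subset $\{(v_{0},u),\dotsc,(v_{m},u)\}$ of $V\times U$ is a simplex of $X\boxtimes Z$ because its $V$-coordinates form the simplex $\{v_{0},\dotsc,v_{m}\}$ and its $U$-coordinates form the vertex $\{u\}$, hence $\{h(v_{j},u)\}=\{(g_{h}(u))(v_{j})\}$ is a simplex of $Y$. Next, given a simplex $\{u_{0},\dotsc,u_{n}\}$ of $Z$ and any simplex $\{v_{0},\dotsc,v_{m}\}$ of $X$, the set $\{(v_{j},u_{i})\mid 0\le i\le n,\ 0\le j\le m\}$ is a simplex of $X\boxtimes Z$, since its two coordinate projections are $\{v_{0},\dotsc,v_{m}\}\in S$ and $\{u_{0},\dotsc,u_{n}\}\in R$; therefore $\{h(v_{j},u_{i})\}=\{(g_{h}(u_{i}))(v_{j})\}$ is a simplex of $Y$, which says precisely that $g_{h}(u_{0}),\dotsc,g_{h}(u_{n})$ are mutually contiguous. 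Thus $g_{h}$ sends simplices of $Z$ to simplices of $\CMap(X,Y)$, and the bijection is established.

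There is no real obstacle here beyond bookkeeping; the only points that require care are the use of the defining property of $X\boxtimes Z$ in the form that a finite subset of $V\times U$ is a simplex exactly when each of its two coordinate projections is a simplex (valid because Definition~\ref{defprod} permits repeated vertices), and the selection of the correct test simplices of $X\boxtimes Z$ — the ones with constant $Z$-coordinate to see that each $g_{h}(u)$ is simplicial, and the full rectangular grids $\{(v_{j},u_{i})\}$ to capture the mutual-contiguity condition.
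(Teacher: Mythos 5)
Your proof is correct and follows essentially the same route as the paper's: both arguments reduce everything to vertex maps $V_X\times V_Z\to V_Y$ and verify that contiguity of the maps $g(u_i)$ on a simplex of $X$ is equivalent to the grid $\{(v_j,u_i)\}$ (and hence any simplex of $X\boxtimes Z$ it contains) mapping to a simplex of $Y$. The paper phrases this as an equality of two subsets of the set of all vertex maps rather than as explicit mutually inverse assignments, but the substance is identical.
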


\begin{proof}
Write $X=(V_{X},S_{X})$ and similarly for $Y$ and $Z$.  Since a map of
simplicial complexes is in particular a map of vertex sets, using the
exponential law for products and functions of sets, we can identify
both the set $A$ of maps of simplicial complexes $X\to \CMap(Y,Z)$ and
the set $B$ of maps of simplicial complexes $X\boxtimes Y\to Z$ as
subsets of the set $C$ of maps $V_{X}\times V_{Z}\to V_{Y}$.  For any $f\in C$,
for any $w\in V_{Z}$, write $f_{w}\colon V_{X}\to V_{Y}$ for the map $f(-,w)$,
and let $\sigma =\{v_{0},\dotsc,v_{m}\}$ and $\tau
=\{w_{0},\dotsc,w_{n}\}$ denote arbitrary simplices of $X$ and $Z$,
respectively.  Starting with $f\in A$, we must have that
$\{f(v_{i},w_{j})\mid i,j\}$ is a simplex of $Y$ since
$f_{w_{0}},\dotsc,f_{w_{m}}$ are mutually contiguous maps of
simplicial complexes.  Since $\sigma$ and $\tau$ were arbitrary, we
see that $f$ is a simplicial map $X\boxtimes Z\to Y$ and hence $f\in
B$.  On the other hand, if we start with $f\in B$, then we know that
$\{f(v_{i},w_{j})\mid i,j\}$ is a simplex of $Y$ and so in particular
for each $j$, $\{f(v_{i},w_{j})\mid i\}$ is a simplex of $Y$.  Since
$\sigma$ was arbitrary this shows that each $f_{w_{i}}$ is a 
map of simplicial complexes $X\to Y$ and that the maps
$f_{w_{0}},\dotsc,f_{w_{m}}$ are mutually contiguous.  Since $\tau$
was arbitrary, this shows that $f$ is a map of simplicial complexes
$Z\to \CMap(X,Y)$ and hence $f\in A$.
\end{proof}

We can relate the correspondences in Proposition~\ref{expts} and
Theorem~\ref{expsc}.  The following proposition is clear from the
formula for $\varsigma$ in the previous section and the formula for
the comparison map $\Gamma$ in Construction~\ref{conscomp}. 

\begin{prop}\label{propadjcomp}
Let $X$, $Y$, and $Z$ be simplicial complexes with $X$ locally finite.
If $f\colon X\boxtimes Z\to Y$ and  $g\colon Z\to \CMap(X,Y)$
correspond under the adjunction of Theorem~\ref{expsc}, then the
composite maps
\begin{gather*}
|X|\times |Z|\overto{\varsigma}|X\boxtimes Z|\overto{|f|}|Y|,\\
|Z|\overto{|g|} |\CMap(X,Y)|\overto{\Gamma} \TMap(|X|,|Y|)
\end{gather*}
correspond under the adjunction of Proposition~\ref{expts}.
\end{prop}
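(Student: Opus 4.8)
The plan is to unwind all four maps on barycentric coordinates and observe that both composites are given by the same ``bilinear extension'' formula, after which the equality of the two adjoint transposes is forced by the definition of the adjunction in Proposition~\ref{expts}. So I would begin with an arbitrary point $z\in|Z|$, written $z=u_{0}w_{0}+\dotsb+u_{p}w_{p}$ with $\{w_{0},\dotsc,w_{p}\}$ a simplex of $Z$, $u_{k}>0$, $\sum u_{k}=1$, and an arbitrary point $x\in|X|$, written $x=s_{0}v_{0}+\dotsb+s_{m}v_{m}$ in the same way.

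For the first composite, the formula for $\varsigma$ gives $\varsigma(x,z)=\sum_{j,k}s_{j}u_{k}(v_{j},w_{k})\in|X\boxtimes Z|$, and since $f\colon X\boxtimes Z\to Y$ is a map of simplicial complexes, $|f|$ is the linear map it determines on vertices, so $|f|(\varsigma(x,z))=\sum_{j,k}s_{j}u_{k}\,f(v_{j},w_{k})\in|Y|$. For the second composite, recall that under the correspondence of Theorem~\ref{expsc} the vertex $g(w_{k})\in\CMap(X,Y)$ is exactly the simplicial map $f_{w_{k}}=f(-,w_{k})\colon X\to Y$, and that since $g$ is simplicial, $\{g(w_{0}),\dotsc,g(w_{p})\}$ is a simplex of $\CMap(X,Y)$ and $|g|$ is linear, so $|g|(z)=\sum_{k}u_{k}f_{w_{k}}\in|\CMap(X,Y)|$ (a point of the simplex spanned by the mutually contiguous maps $f_{w_{0}},\dotsc,f_{w_{p}}$, with coefficients of coinciding maps combined). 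Applying $\Gamma$ and using the formula of Construction~\ref{conscomp}, $\bigl(\Gamma(|g|(z))\bigr)(x)=\sum_{k,j}u_{k}s_{j}\,f_{w_{k}}(v_{j})=\sum_{j,k}s_{j}u_{k}\,f(v_{j},w_{k})$, which agrees termwise with $|f|(\varsigma(x,z))$.

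Thus for every $z$ the continuous map $\Gamma(|g|(z))\colon|X|\to|Y|$ sends $x$ to $|f|(\varsigma(x,z))$; since the adjunction of Proposition~\ref{expts} sends $|f|\circ\varsigma\colon|X|\times|Z|\to|Y|$ to the map $z\mapsto\bigl(x\mapsto|f|(\varsigma(x,z))\bigr)$, this says exactly that $\Gamma\circ|g|$ and $|f|\circ\varsigma$ correspond under that adjunction. The computations are all elementary; the only points requiring a little care are the linearity of $|f|$ and of the restriction of $\Gamma$ to each simplex of $|\CMap(X,Y)|$ (both noted in Construction~\ref{conscomp} and the discussion of geometric realization), the trivial bookkeeping needed when the maps $f_{w_{k}}$ are not pairwise distinct, and the fact that local finiteness of $X$ is invoked only so that $\varsigma$ is continuous and the adjunction of Proposition~\ref{expts} is available. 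I do not anticipate any genuine obstacle: the entire content is that the two composites are both visibly the bilinear extension $(x,z)\mapsto\sum s_{j}u_{k}\,f(v_{j},w_{k})$, so they coincide.
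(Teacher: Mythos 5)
Your proposal is correct and is precisely the coordinate computation the paper has in mind: the paper states the proposition is ``clear from the formula for $\varsigma$ \dots and the formula for the comparison map $\Gamma$,'' and your unwinding of both composites to the common bilinear expression $\sum s_{j}u_{k}\,f(v_{j},w_{k})$ is exactly that verification, with the appropriate care about repeated vertices and about local finiteness being used only for continuity of $\varsigma$ and the validity of the adjunction of Proposition~\ref{expts}.
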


Proposition~\ref{expts} extends to an exponential
correspondence~\cite[1.2.8]{Spanier} (tensor and cotensor adjunctions),
\[
\TMap(|X|\times |Y|,|Z|)\iso \TMap(|Z|,\TMap(|X|,|Y|))
\]
when $X$ is locally finite.  We also have the corresponding extension
of Theorem~\ref{expsc} (for arbitrary $X$).

\begin{prop}
The correspondence of Theorem~\ref{expsc} is the map on vertex sets of
an isomorphism simplicial complexes $\CMap(X\times Y,Z)\iso \CMap(Z,\CMap(X,Y))$.
\end{prop}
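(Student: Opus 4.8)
The plan is to take the bijection on vertex sets as given by Theorem~\ref{expsc} and to check only that it identifies simplices with simplices; once that is done, the bijection and its inverse are both simplicial, so it is the vertex map of an isomorphism of simplicial complexes (and, as the explicit formula makes plain, natural in $X$, $Y$, $Z$). Recall from the proof of Theorem~\ref{expsc} that a simplicial map $g\colon X\boxtimes Z\to Y$ corresponds to the simplicial map $h\colon Z\to\CMap(X,Y)$ with $h(w)(v)=g(v,w)$. So what I must show is: a finite collection $g_0,\dots,g_k\colon X\boxtimes Z\to Y$ is mutually contiguous if and only if the corresponding collection $h_0,\dots,h_k\colon Z\to\CMap(X,Y)$ is mutually contiguous.

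First I would unwind both conditions into statements about the $g_i$. By Definitions~\ref{defcmap} and~\ref{defprod}, the $g_i$ are mutually contiguous exactly when, for every set of vertex pairs $\{(v_0,w_0),\dots,(v_m,w_m)\}$ whose set of first coordinates is a simplex of $X$ and whose set of second coordinates is a simplex of $Z$ (i.e.\ every simplex of $X\boxtimes Z$), the set $\{g_i(v_j,w_j)\mid i,j\}$ is a simplex of $Y$. Unwinding the nested contiguity, the $h_i$ are mutually contiguous exactly when, for every simplex $\{w_0,\dots,w_n\}$ of $Z$, the maps $h_i(w_l)\colon X\to Y$ are mutually contiguous, i.e.\ for every simplex $\{v_0,\dots,v_m\}$ of $X$ the set $\{h_i(w_l)(v_j)\mid i,l,j\}=\{g_i(v_j,w_l)\mid i,l,j\}$ is a simplex of $Y$. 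The essential difference is that the $h_i$-condition ranges over a simplex of $X$ and, independently, a simplex of $Z$, whereas the $g_i$-condition pairs up their vertices.

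The two implications are then short. If the $h_i$ are mutually contiguous and $\{(v_0,w_0),\dots,(v_m,w_m)\}$ is a simplex of $X\boxtimes Z$, then its first and second coordinate sets are simplices of $X$ and $Z$, so by the $h_i$-condition $\{g_i(v_j,w_l)\mid i,j,l\}$ (indices running over $\{0,\dots,m\}$) is a simplex of $Y$; its nonempty subset $\{g_i(v_j,w_j)\mid i,j\}$ (take $l=j$) is then also a simplex of $Y$, by Definition~\ref{defSC}, so the $g_i$ are mutually contiguous. Conversely, if the $g_i$ are mutually contiguous and $\sigma=\{v_0,\dots,v_m\}$, $\tau=\{w_0,\dots,w_n\}$ are simplices of $X$ and $Z$, then the full grid $\{(v_j,w_l)\mid j,l\}$ has first coordinate set $\sigma$ and second coordinate set $\tau$, so it is a simplex of $X\boxtimes Z$ by Definition~\ref{defprod} (this is the phenomenon behind $\Delta[m]\boxtimes\Delta[n]\iso\Delta[(m+1)(n+1)-1]$); applying the $g_i$-condition to this simplex yields precisely that $\{g_i(v_j,w_l)\mid i,j,l\}$ is a simplex of $Y$, which is the $h_i$-condition.

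The hard part, such as it is, is not deep: it is the index bookkeeping of the previous paragraph, reconciling the ``diagonally paired'' contiguity condition for maps out of $X\boxtimes Z$ with the ``independently indexed'' condition produced by the nested complex $\CMap(Z,\CMap(X,Y))$. The structural inputs that make this work are exactly that a product of a simplex of $X$ with a simplex of $Z$ is again a simplex of $X\boxtimes Z$ — which lets one pass from the single product simplex $\{(v_j,w_j)\}$ up to the full grid $\{(v_j,w_l)\}$ — together with the fact that nonempty subsets of simplices are simplices, which lets one pass back down.
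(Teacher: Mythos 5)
Your proposal is correct and is essentially the paper's argument, merely written out in full: the paper's proof simply observes that a collection $\{f_i\}$ of vertex maps $V_X\times V_Z\to V_Y$ is a simplex of either $\CMap(X\boxtimes Z,Y)$ or $\CMap(Z,\CMap(X,Y))$ precisely when for every simplex $\sigma=\{v_j\}$ of $X$ and every simplex $\tau=\{w_k\}$ of $Z$ the set $\{f_i(v_j,w_k)\}$ is a simplex of $Y$. Your two implications (using that the full grid $\{(v_j,w_k)\}$ is a simplex of $X\boxtimes Z$ and that nonempty subsets of simplices are simplices) are exactly the bookkeeping the paper leaves implicit in that identification.
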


\begin{proof}
Let $\alpha=(f_{0},\dotsc,f_{\ell})$ be a finite set of maps $V_{X}\times V_{Z}\to V_{Y}$.
Then $\alpha$ is a simplex of $\CMap(X\times Y,Z)$ or
$\CMap(Z,\CMap(X,Y))$ precisely when for every simplex
$\sigma=\{v_{0},\dotsc,v_{m}\}$ of $X$ and every simplex
$\tau=\{w_{0},\dotsc,w_{n}\}$ of $Z$, the subset
$\{ f_{i}(v_{j},w_{k})\}$
of $V_{Y}$ is a simplex of $Y$.
\end{proof}

\section{A Reduction of Theorem~\ref{thmapprox}}\label{secredux}

In this section, we begin the proof of Theorem~\ref{thmapprox} and
reduce it to a statement about homotopy groups which we prove in the
next section.  For this section and the next, let $X$ and $Y$ be
simplicial complexes with $X$ finite.  Let $X_{0}=X$ and inductively
choose and fix a subdivision $X_{n+1}$ of $X_{n}$ and a simplicial
approximation $f_{n}\colon X_{n+1}\to X_{n}$ of the given
homeomorphism $|X_{n+1}|\iso |X_{n}|$.  We assume that viewed as
subdivisions of $X$, $\mesh{X_{n}}$ tends to zero. This is the setup
in the hypothesis of Theorem~\ref{thmapprox}.

First note that the maps $f_{n}^{*}\colon \CMap(X_{n},Y)\to \CMap(X_{n+1},Y)$ are (up
to isomorphism) inclusions of subcomplexes: since the map $X_{n+1}\to
X_{n}$ is a surjection on vertex sets, a map of simplicial complexes
$X_{n}\to Y$ is completely determined by its composite $X_{n+1}\to
X_{n}\to Y$.  As a consequence, we see in particular that $\bigcup
|\CMap(X_{n},Y)|$ is a CW complex.  Another closely related CW complex
is the telescope, constructed as follows.

\begin{cons}
For a continuous map $\phi \colon A\to B$ between topological spaces
$A$ and $B$, the \emph{mapping cylinder} $M\phi$ is the space
$(A\times [0,1])\cup_{A}B$ obtained by gluing $A\times
[0,1]$ and $B$ along the inclusion of $A$ in $A \times [0,1]$ as $A\times
\{1\}$ and the map $\phi \colon A\to B$.  (This gluing is
characterized by the universal property given by a description as the
pushout  
\[
\xymatrix{
A \ar[r]^-{\id \times \{1\}} \ar[d]_{\phi} & A \times [0,1] \ar[d] \\
B \ar[r] & M\phi.
}
\]
in the category of spaces.)

For a sequence of continuous
maps $\phi_{0}\colon A_{0}\to A_{1}$, $\phi_{1}\colon A_{1}\to A_{2}$,
\dots, the \emph{telescope} $\Tel \phi_{n}$ is the space
\[
M\phi_{0}\cup_{A_{1}}M\phi_{1}\cup_{A_{2}}M\phi_{2}\cup_{A_{3}}\dotsb,
\]
where $M\phi_{n}$ and $M\phi_{n+1}$ are glued together along the
inclusions of $A_{n}$ in $M \phi_{n}$ and in $M\phi_{n+1}$ (as
$A_{n+1}\times \{0\}$). 
\end{cons}

Let $A_{n}=|\CMap(X_{n},Y)|$, let $\phi_{n}=|f_{n}^{*}|$ (the geometric
realization of the map $f_{n}^{*}\colon \CMap(X_{n},Y)\to \CMap(X_{n+1},Y)$ induced by
$f_{n}$) and let $F=\Tel \phi_{n}$. Collapsing the
intervals, we obtain a map $F\to \bigcup A_{n}$.

\begin{prop}
The map $F=\Tel |f_{n}^{*}|\to \bigcup |\CMap(X_{n},Y)|$ is a homotopy equivalence.
\end{prop}

\begin{proof}
This is a standard argument from homotopy theory that just uses the
fact that each $f_{n}^{*}$ is (up to isomorphism) the inclusion of a
subcomplex.  Write
$F_{0}\subset F_{1}\subset \dotsb \subset F$ for the subspaces 
\begin{align*}
F_{0}&=A_{0},\\
F_{1}&=M\phi_{0},\\
F_{2}&=M\phi_{0}\cup_{A_{1}}M\phi_{1},\\
&\vdots
\end{align*}
Note that there is a canonical
inclusion $a_{n}$ of $A_{n}$ in $F_{n}$ via its inclusion as the back face
of $M\phi_{n}$.  We
inductively construct maps $b_{n}\colon A_{n}\to F_{n}$ that are
homotopic to the maps $a_{n}$ but with the property that 
$b_{n}$ restricts on $A_{n-1}$ to $b_{n-1}\colon A_{n-1}\to
F_{n-1}\subset F_{n}$. We start with $b_{0}=a_{0}=\id\colon A_{0}\to
F_{0}$.  For $n>0$, the fact that
$f_{n-1}^{*}$ is (up to isomorphism) the inclusion of a
subcomplex implies that $M\phi_{n-1}=M|f_{n-1}^{*}|$ is a deformation retract of
$A_{n}\times [0,1]$, q.v.~\cite[3.2.4]{Spanier}.  It is slightly more
convenient to regard 
\[
(A_{n-1}\times [-1,1])\cup_{A_{n-1}}A_{n}\subset A_{n}\times [-1,1]
\]
as a deformation retract.  Choosing a retraction $R$, let 
\[
r\colon
A_{n}\to (A_{n-1}\times [-1,1])\cup_{A_{n-1}}A_{n}
\]
be its
restriction to $A_{n}\times \{-1\}\subset A_{n}\times [-1,1]$.  Let
$b_{n}$ be the map obtained as the composite of $r$ with the map
\[
(A_{n-1}\times [-1,1])\cup_{A_{n-1}}A_{n}=(A_{n-1}\times [-1,0])\cup_{A_{n-1}}M\phi_{n-1}\to F_{n}
\]
that on the $A_{n-1}\times [-1,0]$ runs the homotopy from $b_{n-1}$ to
$a_{n-1}$ and on the $M\phi_{n-1}$ piece is the inclusion
$M\phi_{n-1}\subset F_{n}$.
By construction, the restriction of $b_{n}$ to $A_{n-1}$ is $b_{n-1}$,
and the map $R$ induces a homotopy $H_{n}$ from $b_{n}$ to $a_{n}$.
Putting the maps $b_{n}$ together, we get a map $b\colon \bigcup A_{n}\to F$.
Note that the composite of the collapse map $c\colon F\to \bigcup A_{n}$ and the
homotopy $H_{n}$ restricted $A_{n-1}$ is $c\circ H_{n-1}$ for 
the first half and the constant homotopy on the inclusion for the
second half.  Therefore, reparametrizing the $H_{n}$'s, we can fit
these homotopies together to be a homotopy from the composite map
$c\circ b$ on $\bigcup
A_{n}$ to the identity. Similarly, we can use the deformation
retractions $R$ to construct the homotopy between $b\circ c$ and the
identity on $F$.
\end{proof}

To define a map out of $\bigcup A_{n}$, we need maps out of each
$A_{n}$ that are compatible as $n$ varies.  The purpose of the
telescope is that to define a map out of $F$, we only need maps out
of each $A_{n}$ and homotopies that make them compatible as $n$
varies.

\begin{cons}
We construct the continuous map $\Phi \colon F\to \TMap(|X|,|Y|)$ by
specifying compatible maps on each piece $M\phi_{n}$.  On the
$A_{n}\times [0,1]$ part of $M\phi_{n}$, $\Phi$ does the homotopy (induced by
the homotopy of Proposition~\ref{propsaho}) from the map
\[
A_{n}=|\CMap(X_{n},Y)|\overto{\Gamma}\TMap(|X_{n}|,|Y|)\iso |\TMap(|X|,|Y|)|,
\]
to the map
\begin{multline*}
A_{n}=|\CMap(X_{n},Y)|\overto{\Gamma}\TMap(|X_{n}|,|Y|)\overto{|f_{n}|^{*}}\\
|\TMap(|X_{n+1}|,|Y|)| \iso |\TMap(|X|,|Y|)|
\end{multline*}
where the last homeomorphisms are induced by the subdivision
homeomorphisms $|X_{n}|\iso |X|$ and $|X_{n+1}|\iso |X|$, respectively.
On the $A_{n+1}$ part of $M\phi_{n}$, we do the map
\[
A_{n+1}=|\CMap(X_{n+1},Y)|\overto{\Gamma}\TMap(|X_{n+1}|,|Y|) \iso |\TMap(|X|,|Y|)|.
\]
These pieces then fit together to define a continuous map $\Phi$ on $F$.
\end{cons}

Theorem~\ref{thmapprox} now reduces to showing that the map $\Phi$ is a homotopy
equivalence.   Since $X$ is finite, Milnor~\cite{MilnorCW} shows that
$\TMap(|X|,|Y|)$ has the homotopy type of a CW complex, and the 
Whitehead Theorem~\cite[7.6.24]{Spanier} shows that a weak equivalence
(see below) between spaces of the homotopy type of a CW complex must
be a homotopy equivalence.  Thus, Theorem~\ref{thmapprox} reduces to
the following theorem.

\begin{thm}\label{thmredux}
The map $\Phi$ is a weak equivalence.
\end{thm}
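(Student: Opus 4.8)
The plan is to show that $\Phi$ induces an isomorphism on all homotopy groups (based at any point, though since everything will reduce to mapping-space statements it suffices to check $\pi_k$ for all $k\ge 0$ at all basepoints). The key observation is that via the telescope $F$, a class in $\pi_k(F)$ is represented by a map from $S^k$ that factors through some finite stage $A_n=|\CMap(X_n,Y)|$, and a homotopy between two such classes factors through some $A_m$ with $m$ large. So the whole argument reduces to a statement comparing $\Gamma\colon|\CMap(X_n,Y)|\to\TMap(|X_n|,|Y|)\iso\TMap(|X|,|Y|)$ with the corresponding map at a later stage. Concretely, I would first reformulate: using Proposition~\ref{expts} (compact test complexes $Z$ such as triangulations of $S^k$ and $S^k\times[0,1]$ are finite, hence locally finite, so the adjunction applies) a map $|Z|\to\TMap(|X|,|Y|)$ is the same as a continuous map $|X|\times|Z|\to|Y|$, and via Theorem~\ref{expsc} and Proposition~\ref{propadjcomp} a map $|Z|\to|\CMap(X_n,Y)|$ landing in a simplex corresponds, up to the map $\varsigma$, to a simplicial map $X_n\boxtimes Z\to Y$. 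Thus $\Phi$ being a weak equivalence becomes: every continuous map $|X|\times|Z|\to|Y|$ is, after subdividing the $X$-coordinate enough, homotopic to one of the form $|X_n\boxtimes Z|\overset{|g|}{\to}|Y|$ precomposed with $\zeta$-type maps, and two such that are homotopic become "simplicially homotopic" at a later stage.

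Second, surjectivity of $\pi_k\Phi$: given $\psi\colon S^k\to\TMap(|X|,|Y|)$, triangulate $S^k$ as a finite complex $Z$ so that $\psi$ is adjoint to a continuous map $\Psi\colon|X|\times|Z|\to|Y|$. Choose $n$ large: the open cover $\{\Psi^{-1}(\Stv w)\}$ of the compact space $|X|\times|Z|$ has a Lebesgue number $\lambda>0$, and since $\mesh(X_n)\to 0$ we may also subdivide $Z$ (compatibly, once) so that the product complex $X_n\boxtimes Z_N$ maps into $|X\boxtimes Z|$ with simplices of diameter $\le\mesh(X_n)+\mesh(Z_N)<\lambda/2$ via the map $\zeta$ of Theorem~\ref{thmdefzeta}; composing with $\rho\varsigma=\mathrm{id}$ and using that the identity is a simplicial approximation to $\varsigma\rho$, Theorem~\ref{thmSA} (applied on $|X_n\boxtimes Z_N|$) produces a simplicial approximation $g\colon X_n\boxtimes Z_N\to Y$ to $\Psi\circ\zeta$. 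Then Proposition~\ref{propsaho} gives a homotopy, and unravelling adjunctions via Proposition~\ref{propadjcomp} shows $g$ corresponds to a map $|Z_N|\to|\CMap(X_n,Y)|$ whose image under $\Phi$ (up to the subdivision identification $|Z_N|\iso|Z|$) is homotopic to $\psi$. Injectivity is the same argument one dimension up: a null-homotopy of $\Phi\circ h$ for $h\colon S^k\to A_n$ is a map $|X|\times|Z\times I|\to|Y|$ restricting to known simplicial data on $|X_n|\times|\partial(Z\times I)|$; apply the relative Simplicial Approximation (the rel-$A$ clause of Proposition~\ref{propsaho}, together with the fact from Section~\ref{secsa} that a simplicial approximation agreeing with an already-simplicial map on a subcomplex restricts to it) to get a simplicial null-homotopy at a large stage $X_m$, which shows $h$ is already null-homotopic after pushing into $A_m\subset F$.

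Third, I would package the dimension-shifting so as to avoid case analysis: state a single lemma that for any finite complex $Z$ and any subcomplex $A\subset Z$, a continuous map $|X|\times|Z|\to|Y|$ whose restriction to $|X|\times|A|$ is (under $\varsigma$) the realization of a simplicial map $X_n\boxtimes A\to Y$ becomes, after further subdivision $X_m$ and compatible subdivision of $Z$ rel $A$, homotopic rel $|X|\times|A|$ to the realization of a simplicial map; then apply it with $(Z,A)=(\mathrm{tri}(S^k),\ast)$ for surjectivity and $(Z,A)=(\mathrm{tri}(D^{k+1}),\partial)$ for injectivity. The one genuinely delicate point is arranging all the subdivisions compatibly: the theorem's hypothesis fixes a \emph{single} tower $X_0\subset X_1\subset\cdots$ with chosen simplicial approximations $f_n$, and I need the output at stage $m$ to be literally a vertex of $\CMap(X_m,Y)$ (or a simplex of it), not a vertex for some ad hoc subdivision; so I must check that a simplicial approximation built for $X_m\boxtimes Z_N$ with $Z_N$ a subdivision of $Z$ really descends to the data $X_m\to\CMap$-complex indexed by the \emph{given} tower, and that the homotopies produced by successive stages glue to the homotopies built into $\Phi$ on the mapping cylinders. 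This bookkeeping — matching "simplicial approximation of $\varsigma\rho$", Theorem~\ref{thmdefzeta}'s $\zeta$, and the telescope's structure maps $|f_n^{*}|$ — is where the real work lies; the homotopy theory itself (compactness $\Rightarrow$ factor through a finite stage, Simplicial Approximation $\Rightarrow$ rigidify, Whitehead) is routine once the combinatorial interfaces are lined up.
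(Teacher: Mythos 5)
Your overall architecture matches the paper's: reduce via the telescope to a colimit statement over the finite stages $A_n=|\CMap(X_n,Y)|$ (this is exactly Lemma~\ref{lemmain}, with $\pi_0$ handled separately as in Lemma~\ref{lemzero}), convert based maps and homotopies into continuous maps $|X|\times|Z|\to|Y|$ via Proposition~\ref{expts}, Theorem~\ref{expsc} and Proposition~\ref{propadjcomp}, and rigidify with a Lebesgue number, the map $\zeta$ of Theorem~\ref{thmdefzeta}, the Simplicial Approximation Theorem, and Proposition~\ref{propsaho}. Up to that point your sketch is the paper's argument.

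The genuine gap is in how you propose to handle the based/relative part, which you yourself flag as ``where the real work lies'' but do not resolve, and your proposed tool does not exist in the form you need. The rel-$A$ clause of Proposition~\ref{propsaho} only applies when $\phi$ and $|f|$ \emph{already} agree on $A$; it does not produce a simplicial approximation agreeing with prescribed simplicial data on a subcomplex. Likewise the restriction proposition from Section~\ref{secsa} requires the restriction of $\phi$ to $|A|$ to be the realization of a simplicial map on the subcomplex \emph{as triangulated in the domain you are approximating}; after you subdivide ($X_n\boxtimes Z'\boxtimes I'$, or even just replace $X$ by $X_n$ at the basepoint), the given boundary data is no longer simplicial --- for instance the restriction over the basepoint is $|g|$ composed with the subdivision homeomorphism $|X_n|\iso|X|$, which does not send the new vertices of $X_n$ to vertices of $Y$, so it is only \emph{approximated} by $g\circ f^n_0$, not equal to a realization. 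Getting an approximation that agrees exactly on a subcomplex is Zeeman's relative simplicial approximation theorem, which the paper neither proves nor cites. The paper avoids it entirely with Theorem~\ref{thmsamc}: any two simplicial approximations of the same map are mutually contiguous, hence span a simplex of the contiguity complex. This supplies an explicit edge path from $b(z)$ to $g\circ f^n_0$ along which to change basepoints (surjectivity), and in the injectivity argument it supplies homotopies correcting the two ends of the approximated homotopy $\eta$ to $(f^n_0)^*(p)$ and $(f^n_0)^*(p')$, after which the resulting basepoint loop is contracted by yet another contiguity comparison with the constant homotopy. Without either Zeeman's theorem or this contiguity device, your injectivity step (``get a simplicial null-homotopy rel boundary at stage $X_m$'') and your based surjectivity step do not go through as written.
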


As we explain, the proof of this theorem consists of
Lemmas~\ref{lemzero} and~\ref{lemmain} below. 

For the map $\Phi$ to be a \emph{weak equivalence} means that it induces a
bijection on path components and for each element $x\in F$ and each
$q\geq 1$, the induced map on homotopy groups 
\[
\pi_{q}(F,x)\to \pi_{q}(\TMap(|X|,|Y|),\Phi(x))
\]
is an isomorphism.  (See \cite[7.2]{Spanier} for a definition of the
homotopy groups $\pi_{q}$.)  We have the following first step in this
direction.

\begin{lem}\label{lemzero}
$\Phi$ is a bijection on path components.
\end{lem}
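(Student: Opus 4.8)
The plan is to show that $\Phi$ induces a surjection and an injection on path components separately. Since $\Phi$ is built out of the maps $\Gamma\colon |\CMap(X_n,Y)|\to \TMap(|X_n|,|Y|)\iso\TMap(|X|,|Y|)$ together with the straight-line homotopies of Proposition~\ref{propsaho}, a path component of $F$ is represented by a vertex of some $\CMap(X_n,Y)$, i.e.\ a simplicial map $g\colon X_n\to Y$, and two such vertices lie in the same component of $F$ exactly when they become connected by edges (contiguities) and by the maps $f_n^*$ after passing to a common further subdivision. On the other side, a path component of $\TMap(|X|,|Y|)$ is a homotopy class of continuous map $|X|\to|Y|$.

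\emph{Surjectivity.} Given a continuous map $\psi\colon |X|\to|Y|$, the open cover $\{\psi^{-1}(\Stv{w})\}$ of the compact space $|X|$ has a Lebesgue number $\lambda>0$, so by Theorem~\ref{thmSA} there is $n$ with $\mesh(X_n)<\lambda/2$ and a simplicial approximation $g\colon X_n\to Y$ of the composite $|X_n|\iso|X|\overto{\psi}|Y|$. By Proposition~\ref{propsaho}, $\Gamma$ applied to the vertex $g$ (namely $|g|$, since the comparison map sends a vertex to its own geometric realization) is homotopic to $\psi$ under the identification $\TMap(|X_n|,|Y|)\iso\TMap(|X|,|Y|)$. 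Hence the class of $\psi$ is hit.

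\emph{Injectivity.} Suppose vertices $g\colon X_m\to Y$ and $h\colon X_{m'}\to Y$ have $\Gamma$-images that are homotopic in $\TMap(|X|,|Y|)$; passing both to a common subdivision $X_N$ via the fixed simplicial approximations (which does not change the component in $F$, since $M\phi_n$ connects $A_n$ to $A_{n+1}$), we may assume $m=m'=N$ and that $|g|\htp |h|\colon |X_N|\to|Y|$. Pick a homotopy $|X_N|\times[0,1]\to|Y|$, regard it as a continuous map on $|X_N|\times|\Delta[1]|$, and — using that $\varsigma$ is a continuous section (Proposition~\ref{expts}, as $X_N$ is finite hence locally finite) or more directly working with the compact space $|X_N\boxtimes\Delta[1]|$ via the map $\zeta$ of Theorem~\ref{thmdefzeta} — produce for suitably large subdivisions a simplicial approximation of the homotopy, i.e.\ a simplicial map $X_K\boxtimes\Delta[1]\to Y$ for some further subdivision $X_K$ agreeing with (the simplicial approximations of) $g$ and $h$ on the two ends. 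By Theorem~\ref{expsc} this is exactly an edge path in $\CMap(X_K,Y)$ from the subdivided $g$ to the subdivided $h$; the issue that a single simplicial map out of $X_K\boxtimes\Delta[1]$ only gives a contiguity rather than a path is handled by subdividing the $[0,1]$ direction into $\Delta[1]\cup\dotsb\cup\Delta[1]$, so the homotopy is a concatenation of finitely many contiguities, i.e.\ a genuine path in $|\CMap(X_K,Y)|\subset F$. Compatibility of this edge path with the maps $\Gamma$ and the telescope structure shows $g$ and $h$ lie in the same component of $F$.

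\emph{Main obstacle.} The delicate point is the injectivity argument: simplicially approximating the homotopy $|X_N|\times[0,1]\to|Y|$ requires subdividing the product complex, and one must control that the simplicial approximation obtained at the two ends $t=0,1$ is compatible (up to the fixed simplicial approximations $f_n$) with the original vertices $g$ and $h$ rather than some arbitrary approximations of $|g|,|h|$ — here Theorem~\ref{thmsamc} (uniqueness of simplicial approximations up to contiguity) is what rescues compatibility, since different choices of approximation land in the same component. A secondary technical wrinkle is that the product $|X_N|\times|\Delta[1]|$ does not agree with $|X_N\boxtimes\Delta[1]|$, so one applies the Simplicial Approximation Theorem to the composite $|X_K\boxtimes\Delta[1]|\overto{\zeta}|X_N\boxtimes\Delta[1]|\to|X_N|\times|\Delta[1]|\to|Y|$ and uses the mesh bound $\mesh(X_K)+\mesh(\Delta[1]$-subdivision$)$ from Theorem~\ref{thmdefzeta} to meet the Lebesgue number of the cover pulled back along the homotopy.
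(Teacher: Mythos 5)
Your proof is correct, and the surjectivity half is exactly the paper's: Simplicial Approximation (Theorem~\ref{thmSA}, using compactness of $|X|$ for the Lebesgue number) plus Proposition~\ref{propsaho}. Where you diverge is injectivity. The paper disposes of it in one line by citing Spanier's Theorem~3.5.6, the classical statement that simplicial maps with homotopic realizations lie in the same contiguity class after composing with simplicial approximations of a sufficiently fine subdivision. You instead re-derive that statement internally: approximate the homotopy $|X_N|\times[0,1]\to|Y|$ through $|X_K\boxtimes I'|\overto{\zeta}|X_N\boxtimes \Delta[1]|\overto{\rho}|X_N|\times|\Delta[1]|$, get a simplicial map $X_K\boxtimes I'\to Y$, transpose it via Theorem~\ref{expsc} to an edge path $I'\to\CMap(X_K,Y)$, and reconcile the two ends with $g\circ f^K_N$ and $h\circ f^K_N$ via uniqueness of simplicial approximations up to contiguity (Theorem~\ref{thmsamc}). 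This is precisely the machinery the paper deploys later in the injectivity half of Lemma~\ref{lemmain}, so your argument is sound and has the virtue of being self-contained and of making the compatibility with the telescope structure explicit; the paper's citation is simply shorter. Two small points to tighten: since $X_K\boxtimes I'$ is not literally a subdivision of $X_N\boxtimes\Delta[1]$, the tool you actually invoke is the star criterion (Theorem~\ref{thmaltdefSA}) together with the mesh bound of Theorem~\ref{thmdefzeta}, not Theorem~\ref{thmSA} verbatim --- exactly as in the paper's proof of Lemma~\ref{lemmain}; and you are implicitly using the standard fact that $g\circ f^K_N$ is itself a simplicial approximation of $|g|$ precomposed with the subdivision homeomorphism (composites of simplicial approximations approximate the composite), which is also used, and worth stating, in the paper.
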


\begin{proof}
The Simplicial Approximation Theorem (Theorem~\ref{thmSA}) and
Proposition~\ref{propsaho} shows that $\Phi$ is surjective on path
components.  Theorem 3.5.6 of \cite{Spanier} shows that $\Phi$ is
injective on path components.
\end{proof}

The map on homotopy groups $\pi_{q}(F,x)\to
\pi_{q}(\TMap(|X|,|Y|),\Phi(x))$ will be an isomorphism for one $x$ in
a path component if and only if it is an isomorphism for all $x$ in
that path component.  For $x\in A_{k}$, we can compute
\[
\pi_{q}(F,x)=\colim_{n\geq k} \pi_{q}(A_{n},\phi^{n}_{k}(x))
\]
where $\phi^{n}_{k}$ is the composite map $\phi_{n-1}\circ \dotsb
\circ \phi_{k}$.  If we write $f^{n}_{k}=f_{k}\circ \cdot \circ
f_{n-1}$ and assume that $x$ is a vertex $g\colon X_{k}\to Y$, then
$\phi^{n}_{k}(g)=g\circ f^{n}_{k}$.   The proof of
Theorem~\ref{thmredux} therefore reduces to the following lemma.

\begin{lem}\label{lemmain}
Let $g\colon X_{k}\to Y$ be a map of simplicial complexes and let $q>0$.
The map 
\[
\colim_{n\geq k} \pi_{q}(|\CMap(X_{n},Y)|,g\circ f^{n}_{k})\to 
\pi_{q}(\TMap(|X_{k}|,|Y|),|g|)
\]
is a bijection.
\end{lem}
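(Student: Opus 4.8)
The plan is to prove that the colimit map of Lemma~\ref{lemmain} is both surjective and injective; since for $q>0$ these are colimits of groups, injectivity amounts to showing that an element of $\pi_{q}(|\CMap(X_{n},Y)|,g\circ f^{n}_{k})$ whose image in $\pi_{q}(\TMap(|X_{k}|,|Y|),|g|)$ is trivial already becomes trivial in $\pi_{q}(|\CMap(X_{m},Y)|,g\circ f^{m}_{k})$ for some $m\geq n$. Fix a finite simplicial complex $K$ with $|K|$ a $q$-sphere and a distinguished vertex, and for any complex $J$ write $CJ=v*J$ for the cone on $J$, so that $|CJ|$ is a disc with boundary $|J|$; then $\pi_{q}$ is computed by maps out of (subdivisions of) $|K|$ and nullhomotopy is detected by extension over (subdivisions of) $|CK|$. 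The subdivision homeomorphisms $|X_{m}|\iso|X|$ and the change-of-basepoint isomorphisms $\pi_{q}(-,|g\circ f^{n}_{k}|)\iso\pi_{q}(-,|g|)$ coming from the homotopy $|g\circ f^{n}_{k}|\htp|g|$ are precisely the data built into the telescope of Section~\ref{secredux} and the map $\Phi$; I carry these along silently throughout.

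\emph{Surjectivity.} Represent $\alpha\in\pi_{q}(\TMap(|X_{k}|,|Y|),|g|)$ by a based map $|K|\to\TMap(|X_{k}|,|Y|)$, adjoint under Proposition~\ref{expts} (valid since $X_{k}$ is finite, hence locally finite) to a continuous map $\psi\colon|X_{k}|\times|K|\to|Y|$ whose restriction to $|X_{k}|\times\{*\}$ is $|g|$ composed with the projection. Then $\psi\circ\rho\colon|X_{k}\boxtimes K|\to|Y|$ is a map out of the compact complex $X_{k}\boxtimes K$, so the cover $\{(\psi\rho)^{-1}(\Stv{w})\}$ (over vertices $w$ of $Y$) has a Lebesgue number $\lambda>0$. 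Choose a subdivision $K'$ of $K$ and, using $\mesh(X_{n})\to 0$, an $n\geq k$ with $\mesh(X_{n})+\mesh(K')$ small enough that, by Theorem~\ref{thmdefzeta}, $\zeta\colon|X_{n}\boxtimes K'|\to|X_{k}\boxtimes K|$ carries the closed star of every vertex into a set of diameter $<\lambda$. Then for each vertex $u$ of $X_{n}\boxtimes K'$ there is a vertex $w$ of $Y$ with $\psi\rho\zeta(\Stv{u})\subset\Stv{w}$, so by Theorem~\ref{thmaltdefSA} any such assignment $u\mapsto w$ is simultaneously a simplicial map $h\colon X_{n}\boxtimes K'\to Y$ and a simplicial approximation of $\psi\circ\rho\circ\zeta$. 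Let $h^{\flat}\colon K'\to\CMap(X_{n},Y)$ correspond to $h$ under Theorem~\ref{expsc}. The vertex $h^{\flat}(*)$, being the restriction of $h$ to $X_{n}\boxtimes\{*\}$, is a simplicial approximation of $\psi\rho\zeta$ on the basepoint section, which is $|g|$ composed with the subdivision homeomorphism $|X_{n}|\iso|X_{k}|$; so is $g\circ f^{n}_{k}$ (a composite of simplicial approximations is one, $f^{n}_{k}$ approximating the homeomorphism and $g$ its own realization), so by Theorem~\ref{thmsamc} the two are contiguous and $[h^{\flat}]$ determines a class in $\pi_{q}(|\CMap(X_{n},Y)|,g\circ f^{n}_{k})$. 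Finally, by Proposition~\ref{propadjcomp} the composite $\Gamma\circ|h^{\flat}|$ is adjoint to $|h|\circ\varsigma$; since $|h|\htp\psi\circ\rho\circ\zeta$ (Proposition~\ref{propsaho}) while $\rho\circ\zeta\circ\varsigma$ is the product of the subdivision homeomorphisms ($\zeta$ extends $\varsigma$ by Theorem~\ref{thmdefzeta}, and $\rho\circ\varsigma=\id$), this adjoint is homotopic to $\psi$, and $[h^{\flat}]$ maps to $\alpha$.

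\emph{Injectivity.} Let $x\in\pi_{q}(|\CMap(X_{n},Y)|,g\circ f^{n}_{k})$ have trivial image. A representing map $|K|\to|\CMap(X_{n},Y)|$ has compact image, hence lands in a finite subcomplex, so simplicial approximation there (with the distinguished vertex preserved, as it is sent to the vertex $g\circ f^{n}_{k}$) lets us take $x=[|h^{\flat}|]$ for a based simplicial map $h^{\flat}\colon K'\to\CMap(X_{n},Y)$, with corresponding $h\colon X_{n}\boxtimes K'\to Y$. Again $\Gamma\circ|h^{\flat}|$ is adjoint to $|h|\circ\varsigma$ (Proposition~\ref{propadjcomp}), so its triviality means $\Gamma\circ|h^{\flat}|$ extends over $|CK'|$, equivalently (Proposition~\ref{expts}) that $|h|\circ\varsigma$ extends to a continuous $\hat\psi\colon|X_{n}|\times|CK'|\to|Y|$. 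Running the approximation argument above on $\hat\psi\circ\rho\colon|X_{n}\boxtimes CK'|\to|Y|$ produces, for $m\geq n$ large and a fine subdivision $CL$ of $CK'$ (restricting to a subdivision $L$ of $K'$), a simplicial approximation $\hat h\colon X_{m}\boxtimes CL\to Y$ of $\hat\psi\circ\rho\circ\zeta$. The delicate point is the boundary subcomplex $X_{m}\boxtimes L$, where $\hat\psi\circ\rho\circ\zeta$ restricts to $|h|\circ(\varsigma\circ\rho)\circ\zeta$ rather than to the realization of a simplicial map, since $\varsigma\circ\rho$ is only homotopic to, not equal to, the identity of $|X_{n}\boxtimes K'|$. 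Using that the identity of $X_{n}\boxtimes K'$ is a simplicial approximation of $\varsigma\circ\rho$, hence (Proposition~\ref{propsaho}) that $\varsigma\circ\rho$ is homotopic to the identity rel vertices, one first homotopes $\hat\psi$ on a combinatorial collar of the boundary so that $\hat\psi\circ\rho\circ\zeta$ restricts on $X_{m}\boxtimes L$ to the realization of the simplicial map $h\circ(f^{m}_{n}\boxtimes b)$, where $b\colon L\to K'$ is a simplicial approximation of the subdivision homeomorphism; the restriction proposition of Section~\ref{secsa} then forces $\hat h$ to restrict to $h\circ(f^{m}_{n}\boxtimes b)$ on $X_{m}\boxtimes L$. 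Under Theorem~\ref{expsc} the adjoint $\hat h^{\flat}\colon CL\to\CMap(X_{m},Y)$ extends, over the cone, the adjoint of $h\circ(f^{m}_{n}\boxtimes b)$, which represents the image of $x$ in $\pi_{q}(|\CMap(X_{m},Y)|,g\circ f^{m}_{k})$ (it is $(f^{m}_{n})^{*}\circ h^{\flat}$ up to the subdivision $b$); hence that image is trivial.

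The main obstacle is the boundary step in injectivity: restricting a continuous map that has been pulled back along $\rho$ to a boundary subcomplex of the form $X_{m}\boxtimes L$ never gives the realization of a simplicial map on the nose, because $\varsigma\circ\rho$ is only homotopic to the identity of $|X_{n}\boxtimes K'|$. The correction is canonical --- the straight-line homotopy, which fixes vertices --- but installing it on a combinatorial collar of the boundary before the final simplicial approximation, and keeping the basepoint data coherent with the telescope, is where the real work concentrates. Surjectivity, by contrast, is a direct application of the Simplicial Approximation Theorem (Theorem~\ref{thmSA}, in the form of Theorem~\ref{thmaltdefSA}) to the composite with $\zeta$, whose diameter estimate (Theorem~\ref{thmdefzeta}) is exactly what makes the Lebesgue-number hypothesis satisfiable once $\mesh(X_{n})\to 0$; the merging phenomenon of Example~\ref{exa:simpfundgp} is the reason injectivity requires passing to a further subdivision $X_{m}$.
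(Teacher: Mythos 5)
Your surjectivity argument is essentially the paper's: adjoint the sphere, use the diameter estimate for $\zeta$ (Theorem~\ref{thmdefzeta}) to satisfy the Lebesgue-number hypothesis, take a simplicial approximation, adjoint back via Theorem~\ref{expsc}, and fix the basepoint by the contiguity of the two simplicial approximations of $|g|$ (Theorem~\ref{thmsamc}). That half is fine.

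The injectivity half has a genuine gap exactly at the step you flag as ``the real work.'' You propose to homotope $\hat\psi\colon |X_{n}|\times|CK'|\to|Y|$ on a collar so that $\hat\psi\circ\rho\circ\zeta$ restricts on $X_{m}\boxtimes L$ to the realization of $h\circ(f^{m}_{n}\boxtimes b)$, and then invoke the restriction proposition to pin down $\hat h$ on the boundary. This cannot be arranged by any modification of $\hat\psi$: the composite $\rho\circ\zeta$ restricted to $|X_{m}\boxtimes L|$ factors through $\rho_{X_{m},L}\colon |X_{m}\boxtimes L|\to|X_{m}|\times|L|$, which identifies distinct points of a geometric simplex having the same pair of barycentric images (e.g.\ $\tfrac12(a,c)+\tfrac12(b,d)$ and $\tfrac12(a,d)+\tfrac12(b,c)$), whereas $|h\circ(f^{m}_{n}\boxtimes b)|$ is linear on the simplices of $X_{m}\boxtimes L$ and in general separates such points. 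So no choice of boundary values for $\hat\psi$ makes the composite equal to $|h\circ(f^{m}_{n}\boxtimes b)|$ on the nose, and the restriction proposition never applies. If instead you correct the composite directly on $|X_{m}\boxtimes CL|$ (say by the homotopy extension property), you run into a circularity: the correction is only defined after $m$, $L$, $f^{m}_{n}$, $b$ have been chosen to give the Lebesgue estimate for the \emph{uncorrected} map, and you would then need to re-establish the star condition for the corrected map on that same complex, which you do not do.

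The repair is to give up on exact boundary agreement. Both $\hat h$ restricted to $X_{m}\boxtimes L$ and $h\circ(f^{m}_{n}\boxtimes b)$ are simplicial approximations of the same map $\hat\psi\circ\rho\circ\zeta$ on the boundary (the latter because $f^{m}_{n}$ and $b$ send each vertex into its carrier), so by Theorem~\ref{thmsamc} they are contiguous; contiguity of maps out of $X_{m}\boxtimes L$ translates under Theorem~\ref{expsc} into contiguity of the adjoints $L\to\CMap(X_{m},Y)$, hence a homotopy between $\hat h^{\flat}|_{L}$ and the adjoint of $h\circ(f^{m}_{n}\boxtimes b)$. Gluing this homotopy onto the nullhomotopy coming from the cone gives the desired triviality up to a basepoint loop, which must then be contracted by one more contiguity comparison (two simplicial approximations of the constant homotopy on $|g|$). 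Once you make these changes, your argument is no longer a different route: it is the paper's proof, which from the start replaces exact boundary matching by contiguity of simplicial approximations and handles the leftover basepoint loop explicitly at the end.
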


The proof of this lemma occupies the next section.

\section{The Proof of Lemma~\ref{lemmain}}

This section contains the proof of Lemma~\ref{lemmain}.  Notation and
assumptions are as in the previous section.  For fixed $q\geq 1$, $k\geq 0$, and $g\colon
X_{k}\to Y$, we need to prove injectivity and surjectivity of the map
\[
\colim_{n\geq k} \pi_{q}(|\CMap(X_{n},Y)|,g\circ f^{n}_{k})\to 
\pi_{q}(\TMap(|X_{k}|,|Y|),|g|).
\]
Since the simplicial complexes $X_{n}$ for $n<k$ play no role, by
renumbering, we assume without loss of generality that $k=0$.  

We begin by proving surjectivity.  Let $\psi\colon S^{q}\to
\TMap(|X|,|Y|)$ be a continuous map which sends the chosen base point
of $S^{q}$ to $|g|$.  Choose and fix a simplicial complex $Z$ with a
homeomorphism from its geometric realization to $S^{q}$ having the
base point at a vertex $z$, e.g., $Z=\partial \Delta[q+1]$.  We then
get a map $\alpha \colon |X|\times |Z|\to |Y|$ which restricted to
$|X|\times \{z\}$ is $|g|$.  Since $|X|\times |Z|$ is compact, giving
it the metric given by the $\max$ of the standard metrics on $|X|$ and
$|Z|$, we can choose a Lebesgue number $\lambda >0$ of the open cover
$\{\alpha^{-1}(\Stv{w})\}$ (where $w$ ranges over the vertices of
$Y$).  Choose $n$ large enough so that $\mesh(X_{n}) < \lambda/4$ and
choose a subdivision $Z'$ of $Z$ so that $\mesh(Z') < \lambda/4$.
Then the map
\[
\zeta \colon |X_{n}\boxtimes Z'|\to |X\boxtimes Z|
\]
of Theorem~\ref{thmdefzeta} sends each simplex to a set of diameter $<
\lambda/2$, and so the composite map
\[
|X_{n}\boxtimes Z'|\overto{\zeta} |X\boxtimes Z|\overto{\rho}|X|\times |Y|
\]
sends each simplex to a set of diameter $< \lambda/2$.  The latter map
$\rho \circ \zeta$ therefore sends the star of each vertex to a set
of diameter $< \lambda$.  By Theorem~\ref{thmaltdefSA}, the composite
map 
\[
\alpha \circ \rho \circ \zeta \colon |X_{n}\boxtimes Z'|\to |Y|
\]
admits a simplicial approximation $a\colon X_{n}\boxtimes Z'\to Y$.
Let $b\colon Z'\to \CMap(X_{n},Y)$ be the adjoint map under the
correspondence of Theorem~\ref{expsc}.  On geometric realization, the
composite map
\[
\phi \colon S^{q}\iso |Z'|\to |\CMap(X_{n},Y)|\to
\TMap(|X_{n}|,|Y|)\iso \TMap(|X|,|Y|)
\]
corresponds to the map
\[
|X_{n}|\times |Z'|\overto{\varsigma}|X_{n}\boxtimes Z'|\overto{|a|}|Y|.
\]
Since the composite
\[
|X|\times S^{q}\iso |X_{n}|\times |Z'|\overto{\varsigma}
|X_{n}\boxtimes Z'|\overto{|a|}{|Y|}
\]
is homotopic to $\alpha$ via the homotopy of 
Proposition~\ref{propsaho}, it follows that $\phi$ is homotopic to the
original map $\psi$.  We are not quite finished because $|b|$ is not a
based map; however, $b(z)$ and $g\circ f^{n}_{0}$ are both simplicial
approximations of the map
\[
|X_{n}|\iso |X|\overto{g}|Y|
\]
and therefore span a $1$-simplex in $\CMap(X_{n},Y)$ by
Theorem~\ref{thmsamc}.  The change of base point of $|b|$ along this
path produces an element of $\pi_{q}(|\CMap(X_{n},Y)|,g\circ
f^{n}_{0})$ sent by the homomorphism to the original element of
$\pi_{q}(\TMap(|X|,|Y|),|g|)$ represented by $\psi$.

We now prove injectivity.  Given elements 
\[
[\psi], [\psi'] \in
\colim \pi_{q}(|\CMap(X_{n},Y)|,g\circ f^{n}_{0}),
\]
we can choose $k$ so that both are represented by based maps $S^{q}\to
|\CMap(X_{k},Y)|$, and then renumbering, we can assume without loss of
generality that $k=0$.  Using the Simplicial Approximation Theorem, we
can choose a simplicial complex $Z$ homeomorphic to $S^{q}$ with base
point at a vertex $z$, and based maps $p,p'\colon Z\to \CMap(X,Y)$ of
simplicial complexes representing $[\psi]$ and $[\psi']$,
respectively.  We assume that the composite maps
\[
\Gamma \circ |p|, \Gamma \circ |p'|\colon S^{q}=|Z|\to |\CMap(X,Y)|\to \TMap(|X|,|Y|)
\]
both represent the same element of $\pi_{q}(\TMap(|X|,|Y|),|g|)$, and
we need to see that $|p|$ and $|p'|$ eventually represent the same
element in the colimit.  Choose a homotopy in $\TMap(|X|,|Y|)$, and
let $H$ denote the adjoint map
\[
|X|\times S^{q}\times [0,1]\to |Y|.
\]
Arguing as above, for some $n$ and some refinement $Z'$ of $Z$ and
$I'$ of $[0,1]$, we can construct a map of simplicial complexes
\[
h\colon X_{n}\boxtimes Z'\boxtimes I'\to Y
\]
giving a simplicial approximation of the map
\[
|X_{n}\boxtimes Z'\boxtimes I'|\to |X_{n}|\times |Z'|\times |I'|
\iso |X|\times S^{q}\times [0,1]\overto{H}|Y|.
\]
In particular the composite map
\[
|X|\times S^{q}\times [0,1]\iso
|X_{n}|\times |Z'|\times |I'|\overto{\varsigma} |X_{n}\boxtimes Z'\boxtimes I'|
\overto{|h|}|Y|.
\]
is homotopy equivalent to $H$.  We obtain an adjoint map 
\[
Z'\boxtimes I'\to \CMap(X_{n},Y)
\]
and on geometric realization a map
\[
\eta \colon S^{q}\times [0,1]\iso |Z'|\times |I'|\to |Z'\boxtimes I'|\to |\CMap(X_{n},Y)|.
\]
The restriction $\eta_{0}$ of $\eta$ to $S^{q}\times \{0\}$ and
$(f^{n}_{0})^{*}(p)$ are both adjoint to simplicial approximations of
the same map 
\[
|X_{n}\boxtimes Z'|\overto{\rho}
|X_{n}|\times S^{q}\iso |X|\times S^{q}\to |Y|.
\]
The simplicial approximations are contiguous, and so the maps
$\eta_{0}$ and $(f^{n}_{0})^{*}(p)$ are homotopic maps $S^{q}\to
|\CMap(X_{n},Y)|$.  Likewise the restriction $\eta_{1}$ of $\eta$ to
$S^{q}\times \{1\}$ and $(f^{n}_{0})^{*}(p')$ are homotopic maps
$S^{q}\to |\CMap(X_{n},Y)|$.  Extending $\eta$ with these new
homotopies, we get a homotopy
\[
\eta \colon S^{q}\times [-1,2] \to |\CMap(X_{n},Y)|.
\]
from $|(f^{n}_{0})^{*}(p)|$ to $|(f^{n}_{0})^{*}(p')|$.  Again, we are not
quite done, because we need to worry about the base point. Restricting
to the base point of $S^{q}$, the map 
\[
\alpha \colon [-1,2]\to |\CMap(X_{n},Y)|
\]
defines a loop based at $g\circ f^{n}_{0}$.  The argument is completed
by showing that $\alpha$ is contractible.  By construction, the
restriction to $[0,1]$ is adjoint to the restriction of $h$, 
\[
X_{n}\boxtimes I'\to Y
\]
which is a simplicial approximation to the constant homotopy on the
map 
\[
|X_{n}|\iso |X|\overto{|g|}|Y|.
\]
The composite 
\[
X_{n}\boxtimes I'\to X_{n}\overto{g\circ f^{n}_{0}}Y
\]
is also a simplicial approximation to the same map; these simplicial
maps are contiguous, and so their adjoints are homotopic maps from
$[0,1]$ to $|\CMap(X_{n},Y)|$.  On $\{0\}$ and $\{1\}$, this
homotopy is the restriction of $\alpha$ to $[-1,0]$ and $[1,2]$,
respectively, and this gives a contraction of the loop $\alpha$.

\section{The contiguity complex and topological data
analysis}\label{sec:comp}

\renewcommand{\theenumi}{\arabic{enumi}}\renewcommand{\labelenumi}{\theenumi.}
\renewcommand{\theenumii}{\alph{enumii}}

As we indicated in the introduction, much of our interest in the
contiguity complex comes from its application to topological data
analysis.  When $Y$ is a simplicial complex extracted from a data set
$M$, the homology of the mapping spaces $\TMap(|X|,|Y|)$ provides
refined invariants of $Y$ for suitable test spaces $X$.  This homology
itself is difficult to compute, but the work of the preceding sections
justifies studying approximations given as the homology of the
contiguity complexes $\CMap(X_n, Y)$ for simplicial subdivisions $X_n$
of $X$.

In this section, we describe a computational application of the
contiguity complex, using the circle as a test space.  Specifically,
we present numerical experiments that illustrate how studying the
growth rate of the number of contiguity classes of maps from
successive subdivisions of $S^1$ to two simplicial complexes with
identical homology can distinguish them.  This example demonstrates
both the power of our methodology as well as the some of the
difficulties imposed by the computational complexity of applying it.

Let $T$ denote the simplicial complex with geometric realization
homeomorphic to the 2-torus (see Figure~\ref{torusfiggeom}),
represented by the standard simplicial complex obtained by identifying
the edges of a rectangle.  This complex has 9 vertices, 27
1-simplices, and 18 2-simplices; see Figure~\ref{torusfig} below.  Let
$P$ denote the simplicial complex with geometric realization homotopy
equivalent to a ``doubly pinched sphere'' (see
Figure~\ref{pinchfiggeom}), formed by taking the barycentric
subdivision of $\partial \Delta[3]$, and then adding two 1-simplices
to connect the barycenters of two distinct pairs of $2$-simplices.
This complex has 14 vertices, 38 1-simplices, and 24 2-simplices; see
Figure~\ref{pinchfig} below.  (The geometric realization of this
complex is actually homeomorphic to a sphere with two disjoint
internal chords.)

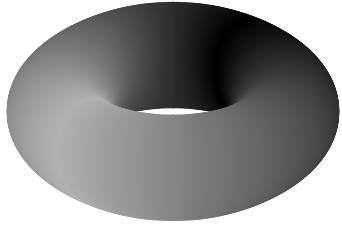
\begin{figure}
\begin{tikzpicture}
\selectcolormodel{gray}
    \foreach \x in {90,89,...,-90} { 

    \pgfmathsetmacro\elrad{20*max(cos(\x),.1)}

    \pgfmathsetmacro\ltint{1.7*abs(\x-45)/180}
    \pgfmathsetmacro\rtint{1.7*(1-abs(\x+45)/180)}
    \definecolor{currentcolor}{rgb}{\ltint, 0, \ltint}

    \draw[color=currentcolor,fill=currentcolor] 
        (xyz polar cs:angle=\x,y radius=.75,x radius=1.5) 
        ellipse (\elrad pt and 20pt);

    \definecolor{currentcolor}{rgb}{\rtint, 0, \rtint}

    \draw[color=currentcolor,fill=currentcolor] 
        (xyz polar cs:angle=180-\x,radius=.75,x radius=1.5) 
        ellipse (\elrad pt and 20pt);
    }
\end{tikzpicture}
\caption{Torus; homeomorphic to the geometric realization of the
        simplicial complex $T$.}\label{torusfiggeom}
\end{figure}

\begin{figure}
\begin{tikzpicture}
[scale=2, vertices/.style={draw, fill=black, circle, inner sep=0.5pt}]
\node[vertices] (a) at (0,0) {};
\node[vertices] (b) at (1,0) {};
\node[vertices] (c) at (2,0) {};
\node[vertices] (d) at (3,0) {};
\node[vertices] (aa) at (0,1) {};
\node[vertices] (bb) at (1,1) {};
\node[vertices] (cc) at (2,1) {};
\node[vertices] (dd) at (3,1) {};
\node[vertices] (aaa) at (0,2) {};
\node[vertices] (bbb) at (1,2) {};
\node[vertices] (ccc) at (2,2) {};
\node[vertices] (ddd) at (3,2) {};
\node[vertices] (aaaa) at (0,3) {};
\node[vertices] (bbbb) at (1,3) {};
\node[vertices] (cccc) at (2,3) {};
\node[vertices] (dddd) at (3,3) {};

\foreach \to/\from in {a/b, b/c, c/d, aa/bb, bb/cc, cc/dd, aaa/bbb,
bbb/ccc, ccc/ddd, aaaa/bbbb, bbbb/cccc, cccc/dddd, a/aa, aa/aaa,
aaa/aaaa, b/bb, bb/bbb, bbb/bbbb, c/cc, cc/ccc, ccc/cccc, d/dd,
dd/ddd, ddd/dddd, aa/b, bb/c, cc/d, aaa/bb, bbb/cc, ccc/dd, aaaa/bbb,
bbbb/ccc, cccc/ddd}
\draw [-] (\to)--(\from);

\end{tikzpicture}
\caption{The simplicial complex $T$, where the top is identified with
the bottom and the left-hand side identified with the right-hand side.}\label{torusfig}
\end{figure}
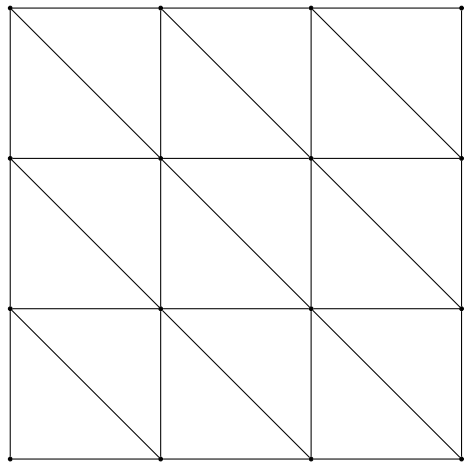

\begin{figure}

\begin{tikzpicture}[scale=1,every node/.style={minimum size=1cm}]
        
        \def\R{4} 
        
        \def\angEl{25} 
        \def\angAz{-100} 
        \def\angPhiOne{-50} 
        \def\angPhiTwo{-35} 
        \def\angBeta{30} 
        
        \LatitudePlane[equator]{\angEl}{0}
        \fill[ball color=white!10] (0,0) circle (\R); 
       
        \draw [dashed, <->] (-1.5,1) -- (-1.5,-1.5);
        \draw [dashed, <->] (1.5,1) -- (1.5,-1.5);
 

    
\end{tikzpicture}

\caption{Doubly pinched sphere, with dashed lines indicating identified points; homotopy equivalent to the geometric realization of the simplicial complex $P$.}\label{pinchfiggeom}
\end{figure}
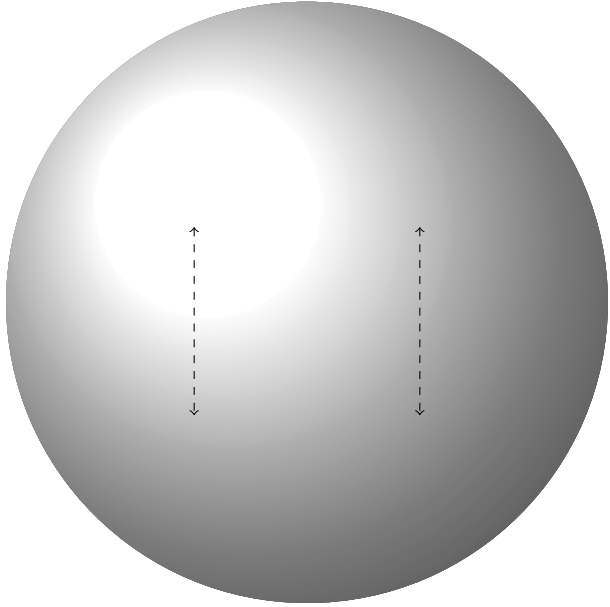

\begin{figure}
\begin{tikzpicture}
[scale=2, vertices/.style={draw, fill=black, circle, inner sep=0.5pt}]
\node[vertices] (a) at (0.2,-0.2) {};
\node[vertices] (ab) at (0.5, -0.1) {};
\node[vertices] (b) at (0.8,0) {};
\node[vertices] (bc) at (0.3, 0.1) {};
\node[vertices] (c) at (-0.2,0.2) {};
\node[vertices] (d) at (-0.8,0) {};
\node[vertices] (e) at (0,0.8) {};
\node[vertices] (be) at (0.4,0.4) {};
\node[vertices] (ae) at (0.1, 0.3) {};
\node[vertices] (de) at (-0.4, 0.4) {};
\node[vertices] (ad) at (-0.3, -0.1) {};
\node[vertices] (abe) at (1.0/3, 0.6/3) {};
\foreach \to/\from in {a/b,a/d,a/e,b/e,d/e, abe/a, abe/b, abe/e,
abe/ab, abe/be, abe/ae, c/e, c/a, c/d, c/ae, c/de, c/ad}
\draw [-] (\to)--(\from);

\draw [bend left=45, ultra thick, dashed] (c)--(abe);
\end{tikzpicture}
\caption{The simplicial complex $P$, with only one of the extra connecting
edges shown.}\label{pinchfig}
\end{figure}
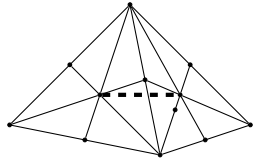

A straightforward computation shows that the simplicial complexes $P$
and $T$ have isomorphic homology; computing the simplicial homology
cannot distinguish these spaces.  However, $P$ and $T$ are not
homotopy equivalent.  To distinguish them, we will study the rank of
the simplicial complex of based maps out of subdivisions of a circle.
(For $X$ and $Y$ simplicial complexes, each having a pre-chosen base
point vertex, we let $\CMapb(X,Y)$ denote the subcomplex of
$\CMap(X,Y)$ consisting of those maps that send the chosen base point
of $X$ to the chosen base point of $Y$.)  In our case, we consider the
following family of subdivisions of a the usual simplicial model of
the circle $\partial \Delta[2]$: let $S^1_k$ denote the simplicial
complex with vertices $\{v_0, v_1, \ldots, v_{k-1}\}$ and edges $(v_i,
v_{i+1})$ for $0 \leq i < k-1$ along with $(v_{k-1}, v_0)$.  We
compute the ranks of $H_0 \CMapb(S^1_k, P)$ and $H_0 \CMapb(S^1_k,
T)$.  

Computing these ranks exactly is computationally intractable for even
relatively small $k$.  A simplicial map from $S^1_k$ to any simplicial
complex $X$ is given by a cycle in the graph specified by the
1-skeleton of $X$, and as $k$ grows the search space of possible
cycles to check grows exponentially.  Moreover, since contiguity is
not an equivalence relation on simplicial maps, we have to compute the
transitive closure of the contiguity relation.  Instead, we use a
randomized algorithm.

First, we note that given two simplicial maps $f$ and $g$, determining
if $f$ and $g$ are contiguous can be done very efficiently.  For one
thing, it suffices to check the contiguity property only on certain
simplices:

\begin{lem}\label{lem:maxsimp}
Let $f,g \colon X \to Y$ be simplicial maps.  Then $f$ and $g$ are
contiguous if and only if $f$ and $g$ satisfy the contiguity property
for every maximal simplex of $X$ (simplex which is not the face of any
other simplex).
\end{lem}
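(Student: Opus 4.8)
The plan is to prove the two implications separately, with essentially all of the content lying in the ``if'' direction. The ``only if'' direction is immediate: if $f$ and $g$ satisfy the contiguity property for every simplex of $X$, then in particular they satisfy it for every maximal simplex, since maximal simplices are simplices.

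For the converse, suppose that $f$ and $g$ satisfy the contiguity property for every maximal simplex of $X$, and let $\sigma=\{v_{0},\dotsc,v_{m}\}$ be an arbitrary simplex of $X$. First I would observe that $\sigma$ is a face of (or equal to) some maximal simplex $\sigma'$ of $X$; when $X$ is finite, as in the applications where $X=S^{1}_{k}$, this is immediate by choosing a simplex of largest cardinality among those containing $\sigma$. Since $\sigma\subseteq\sigma'$, the set $\{f(v)\mid v\in\sigma\}\cup\{g(v)\mid v\in\sigma\}$ is a subset of the set $\{f(v)\mid v\in\sigma'\}\cup\{g(v)\mid v\in\sigma'\}$, which is a simplex of $Y$ by the contiguity hypothesis applied to $\sigma'$. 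By the defining property of a simplicial complex that every nonempty subset of a simplex is again a simplex (Definition~\ref{defSC}), it follows that $\{f(v)\mid v\in\sigma\}\cup\{g(v)\mid v\in\sigma\}$ is a simplex of $Y$. Since $\sigma$ was an arbitrary simplex of $X$, this shows that $f$ and $g$ are contiguous.

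There is no serious obstacle here; the only point requiring a word of care is the existence, for each simplex $\sigma$ of $X$, of a maximal simplex $\sigma'$ containing it. This holds whenever $X$ is finite, and more generally whenever $X$ is locally finite, since then each simplex is contained in only finitely many simplices; this covers all cases used in this paper. (Without such a hypothesis the statement can fail, as for the complex of all finite subsets of $\bN$, which has no maximal simplices.)
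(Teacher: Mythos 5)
Your proof is correct; the paper in fact states Lemma~\ref{lem:maxsimp} without proof, and your argument (every simplex of a finite complex is a face of a maximal one, and faces of simplices are simplices, so the contiguity condition passes from maximal simplices down to all simplices) is exactly the intended one. Your caveat about needing $X$ finite (or locally finite) for maximal simplices to exist is a reasonable precision, and it is harmless here since the lemma is applied only to finite complexes such as $S^1_k$.
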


If we are studying a fixed simplicial complex, the maximal simplices
can be precomputed.  Moreover, when the dimension of
$X$ is small, checking the contiguity property for a
given simplex of $X$ can be reduced to a table lookup (where the table
required is of size $v^{2(d+1)}$, for $v$ the number of vertices in $Y$ and
$d$ the dimension of a maximal simplex in $X$).

Next, note that $f$ and $g$ are in the same contiguity class if and
only if there exists a path of simplicial maps $f_0, f_1, \ldots,
f_\ell$ such that $f_0 = f$, $f_\ell = g$, and $f_i$ is contiguous to
$f_{i+1}$ for all $0 \leq i < \ell$.   We have the following randomized
``local search'' algorithm to determine if $f$ and $g$ are in the same
contiguity class.  The idea is to do a random walk on the space of
simplicial maps, starting at $f$ and mostly taking ``greedy'' steps
that get us closer to $g$.  For this, define 
\[
d(f,g) = \sum_{v \in X} d_Y (f(v), g(v)),
\]
where $d_Y$ is the graph distance on the 1-skeleton of $Y$.  In the
case when $X$ is the circle $S^{1}_{k}$ and we are considering based
maps, we can decompose the step of going from $f_{i}$ to $f_{i+1}$
into a sequence of mutually contiguous maps $f_{i}=f_{i,0},
f_{i,1},\dotsc, f_{i,k-1}=f_{i+1}$ where each $f_{i,j+1}$ differs
from $f_{i,j}$ only in a single vertex; specifically, $f_{i,j}$ agrees
with $f_{i+1}$ on vertices $0,\dotsc,j$ and with $f_{i}$ on vertices
$j+1,\dotsc,k-1$.  Thus, for any $f$ and $g$ in the same contiguity
class, it is possible to find a path of contiguous simplicial maps
from $f$ to $g$, where at each step we change a single vertex.  This
justifies the following randomized algorithm that checks whether maps
$f$ and $g$ belong to the same contiguity class.

\begin{alg}\label{alg:contclass}
Fix a parameter $\kappa \in [0,1]$ --- typically values of $\kappa$
are around $0.1$.  (The value of $\kappa$ controls the frequency with
which we take completely random steps as opposed to ``greedy'' steps
that improve the distance to the target map.)  Fix a maximum number of
iterations $M$.
\begin{enumerate}
\setlength{\itemindent}{-2em}
\item Set $f_0 = f$, and $i = 0$.
\item If $f_i = g$, return 1.
\item If $i \geq M$, return 0.
\item Set $i = i + 1$.  
\item Construct $f_i$ as follows:
\begin{enumerate}
\setlength{\itemindent}{-2em}
\item Choose a uniform random vertex $v$ of $X$.
\item Build a candidate map $\tilde{f}$ from $f_{i-1}$ by altering the
value at $v$, uniformly selecting
amongst possible values for $\tilde f(v)$ such that $\tilde f(v) \neq f_{i-1}(v)$.
(Possible values are those such that $\tilde f$ is a simplicial map.)
\item Set $\alpha$ to be a uniform random value in $[0,1]$.
\item If $\alpha < \kappa$ or $d(\tilde f, g) \leq d(f_{i-1},g)$, set $f_i
= \tilde{f}$.  Otherwise, goto step (a) above. 
\end{enumerate}
\item Goto step (2) above.
\end{enumerate}
\end{alg}

To distinguish $P$ and $T$, we observe that as $k$ grows, we expect
the number of contiguity classes of based maps from $S^1_k$ to $T$ to
grow polynomially whereas the number of contiguity classes of maps
from $S^1_k$ to $P$ will grow exponentially.  The point is that new
contiguity classes of $T$ are generated by horizontal or vertical
loops, whereas new contiguity classes of $P$ can be described by words
in the ``pinch'' edges.  The expected numerical differences are
sufficiently stark that even estimates of the ranks should expose the
differences.

We use Algorithm~\ref{alg:contclass} to estimate the number of
contiguity classes as follows.  

\begin{alg}\label{alg:countcontclass}
We run the following procedure for a predetermined amount of time.

\begin{enumerate}
\setlength{\itemindent}{-2em}
\item Initialize $L$ to be the empty list.
\item Let $\gamma$ be a uniformly randomly chosen $k$-cycle in the
1-skeleton of the target complex $Y$ that starts and ends at an
arbitrarily chosen base point, representing based simplicial maps from
$S^1_k \to Y$.  
\item If $\gamma$ is not in the same contiguity class as any of the cycles
in $L$, append $\gamma$ to $L$.
\item Goto step (2) above.
\end{enumerate}

\end{alg}

Although it is necessary to choose the stopping time arbitrarily, in
practice we find it most effective to successively increase the
stopping times until the size of $L$ stops changing before the
stopping time is reached.

Selected results of running Algorithm~\ref{alg:countcontclass} on
based maps from $S^1_k$ to $T$ and $P$ are summarized in
Table~\ref{tab}.  We used parameters $M = 500000$ and $\kappa = 0.1$
in Algorithm~\ref{alg:contclass}.  The implementation was in C (with a
Python wrapper) and required about 400 lines of code.  

Scaling by the square of the cycle length, we see that the values for
$T$ are growing roughly quadratically and the values for $P$ are
growing faster (in fact, exponentially).  These results are consistent
with computation by hand of exact values, and clearly indicate a
difference between $P$ and $T$.

\begin{table}
\begin{tabular}{|c|c|c|c|c|}
\hline
\headlower{$k$} & \headlower{$T$} & \headlower{$P$} & $\frac{T}{k^2}$\headstrut & $\frac{P}{k^2}$\headstrut \\
\hline
\entrystrut
9 & 41 & 40 & .506 &\ .494 \\
\entrystrut
12 & 48 & 58 & .333 &\ .403 \\
\entrystrut
15 & 66 & 124 & .293 &\ .551 \\
\entrystrut
18 & 103 & 242 & .318 &\ .747 \\
\entrystrut
21 & 165 & 506 & .374 & 1.143 \\
\hline
\end{tabular}
\vspace{2 pt}
\caption{Estimates of counts of contiguity classes of maps from $S_1^k$.}\label{tab}
\end{table}

Finally, note that for simplicial complexes generated as Rips
complexes of point clouds, determining the contiguity of maps $f$ and
$g$ can be done particularly efficiently even when use of a lookup
table is infeasible.  

\begin{lem}\label{lem:ripscontig}
Let $X$ be a simplicial complex and $(M, \partial_M)$ a finite metric
space.  Simplicial maps $f_0, f_1, \ldots f_n \colon X \to
R_{\epsilon}(M)$ are mutually contiguous if and only if for every
simplex $\{x_0, \ldots, x_m\}$ in $X$, $\partial_M(f_i(x_k),
f_j(x_\ell)) \leq \epsilon$.
\end{lem}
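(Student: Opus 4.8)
The plan is to prove this by directly unwinding the two definitions involved, since the statement is nothing more than a matching of the combinatorial condition of mutual contiguity against the metric description of the simplices of a Rips complex. Recall that $R_\epsilon(M)$ is the simplicial complex whose vertex set is the underlying set of $M$ and whose simplices are exactly the finite nonempty subsets $S\subseteq M$ with $\partial_M(p,q)\le\epsilon$ for all $p,q\in S$; this pairwise-distance characterization of the simplices is the only property of $R_\epsilon(M)$ we will use. By the definition of mutual contiguity recalled just before Theorem~\ref{thmsamc}, the maps $f_0,\dots,f_n\colon X\to R_\epsilon(M)$ are mutually contiguous if and only if for every simplex $\sigma=\{x_0,\dots,x_m\}$ of $X$ the set $T_\sigma=\{f_i(x_k)\mid 0\le i\le n,\ 0\le k\le m\}$ is a simplex of $R_\epsilon(M)$.

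First I would fix a simplex $\sigma=\{x_0,\dots,x_m\}$ of $X$ and apply the Rips characterization to the finite subset $T_\sigma\subseteq M$: the set $T_\sigma$ is a simplex of $R_\epsilon(M)$ precisely when $\partial_M(p,q)\le\epsilon$ for every pair $p,q\in T_\sigma$, that is, precisely when $\partial_M(f_i(x_k),f_j(x_\ell))\le\epsilon$ for all indices $i,j$ and all $k,\ell$. Quantifying this biconditional over all simplices $\sigma$ of $X$ yields exactly the equivalence asserted in the lemma; both implications are just this single biconditional read in the two directions, so there is no separate argument for each.

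The one point that merits a remark rather than a genuine obstacle is that the vertices $f_i(x_k)$ need not be distinct, since several of the maps may agree on several vertices of $\sigma$. This is harmless: in the conventions of Definition~\ref{defSC} a simplex is a \emph{set} of vertices, so $T_\sigma$ is the underlying set of the family $(f_i(x_k))_{i,k}$ with repetitions collapsed, and the Rips condition $\partial_M(p,q)\le\epsilon$ for $p,q\in T_\sigma$ is insensitive to such collapsing because the diagonal terms $\partial_M(p,p)=0\le\epsilon$ hold automatically. Consequently the condition in the statement, which ranges over all $i,j,k,\ell$ including the cases $i=j$ and $k=\ell$, is equivalent to the same condition restricted to distinct values, and no case analysis is needed. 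The proof is thus a short definition chase with no substantive difficulty.
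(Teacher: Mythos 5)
Your proof is correct and is exactly the intended argument: the paper states Lemma~\ref{lem:ripscontig} without proof, treating it as an immediate unwinding of the definition of mutual contiguity against the pairwise-distance description of simplices in $R_\epsilon(M)$, which is precisely your definition chase. Your remark about repeated vertices and the diagonal terms $\partial_M(p,p)=0$ is a sensible extra care that the paper leaves implicit.
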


For expository purposes we chose an example which started with
simplicial complexes, but it is straightforward to generate synthetic
point clouds that illustrate the same phenomenon. 

\section{The contiguity complex and persistent homology}\label{sec:pers}

The example in the preceding section involved aggregating information
from a range of mesh sizes on the test space.  Moreover, in practice
it is often necessary to assemble and summarize information from a
range of feature scales on the target space.  Thinking about how to
systematically handle such considerations lead us quickly to ideas of
persistent homology~\cite{EdelsbrunnerLetscherZomorodian}.  In this
section, we briefly sketch the connection of our work to persistence;
a detailed investigation is the subject of future work.

Recall that given a finite metric space $(M, \partial_M)$, for each
$\epsilon > 0$ we can associate the Rips complex $R_\epsilon(M)$.  This
is a simplicial complex with vertices the points of $M$, and higher
simplices specified by the rule that a subset $\{v_0, v_1, \ldots,
v_n\} \subset M$ spans an $n$-simplex when 
\[
\partial_M(v_i, v_j) \leq \epsilon, \,\,\, 0 \leq i < j \leq n.
\]
The construction of the Rips complex is
functorial in both the metric space and in $\epsilon$, with respect to
metric maps and increasing order, respectively.  Specifically, there are 
natural maps
\[
R_\epsilon(M) \to R_{\epsilon'}(M)
\]
for $\epsilon \leq \epsilon'$.  The idea of persistence is that since a priori 
knowledge about the correct value of $\epsilon$ may be hard to come
by, one should study invariants which assemble information as
$\epsilon$ varies.  For any finite metric space
$(M, \partial_M)$, there is a finite collection of values
$\{\epsilon_1, \epsilon_2, \ldots, \epsilon_m\}$ at which the Rips
complex changes.  The persistent homology of 
$(M,\partial_M)$ is a collection of homological invariants of the
direct system of Rips complexes 
\[
R_0(M) \to R_{\epsilon_1}(M) \to R_{\epsilon_2}(M) \to \ldots \to
R_{\epsilon_m}(M).
\]
In fact, persistent homology can be defined for any direct system of
simplicial complexes~\cite{ZomorodianCarlsson}.  In the following, we 
will suppress the choice of coefficients (typically a field).

The contiguity complex is functorial in the target variable: for a
simplicial complex $X$, $\CMap(X,-)$ is a covariant functor to
simplicial complexes.  As a consequence, for $\epsilon \leq \epsilon'$
there is a natural map
\[
\CMap(X, R_\epsilon(M)) \to \CMap(X, R_{\epsilon'}(M)).
\]

\begin{defn}\label{defn:perscont}
Let $(M, \partial_M)$ be a finite metric space (the data) and $Z$ a
simplicial complex (the test space).  The $k$th persistent contiguity
homology of maps from $Z$ to $M$ is the $k$th persistent homology
associated to the direct system of mapping complexes  
\[
\CMap(Z, R_{\epsilon_1}(M)) \to \CMap(Z, R_{\epsilon_2}(M)) \to \ldots \to
\CMap(Z,R_{\epsilon_m}(M)).
\]
\end{defn}

One of the appealing features of the contiguity complex (and the
persistent contiguity homology) is that computation of persistence
becomes particularly computationally tractable when the target is a
Rips complex,  as indicated by Lemma~\ref{lem:ripscontig}.

\begin{example}\label{exa:sparsedisk}
Suppose that $M$ is a point cloud with the property that for
$\epsilon_1 < \epsilon_2$ the Rips complex $R_{\epsilon_1}(M)$ is
homotopy equivalent to a circle and the Rips complex
$R_{\epsilon_2}(M)$ is homotopy equivalent to a point (and before
$\epsilon_1$ is discrete).  Concretely, suppose that $M$ consists of
samples from the unit disk in $\mathbb{R}^2$ such that the density of
the points is
significantly higher on the boundary than on the interior.

Let $S^1_k$ be the simplicial complex modelling the circle that has
$k$ vertices, as in Section~\ref{sec:comp}.  Suppose that $k >
d |M|$ for some number $d$.  Roughly speaking, the barcode associated to the
zeroth persistent contiguity homology will have the following
behavior.  For $\epsilon < \epsilon_1$, there are bars (contiguity
classes) for each vertex in $M$.  When $\epsilon \in (\epsilon_1, \epsilon_2)$,
the bars associated to interior points will remain, but the bars on
the boundary will all collapse to a single point.  However, new bars
will arise for the contiguity classes corresponding to maps of
non-zero degree (up
to at least $d$ in absolute value).  Note that there will be more of
these than the number of homotopy classes, 
as in Example~\ref{exa:simpfundgp}.  When $\epsilon
> \epsilon_2$, all of the bars from the discrete phase will merge into
the single bar from the boundary points and most of the bars
corresponding to the maps on the boundary will merge into a single
bar (but not necessarily all, depending on the precise value of $k$). 
\end{example}

The work of this paper tells us that to study maps out of a test
homotopy type $|X|$ (presented as the geometric realization of a
simplicial complex $X$), the construction of Definition~\ref{defn:perscont} 
should be applied to a subdivision $X'$ of $X$ with a sufficiently
small mesh size.  However, just as it may not be clear how to
understand the correct feature scale $\epsilon$ for the target, a
priori information about suitable mesh sizes for $X'$ may be
unavailable.  For instance, it is clear from
Example~\ref{exa:sparsedisk} that choosing the mesh size (i.e., the
value of $k$) properly depends on intimate knowledge of the behavior
of the data set.  Moreover, the example of the previous section indicates
that the rate of change of the size of the contiguity complex as the
mesh size for $X'$ is itself a very useful invariant.

To this end, we observe that the contiguity complex is also functorial
in the source: for a simplicial complex $Y$, $\CMap(-,Y)$ is a
contravariant functor to simplicial complexes.  Thus, we can consider
the direct system induced by successive subdivision of the target.  Let
$X_n$ be a sequence of successive subdivisions of $X$ and choose
compatible simplicial approximations $X_{n+1} \to X_n$ of the
homeomorphisms $|X_{n+1}| \cong |X_n|$.  

\begin{defn}
Let $X$ and $Y$ be simplicial complexes.  The $k$th persistent
subdivision homology from $X$ to $Y$ relative to subdivisions
$\{X_1,\ldots,X_\ell\}$ is given by the $k$th persistent homology of the
direct system  
\[
\CMap(X,Y) \to \CMap(X_1, Y) \to \ldots \to \CMap(X_{\ell}, Y).
\]
\end{defn}

Different choices of subdivisions $X_n$ and $X_n'$ will lead to
different persistent subdivsion invariants.  However,
Theorem~\ref{thmapprox} and its proof
suggest that these different choices lead to essentially the same
information when the range of mesh sizes is comparable.

We do not do a detailed example here, but note that we can begin to
understand the behavior of the zeroth persistent subdivision homology
from the computations behind the example in Section~\ref{sec:comp}.
In that case, the number of bars increases as $k$ increases, but at
each stage some existing bars merge, as maps in the same homotopy
class but different contiguity classes can enter the same contiguity
class.  This merging can be explicitly tracked using the algorithms we
described above.

Finally, observe that when studying a finite metric space $M$, for a
test space $X$ and subdivisions $X_n$ of $X$, we can assemble the
collection $\{\CMap(X_n, R_\epsilon(M))\}$ as $n$ and $\epsilon$
vary into a diagram suitable for application of multidimensional
persistence~\cite{CarlssonZomorodian}.  We suspect that this
persistent homological invariant is the appropriate mapping invariant
to study in this context.  A detailed study of these persistent
mapping invariants is the subject of future work.

\section{Conclusions}

The classical notions of maps of simplicial complexes and contiguity
of maps of simplicial complexes generalize to define a simplicial
complex of maps between simplicial complexes.  The classical
Simplicial Approximation Theorem says that maps of simplicial
complexes approximate continuous maps between geometric realizations
once subdivisions of small enough mesh are considered.  This theorem
generalizes to mapping spaces to show that the geometric realization
of the simplicial complex of maps provides a good approximation of the
space of maps when subdivisions of small enough mesh are considered.
The number of contiguity classes of maps from small target spaces can
be effectively computed and allows us to distinguish between data sets
that cannot be distinguished by homology alone.  The construction of
the simplicial complex of maps is functorial, and so extends by
naturality to the setting of persistence.

\bibliographystyle{plain}

\end{document}